\documentclass[letterpaper,11pt]{article}

\usepackage{subcaption}

\usepackage[margin=1in]{geometry}
\usepackage{amsmath, amsthm, amssymb, thmtools, mleftright}
\usepackage{xspace, paralist, enumitem, multirow, booktabs}
\usepackage[dvipsnames]{xcolor}
\usepackage{bm, bbm}
\usepackage[T1]{fontenc}
\usepackage{graphicx}
\usepackage{algorithm}
\usepackage[noend]{algpseudocode}
\usepackage{url}
\usepackage{pbox}
\usepackage{titletoc}
\usepackage{tikz}
\usepackage{framed}
\usepackage{aliascnt}
\usepackage{natbib}
\usepackage{wrapfig}
\usepackage[table]{xcolor} 

\usetikzlibrary{positioning, shapes, arrows.meta, calc}



\makeatletter
\def\th@plain{%
  \thm@notefont{}
  \itshape 
}
\def\th@definition{%
  \thm@notefont{}
  \normalfont 
}
\makeatother
\newtheorem{theorem}{Theorem}[section]
\newtheorem{lemma}[theorem]{Lemma}
\newtheorem{corollary}[theorem]{Corollary}

\newtheorem{claim}[theorem]{Claim}

\newtheorem{definition}[theorem]{Definition}

\newtheorem*{remark}{Remark}

\newlist{thmparts}{enumerate}{1}
\setlist[thmparts]{labelindent=\parindent,leftmargin=*,itemsep=2pt,font=\normalfont,label=(\thetheorem.\arabic*)}
\declaretheoremstyle[%
  spaceabove=-6pt,%
  spacebelow=6pt,%
  headfont=\normalfont\itshape,%
  postheadspace=1em,%
  qed=\qedsymbol%
]{mystyle} 

\RequirePackage{caption, float}
\captionsetup[figure]{font={small}}


\algrenewcomment[1]{\hfill \textcolor{blue}{$\triangleright$ #1}}
\algnewcommand{\LineComment}[1]{\State \textcolor{blue}{$\triangleright$ #1}}
\algdef{SE}[SUBALG]{Indent}{EndIndent}{}{\algorithmicend\ }%
\algtext*{Indent}
\algtext*{EndIndent}


\newcommand{\wO}{\widetilde{O}}

\newcommand{\EE}{\mathbb{E}}

\newcommand{\cA}{\mathcal{A}}

\newcommand{\cD}{\mathcal{D}}






\newcommand{\eps}{\varepsilon}

\makeatletter
\newcommand{\tpmod}[1]{{\@displayfalse\pmod{#1}}}
\makeatother


\makeatletter
\newcommand\squeezepar{\@startsection{paragraph}{4}{\z@}{1.5ex \@plus1ex \@minus.2ex}{-1em}{\normalfont\normalsize\bfseries}}
\makeatother
\addtolength{\abovedisplayskip}{-2pt}
\addtolength{\belowdisplayskip}{-2pt}



\usepackage{hyperref}
\hypersetup{
  colorlinks,
  citecolor=blue,
  linkcolor=magenta,
  urlcolor=blue}

\newaliascnt{algocf}{algorithm}
\aliascntresetthe{algocf}






\begin{document}

\title{Efficient Algorithms for Adversarially Robust Approximate Nearest Neighbor Search}

\date{\today}


\author{%
  Alexandr Andoni\footnotemark[1]
  \and
  Themistoklis Haris\footnotemark[2]
  \and
  Esty Kelman\footnotemark[3]
  \and
  Krzysztof Onak\footnotemark[4]
}

\footnotetext[1]{Department of Computer Science, Columbia University. Email: \texttt{andoni@cs.columbia.edu}}
\footnotetext[2]{Computer Science Department, Boston University. Email: \texttt{tharis@bu.edu}}
\footnotetext[3]{Department of Computer Science and Department of Computing \& Data Sciences, Boston University and CSAIL, Massachusetts Institute of Technology.  Supported in part by the National Science Foundation under Grant No. 2022446 and in part by NSF TRIPODS program (award DMS-2022448). Email: \texttt{ekelman@mit.edu}.}
\footnotetext[4]{Faculty of Computing and Data Sciences, Boston University. Email: \texttt{konak@bu.edu}}


\maketitle

\thispagestyle{empty}

\begin{abstract}
We study the Approximate Nearest Neighbor (ANN) problem under a powerful adaptive adversary that controls both the dataset and a sequence of $Q$ queries.

Primarily, for the high-dimensional regime of $d = \omega(\sqrt{Q})$, we introduce a sequence of algorithms with progressively stronger guarantees. We first establish a novel connection between adaptive security and \textit{fairness}, leveraging fair ANN search~\citep{aumuller2022sampling} to hide internal randomness from the adversary with information-theoretic guarantees. To achieve data-independent performance, we then reduce the search problem to a robust decision primitive, solved using a differentially private mechanism~\citep{hassidim2022adversarially} on a Locality-Sensitive Hashing (LSH) data structure. This approach, however, faces an inherent $\sqrt{n}$ query time barrier. To break the barrier, we propose a novel concentric-annuli LSH construction that synthesizes these fairness and differential privacy techniques. The analysis introduces a new method for robustly releasing timing information from the underlying algorithm instances and, as a corollary, also improves existing results for fair ANN.

In addition, for the low-dimensional regime $d = O(\sqrt{Q})$, we propose specialized algorithms that provide a strong ``for-all'' guarantee: correctness on \textit{every} possible query with high probability. We introduce novel metric covering constructions that simplify and improve prior approaches for ANN in Hamming and $\ell_p$ spaces.
\end{abstract}

\newpage

\tableofcontents

\newpage

\section{Introduction}
Randomness is a crucial tool in algorithm design, enabling resource-efficient solutions by circumventing the worst-case scenarios that plague deterministic approaches \citep{motwani1996randomized}. The classical analysis of such algorithms assumes an \textbf{oblivious} setting, where data updates and queries are fixed beforehand. However, this assumption breaks down in the face of an \textbf{adaptive adversary}, who can issue queries based on the algorithm's previous outputs. These outputs can leak information about the algorithm's internal randomness, allowing an adversary to construct query sequences that maliciously break the algorithm's performance guarantees \citep{hardt2013robust, gribelyuk2024strong}.

Significant progress has been made in designing adversarially robust algorithms for \textbf{estimation problems}, where the output is a single value \citep{lai2020adversarial, hassidim2022adversarially, chakrabarti2021adversarially, attias2024framework, ben2022adversarially, woodruff2022tight, cherapanamjeri2023robust}. A common defense involves sanitizing the output, for example, by rounding or adding noise, often borrowing techniques from differential privacy to ensure the output reveals little about the algorithm's internal state \citep{hassidim2022adversarially, attias2024framework, beimel2022dynamic}. However, these techniques do not directly apply to \textbf{search problems}. In a search problem, the algorithm must return a specific element from a given dataset. Outputting a raw data point can leak substantial information, and there is no obvious way to add noise or otherwise obscure the output without violating the problem's core constraint of returning a valid dataset element.

Perhaps the most fundamental search problem is \textit{Approximate Nearest Neighbor (ANN) Search}, which has numerous applications ranging from data compression and robotics to DNA sequencing, anomaly detection 
and Retrieval-Augmented Generation (RAG) \citep{santalucia1996improved, kalantidis2014locally, ichnowski2015fast, verstrepen2014unifying, tagami2017annexml, bergman2020deep,han2023hyperattention,  kitaev2020reformer}. Formally, we define the problem as follows:
\begin{definition}[The $(c,r)$-ANN problem]
Given a dataset $S$ of $n$ points in a metric space $(\mathcal{M},||\cdot||)$ and a radius $r > 0$, let $B_S(q,r) := \{p \in S:||p-q|| \leq r\}$. Given a query point $q \in \mathcal{M}$ and approximation parameter $c\geq 1$, the goal is to build a data structure which finds a point in $B_S(q,cr)$ if $B_S(q,r) \neq \emptyset$. If $B_S(q,cr) = \emptyset$, the algorithm is required to answer $\perp$. Apart from queries, the dataset $S$ itself may also be \textit{obliviously updated} via additions or deletions of points. 

A data structure for solving this problem is evaluated in terms of its space complexity, query runtime, and update runtime as functions of the dataset size $n$, the parameter $c$ and possibly the total number of queries.
\end{definition}

Achieving the desired trade-off of sublinear query time and near-linear space has largely been possible only through randomization. Indeed, one of the most prominent family of algorithms for ANN is based on \textit{Locality-Sensitive Hashing (LSH)}, which has been the subject of a long and fruitful line of research in the oblivious setting \citep{gionis1999similarity, jafari2021survey, andoni2009nearest, andoni18-icm, andoni2017lsh, andoni2016lower, andoni2017nearest, andoni2017optimal, indyk1998approximate, broder1998min, andoni2022learning}. ANN Algorithms that rely on LSH achieve query time complexity of $\widetilde{O}(dn^{\rho})$\footnote{We use the $\widetilde{O}$ notation to hide polylogarithmic factors.} and space complexity $\widetilde{O}(n^{1+\rho})$, where $d$ is the dimension of $\mathcal{M}$ and $\rho = \rho(c) \in (0,1)$ is a fixed constant depending on $c$ and the LSH construction\footnote{For example, when $\mathcal{M} = \{0,1\}^k$ under the Hamming distance and $c \geq 1$ is the approximation parameter, the state-of-the-art construction of \citet{andoni2015optimal} yields $\rho = \frac{1}{2c-1}$. We shall use $\rho$ and $\rho(c)$ interchangeably.}.

The vulnerability of these classical randomized structures was recently highlighted by \citet{kapralov2024adversarial}, who demonstrated an attack on standard LSH data structures. They showed that an adaptive adversary can use a polylogarithmic number of queries to learn enough about the internal hash functions to force the algorithm to fail. Inspired by their work, which relies on certain structural properties of the dataset (e.g., an ``isolated'' point), we consider a powerful adversarial model where the \textit{adversary chooses both the dataset and the sequence of queries}. We study the following question:
\begin{center}
    \textit{Can search problems like ANN be solved efficiently in the face of adversarial queries?}
\end{center}

Our adversarial model for the $(c,r)$-ANN problem empowers the adversary by giving them unbounded computational resources, the ability to specify the dataset $S$ completely, the ability to obliviously choose a sequence of updates to that dataset, and, perhaps most importantly, the ability to choose each query based on the full history of their interaction with the algorithm. While alternative adversarial models could be considered, our chosen framework maximizes generality to demonstrate that efficient data structures remain achievable even under broad constraints.

\subsection{Roadmap and Discussion}
We begin by placing our results in the broader context of adversarial robustness, which has primarily followed two paradigms. The first is \emph{robustification via privacy} \citep{hassidim2022adversarially}, where multiple independent copies of an algorithm are aggregated using a differentially private mechanism. Privacy obfuscates the internal randomness, preventing an adaptive adversary from predicting future behavior. The second is \emph{for-all} algorithms \citep{gilbert2007one}, which succeed on every input, but typically incur high runtime or space overhead due to discretization and union bounds, making them problem-specific.  

The ANN problem poses a further challenge: it is a \emph{search} problem rather than an estimation problem, so adding noise to outputs would destroy correctness. One natural way to resolve this is by reducing search to collections of \emph{decision} problems that admit privacy-based robustification. In ANN, partitioning the dataset and solving weak decision problems in each part allows robust recovery of a valid neighbor while retaining sublinear query time and space. We believe that the idea of reducing from search to robust decision may extend to other search problems.  

Our main conceptual contribution is identifying a third route to robustness: \emph{fairness}. Informally, fair algorithms avoid bias toward particular outputs and maintain statistical independence across repeated queries. Such algorithms are inherently robust, as they reveal little information about internal randomness. We formalize this intuition, showing that fairness can be viewed as a valuable tool towards establishing robustness. From this perspective, fairness and differential privacy can be seen as two instantiations of a common underlying principle:  
\begin{center}
    \textbf{Robustness follows from stability with respect to internal randomness.}
\end{center}

Finally, these approaches can be combined. By merging fairness-based robustness with privacy-based robustification and exploiting ANN’s geometric structure via a concentric annuli construction, we achieve improved tradeoffs for robust nearest neighbor search.

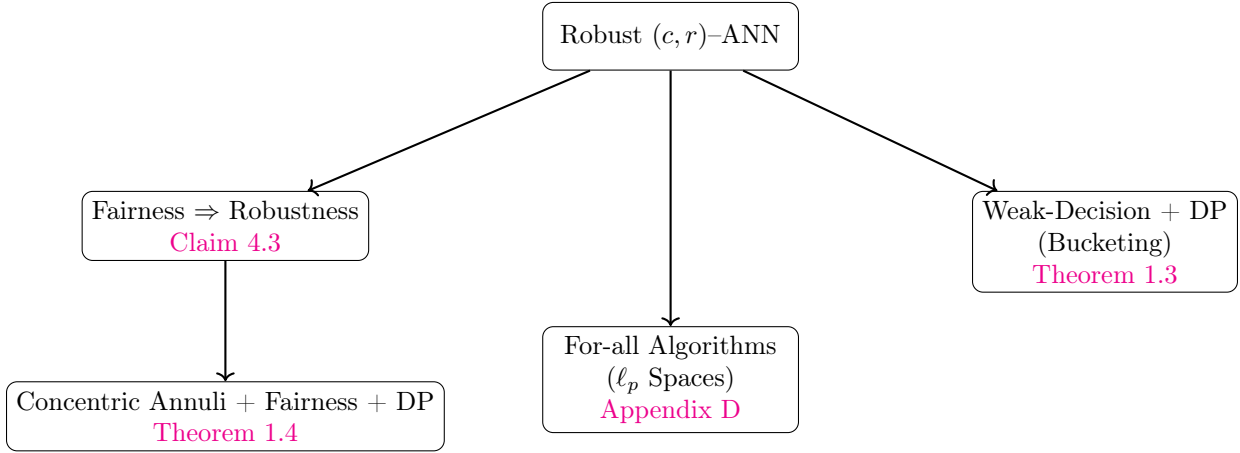
\begin{figure}[h]
\centering
\begin{tikzpicture}[
    node distance=1.6cm and 2.6cm,
    every node/.style={
        draw,
        rounded corners,
        align=center,
        minimum width=3.4cm,
        minimum height=0.9cm,
        font=\small,
        scale=1
    },
    arrow/.style={->, thick}
]

\node (root) {Robust $(c,r)$--ANN};

\node (fairclaim) [below left=of root, xshift=0.3cm] 
    {Fairness $\Rightarrow$ Robustness\\ \autoref{claim:fairness-robustness}};
\node (concthm) [below=of fairclaim] 
    {Concentric Annuli + Fairness + DP\\ \autoref{thm:concentric-ann-robust-best}};

\draw[arrow] (root) -- (fairclaim);
\draw[arrow] (fairclaim) -- (concthm);

\node (buckthm) [below right=of root, xshift=-0.3cm] 
    {Weak-Decision + DP\\(Bucketing)\\ \autoref{thm:robust-ann-bucketing}};

\draw[arrow] (root) -- (buckthm);

\node (forallthm) [below=3.4cm of root] 
    {For-all Algorithms \\($\ell_p$ Spaces)\\ \autoref{sec:for-all-main}};

\draw[arrow] (root) -- (forallthm);

\end{tikzpicture}
\caption{Roadmap of results.}
\end{figure}

\subsection{Our Results and Techniques}
In this section we present our technical insights in detail
\paragraph{Robustness and Fairness} 
We first recognize a connection between robustness and fairness. \textit{Fair} ANN algorithms output a point uniformly at random from a set of valid near neighbor candidates. Such algorithms have already been rigorously studied in the context of LSH by \citet{aumuller2022sampling}, who also studied notions of approximate fairness. We show that the robust ANN problem can be solved simply by invoking an algorithm for the exact fair ANN problem.
\begin{theorem}
\label{thm:adv-robustness-ann-fair}
Let $n(q,r) := |B_S(q,r)|$ be the $S$-density of the $r$-ball centered at $q \in \mathcal{M}$. There exists an adversarially robust $(c,r)$-ANN algorithm that uses $O(n^{1+\rho(c)}\log(nQ))$ bits of space and $O(d\cdot(n^{\rho(c)}+\frac{n(q,cr)}{n(q,r)+1})\log(nQ))$ time per query.
\end{theorem}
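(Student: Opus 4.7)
The plan is to combine the existing fair ANN data structure of \citet{aumuller2022sampling} with the fairness-implies-robustness principle (\autoref{claim:fairness-robustness}) as a black-box reduction. I would first invoke a single instance of their exact fair ANN structure: on query $q$, it either returns $\perp$ or a uniformly random element of $B_S(q, cr)$, with space $O(n^{1+\rho(c)})$ and expected query time $O(d(n^{\rho(c)} + n(q,cr)/(n(q,r)+1)))$. A single instance, however, succeeds with only constant probability, which is insufficient against an adversary issuing up to $Q$ adaptive queries.

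To amplify, I would maintain $\Theta(\log(nQ))$ independent instances of the fair structure and answer each query with the output of the first instance that does not return $\perp$. This multiplicative overhead accounts for the $\log(nQ)$ factors in the stated bounds; a Chernoff bound plus a union bound over the $Q$ queries shows that, with probability $1 - 1/\poly(nQ)$, every query is served by at least one instance. Crucially, conditional on this high-probability event, the returned point is still uniformly distributed over $B_S(q, cr)$: each instance's output conditioned on success is uniform, and the index of the first successful instance depends only on the success indicators, which are independent of the uniform samples, so the mixture remains uniform. I would then invoke \autoref{claim:fairness-robustness}: because the conditional distribution of each answer is a function solely of $(S, q)$ and not of the internal hash seeds, the adversary's transcript can be simulated without access to any internal randomness, so an induction on the round number upgrades per-query correctness to the full adaptive setting at no additional cost beyond the union bound already charged.

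The main subtlety I expect will be showing that the aggregation rule (``first non-$\perp$ instance'') does not inadvertently break fairness, since a careless aggregation could bias the output toward instances whose hash tables happen to make certain candidates cheaper to scan; this is resolved by the independence of each instance's success indicator from its conditional uniform output, so the first-success rule produces a uniform mixture of uniform draws and is therefore itself uniform. A related point is arguing the information-theoretic independence of the output from internal randomness \emph{across} multiple adaptive rounds: here fairness is precisely the right notion, since each response is distributionally equivalent to one produced from a freshly sampled independent structure, preventing the adversary from accumulating information about hash seeds across rounds. Once these two points are formalized, the remaining calculations reduce to the black-box parameters of the fair ANN construction combined with standard amplification.
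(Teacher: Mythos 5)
Your high-level plan—reduce to the fair ANN structure of \citet{aumuller2022sampling} and then invoke the fairness-implies-robustness argument to handle adaptivity by induction on rounds—matches the paper's. But the amplification step you propose is not how the paper handles the $\log(nQ)$ factor, and I do not think it can be made to work. The paper uses a \emph{single} fair ANN structure with $L = \Theta(n^\rho \log(nQ))$ hash tables (Theorem~\ref{thm:fair_lsh_prior}): the $\log(nQ)$ factor is spent inside the structure so that, for a fixed query, the candidate set $C(q)$ contains \emph{all} of $B_S(q,r)$ with probability $\ge 1 - 1/(nQ)$, and no aggregation across independent instances is needed.

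Your version maintains $\Theta(\log(nQ))$ independent instances each with space $O(n^{1+\rho})$ and returns the first instance that does not answer $\perp$. There are two problems. First, an instance with $L = \Theta(n^\rho)$ tables does \emph{not} have constant probability of capturing all of $B_S(q,r)$ in its candidate set: the bound is $n(1-p_1)^L$, which is vacuous without the $\log n$ factor inside $L$ needed for the union bound over up to $n$ near neighbors. Constant success probability of a bare LSH structure refers to finding \emph{some} near neighbor, but fairness requires the candidate set to contain \emph{all} of $B_S(q,r)$. Second, and more fundamentally, the failure mode of the fair structure is silent: if the candidate set is a strict nonempty subset of $B_S(q,r)$, the rejection-sampling procedure of \citet{aumuller2022sampling} still returns a point—uniform over $C(q)\cap B_S(q,r)$, which is a biased draw—and the instance has no way to know its candidate set is deficient, so it will not output $\perp$. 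Consequently your ``first non-$\perp$'' rule does not filter out the biased instances, and the resulting mixture is not uniform over $B_S(q,r)$. The nice independence argument you give (success indicators independent of the conditional uniform outputs) is valid only in a world where failure is detectable; here it is not, so the aggregation does not preserve exact fairness, which is precisely what the robustness induction in \autoref{claim:fairness-robustness} relies on.
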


Though very space efficient, the query complexity of this algorithm unfortunately depends on the density ratio $D$ of points between the $cr$-ball and the $r$-ball for a query $q$. The adversary can craft a dataset where this ratio is large, severely degrading performance. This drawback is also shared by the algorithm of \cite{fengdifferential}, though they exhibit a dependency on the density $s:=n(q,cr)$, which is strictly greater than $D$ (see \autoref{table:comparison}).

\begin{remark}
    The link between fairness and robustness is not limited to ANN. From this perspective, fairness is not just a ``nice to have'' property, but is inextricably linked with security. 
\end{remark}

\paragraph{Assumption-Free Searching via Bucketing} 
To mitigate data dependencies, we propose a meta-algorithm that reduces a search problem to a weak decision problem. In this problem, positive instances correspond to the existence of $r$-close neighbors to a query $q$, while negative instances showcase the absence of $cr$-close neighbors. Such a weak decision problem can be solved obliviously simply by using a classic LSH data structure $\mathcal{D}$. Unlike the search problem, an oblivious decider can be robustified by applying the well-known Differential Privacy (DP) obfuscation technique of \citep{hassidim2022adversarially}: we maintain $\sqrt{Q}$ copies of $\mathcal{D}$ and combine their responses in a private manner with respect to the random bits of each copy. 

To perform the search, we then partition $S \in \mathcal{M}^n$ into buckets of size roughly $\sqrt{n}$ and instantiate a copy of the robust weak decider in each bucket. We can use these copies to identify a bucket that contains a suitable point to output and then exhaustively search that bucket to produce the final answer:
\begin{theorem}
\label{thm:robust-ann-bucketing}
There exists an adversarially robust algorithm for the $(c,r)$-ANN problem, successfully answering up to $Q$ queries with probability at least $1-\Theta(\delta)$. The algorithm uses $\widetilde{O}(n^{1 + \rho/(2-\rho)}\sqrt{Q})$ space and $\widetilde{O}(dn^{1/(2-\rho)})$ time per-query, where $\rho = \rho(c) \in (0,1)$.
\end{theorem}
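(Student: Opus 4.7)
The plan is to reduce ANN search to many copies of a weak decision primitive, robustify each copy via differential privacy, and combine the outcomes through a data partition. Formally, the weak decision primitive on a subset $S'\subseteq S$ of size $b$ asks: given a query $q$, output $1$ if $B_{S'}(q,r)\neq\emptyset$ and $0$ if $B_{S'}(q,cr)=\emptyset$ (arbitrary behavior is permitted in between). A standard LSH data structure, e.g.\ that of \citet{andoni2015optimal}, solves this obliviously in time $\widetilde{O}(d b^{\rho})$ and space $\widetilde{O}(b^{1+\rho})$, and its success probability on any fixed input can be amplified to $1-\delta'$ by $O(\log(1/\delta'))$ independent repetitions.

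To make this decider robust against an adaptive adversary, I would maintain $T=\Theta(\sqrt{Q/\delta})$ independent copies per bucket and aggregate their binary outputs via the differentially private mechanism of \citet{hassidim2022adversarially}. Their analysis upgrades oblivious correctness (with constant error per query) into adaptive correctness for up to $Q$ queries, with total failure probability $O(\delta)$ per robust decider. I would then partition $S$ into $n/b$ disjoint buckets of size $b$ using an ordering fixed in advance of any queries (e.g.\ insertion order), and install one independent robust weak decider per bucket. To answer a query $q$, the algorithm queries every bucket's robust decider and exhaustively searches any bucket reporting $1$, halting as soon as a point in $B_S(q,cr)$ is returned.

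For correctness, a union bound over the $Q$ queries and the $n/b$ buckets yields that, with probability $1-\Theta(\delta)$, every robust decider answers correctly on every non-arbitrary input. Consequently, if $B_S(q,r)\neq\emptyset$ then the bucket containing an $r$-close point reports $1$ and exhaustive search finds a valid $cr$-close answer, while if $B_S(q,cr)=\emptyset$ no bucket reports $1$ and the algorithm correctly returns $\perp$. The total space is $\widetilde{O}((n/b)\cdot T\cdot b^{1+\rho})=\widetilde{O}(n b^\rho\sqrt{Q})$ and the per-query time is $\widetilde{O}((n/b)\cdot T\cdot db^\rho+db)$; setting $b=n^{1/(2-\rho)}$ balances the two terms (up to the $\sqrt{Q/\delta}$ factor absorbed into $\widetilde{O}$) and yields the claimed complexities. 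I expect the main obstacle to be controlling the exhaustive-search overhead on buckets that lie in the ``arbitrary'' regime: a bucket containing only $cr$-close but no $r$-close points may adversarially report $1$, and one must argue that each such report is resolved in $\widetilde{O}(db)$ time and that the adversary cannot force searches across many buckets before a valid neighbor is produced — the early-stopping rule, together with the fact that any $1$-report not caused by a decider failure witnesses a genuine $cr$-close point, is what makes this work.
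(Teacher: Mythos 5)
Your high-level plan — partition the dataset into buckets, install one differentially-private robust weak decider per bucket, and exhaustively scan any bucket reporting a hit — is the same as the paper's, and your balancing $b = n^{1/(2-\rho)}$ reproduces the intended exponent. The correctness argument (a genuine $1$-report either witnesses a $cr$-close point or is a decider failure covered by the union bound) is also sound.

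The gap is in the per-query time. You maintain $T = \Theta(\sqrt{Q/\delta})$ copies of the oblivious decider per bucket and, as written, query all of them, so your per-query time is $\widetilde{O}\bigl((n/b) \cdot T \cdot d b^{\rho} + d b\bigr)$. You remark that the $\sqrt{Q/\delta}$ factor is ``absorbed into $\widetilde{O}$,'' but $\widetilde{O}$ in this paper hides only polylogarithmic factors, and $\sqrt{Q}$ is polynomial in $Q$; as stated, your query time is $\widetilde{O}(\sqrt{Q}\cdot d\, n^{1/(2-\rho)})$, which is a $\sqrt{Q}$ factor worse than the theorem's $\widetilde{O}(d\, n^{1/(2-\rho)})$. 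The paper avoids this via \emph{privacy amplification by subsampling}: each robust decider still stores $L = \widetilde{\Theta}(\sqrt{Q})$ copies (paying the $\sqrt{Q}$ only in space), but on each query it samples only $k = \Theta(\log(Q/\delta))$ of them with replacement, forms the empirical fraction of $1$-answers on that subsample, and adds Laplace noise of scale $1/k$; the subsampling lemma then downgrades the per-round privacy loss enough that advanced composition over $Q$ adaptive rounds still yields constant $\varepsilon$. With this change the per-query cost of a robust decider is $\widetilde{O}(k \cdot d\, b^{\rho}) = \widetilde{O}(d\, b^{\rho})$, and the balance you computed goes through as intended. Without the subsampling step, the claimed query bound is not achieved.
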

\begin{table}[h!]
\centering
\renewcommand{\arraystretch}{1.5}
\label{tab:comparison}
\begin{tabular}{|l|c|c|c|}
\hline
\textbf{Metric} & Query Time & Space & Update Time \\
\hline
\autoref{thm:adv-robustness-ann-fair} (Fairness) & $\widetilde{O}(d\cdot({\color{black} D} +n^{\rho}))$ & \cellcolor{white}$\widetilde{O}(n^{1+\rho})$ & $\widetilde{O}(d\cdot({\color{black} D} +n^{\rho}))$\\
\hline
\autoref{thm:robust-ann-bucketing} (Bucketing) & $\widetilde{O}(d n^{{\color{black}\frac{1}{2-\rho}}})$ & $\widetilde{O}(\sqrt{Q} \cdot n^{{\color{black}\frac{2}{2-\rho}}})$ & $\widetilde{O}(dn^{\frac{\rho}{2-\rho}}\sqrt{Q})$ \\
\hline
\pbox{6cm}{\vspace{2mm}\autoref{thm:concentric-ann-robust-best} (Concentric Annuli)\\[1mm]
\centering{$\beta = \Theta(\frac{\log\log c}{\log c})$}\vspace{2mm}}
& \cellcolor{white}$\widetilde{O}(dn^{{\color{black}\beta}})$ & $O(\sqrt{Q} \cdot n^{{\color{black}1+\beta}})$ & $\widetilde{O}(dn^\beta\sqrt{Q})$ \\
\hline
\citep{fengdifferential}\footnotemark & $O(d\cdot{\color{BrickRed} s}\cdot n^{\rho})$ & $O(\sqrt{Q} \cdot {\color{BrickRed}s} \cdot n^{1+\rho})$ & $\widetilde{O}(dn^\rho\cdot {\color{BrickRed} s}\cdot \sqrt{Q})$\\
\hline
\end{tabular}
\caption{Algorithms for the $(c,r)$-ANN problem in LSH-equipped metric spaces.}
\label{table:comparison}
\vspace{-2mm}
\end{table}
\footnotetext{The work of \cite{fengdifferential} concurrently studies the robust ANN problem. We present a comparison of our results with their algorithm, as well as a more extended discussion of related work, in \autoref{sec:related-work-more}.}

\paragraph{Breaking the $\sqrt{n}$ Barrier via Concentric LSH Annuli}
The bucketing method yields a query time complexity that is always at least $O(\sqrt{n})$, which is not ideal considering that LSH methods can induce the exponent of $n$ to be arbitrarily close to $0$ by increasing the approximation parameter $c$. To go beyond this barrier, we introduce a \textit{concentric annuli construction}.

We partition the $(r, cr)$-annulus into several smaller, concentric sub-annuli and apply the fair ANN algorithm $\mathcal{A}_{\text{fair}}$ within each one. A simple counting argument guarantees that at least one of these sub-annuli must have a low point-density ratio, implying that $\mathcal{A}_{\text{fair}}$ within that annulus terminates within an acceptable runtime threshold with probability at least $0.9$. For each annulus that does not exceed this threshold, we obtain an estimate to the probability that the corresponding $\mathcal{A}_{\text{fair}}$ copy terminates quickly. We then pick a favorable annulus to run a held-out \textit{testing} copy of $\mathcal{A}_{\text{fair}}$ and output a point. To maintain robustness however, we must be careful to ensure that annulus selection is done in a way such that the adaptive adversary cannot overfit to our internal randomness. To do this we apply the DP robustification framework on the selection process of the annuli by estimating the probabilities each copy of $\mathcal{A}_{\text{fair}}$ terminates within a predetermined runtime threshold. Our algorithm is both assumption-free and enjoys a better runtime than $O(\sqrt{n})$.

\begin{theorem}
    \label{thm:concentric-ann-robust-best}
    There exists a robust algorithm for solving the $(c,r)$-ANN problem that that uses space $\widetilde{O}(\sqrt{Q}\cdot n^{1+\beta})$, where $\beta=\min_{k\in\mathbb{Z}_{\geq 1}}\max\{\rho(c^{1/k}), 1/k\}$. The query runtime is $\widetilde{O}(dn^{\beta})$ time with probability at least $1-\delta$.
\end{theorem}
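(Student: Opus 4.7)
The plan is to combine the fairness-based robustness of \autoref{thm:adv-robustness-ann-fair} with a radial decomposition of the $(r,cr)$-annulus and a differentially private timing mechanism, thereby decoupling the two sources of inefficiency in fair ANN: the $n^{\rho(c)}$ term and the density-ratio term $n(q,cr)/(n(q,r)+1)$. For an integer parameter $k$ to be optimized, I place $k+1$ nested balls of radii $r_i = c^{i/k}r$ for $i=0,\ldots,k$ and, within each sub-annulus $A_i = B(q,r_i)\setminus B(q,r_{i-1})$, run the fair ANN data structure $\mathcal{A}_{\text{fair}}$ of \autoref{thm:adv-robustness-ann-fair} with approximation parameter $c^{1/k}$ (the inner-to-outer radius ratio of $A_i$). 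For each annulus I maintain $\sqrt{Q}$ independent copies of $\mathcal{A}_{\text{fair}}$.

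A density pigeonhole then pinpoints a favorable annulus. Letting $n_i := |B_S(q,r_i)|$, the telescoping product $\prod_{i=1}^{k}(n_i+1)/(n_{i-1}+1) \leq n+1$ yields some $i^\star$ with $(n_{i^\star}+1)/(n_{i^\star-1}+1) \leq (n+1)^{1/k}$. By \autoref{thm:adv-robustness-ann-fair}, each copy of $\mathcal{A}_{\text{fair}}$ on $A_{i^\star}$ runs in expected time $\widetilde{O}\bigl(d(n^{\rho(c^{1/k})} + n^{1/k})\bigr) = \widetilde{O}(dn^{\max\{\rho(c^{1/k}),1/k\}})$, so by Markov at least a constant fraction of its $\sqrt{Q}$ copies terminate within a hard timeout $T$ set to this value. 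The task thus reduces to adaptively \emph{identifying} such an annulus.

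The central technical obstacle, and where the paper's ``robust release of timing'' technique comes in, is performing the selection without leaking the internal randomness of individual copies. For each annulus $i$ and each copy $j$, the indicator $X_{i,j} := \indic[\text{copy $j$ of $A_i$ terminates on $q$ within time $T$}]$ depends on the internal hash tables, so releasing the raw sum $\sum_j X_{i,j}$ is unsafe against an adaptive adversary. Instead, I apply the differential-privacy robustification of \cite{hassidim2022adversarially} that already powers \autoref{thm:robust-ann-bucketing}: for each query I release a DP estimate $\hat{p}_i$ of the empirical mean $\tfrac{1}{\sqrt{Q}}\sum_j X_{i,j}$, which has sensitivity $1/\sqrt{Q}$ per copy. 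Standard advanced composition then shows that $\widetilde{O}(\sqrt{Q})$ copies per annulus suffice to keep every $\hat{p}_i$ within $\pm 1/4$ of its true value across the whole $Q$-query interaction. I then pick any $i$ with $\hat{p}_i \geq 1/2$; by the density argument at least $i^\star$ qualifies.

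Given the selected annulus, I output the response of any one of its copies that terminated within time $T$. By fairness (\autoref{claim:fairness-robustness}), the output of a terminating copy is distributed uniformly over the valid $cr$-approximate neighbors in $A_i$ regardless of the adversary's history, so it leaks no additional information about the internal randomness beyond what the DP timing release already exposed. Summing resources across $k$ annuli and $\sqrt{Q}$ copies gives total space $\widetilde{O}(k\sqrt{Q}\,n^{1+\rho(c^{1/k})}) = \widetilde{O}(\sqrt{Q}\,n^{1+\beta})$ and query time $\widetilde{O}(dn^{\max\{\rho(c^{1/k}),1/k\}}) = \widetilde{O}(dn^{\beta})$ for the optimal $k$. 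The hard part will be the DP analysis of the timing indicators, in particular verifying that the composite procedure (DP selection followed by direct output from the selected annulus) still meets the $1-\delta$ correctness guarantee over adaptive queries --- this is precisely where the new method for robustly releasing timing information becomes essential.
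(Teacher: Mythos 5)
Your high-level plan is the same as the paper's: concentric annuli with $c^{1/k}$ ratio, pigeonhole on the density ratios to locate a favorable annulus, DP release of per-annulus timing statistics to select an annulus robustly, and fairness to protect the final output. However, there are two concrete gaps.

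First, and most seriously, your query time does not come out to $\widetilde{O}(dn^\beta)$. You propose releasing a DP estimate of $\frac{1}{\sqrt{Q}}\sum_j X_{i,j}$ for each annulus, which requires actually running each of the $\widetilde{\Theta}(\sqrt{Q})$ testing copies up to the timeout $T = \widetilde{O}(dn^\beta)$ on every query. That gives per-query time $\widetilde{O}(k\sqrt{Q}\,dn^\beta)$, a $\sqrt{Q}$ factor worse than claimed. The paper avoids this by maintaining $\widetilde{\Theta}(\sqrt{Q})$ copies but on each query \emph{subsampling} only $s = O(\text{polylog}(Q/\delta))$ of them, invoking privacy amplification by subsampling (\autoref{lem:privacy_ampl_subsampling}) to show the subsampled Laplace release still composes to $(\varepsilon,\delta)$-DP over $Q$ rounds. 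Your proposal needs this subsampling step (or an equivalent mechanism) to achieve the stated query time.

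Second, you propose outputting the answer of one of the \emph{testing} copies that happened to terminate within time $T$. The paper instead maintains a separate held-out \emph{execution} instance $\mathcal{A}_{\text{fair}}^{(i)}$ per annulus, whose internal randomness never enters the timing statistics, and runs $\mathcal{A}_{\text{fair}}^{(i^*)}$ fresh once $i^*$ is chosen. This separation is what lets the paper cleanly split the robustness argument: the testing copies are protected by DP (only an aggregate timing signal is released), and the execution copies are protected by \autoref{claim:fairness-robustness} (their output is exactly uniform, independent of their setup randomness, given the transcript-measurable index $i^*$). Outputting from a copy you selected \emph{because} it terminated fast conditions on an event that is a function of that copy's setup randomness, so it is no longer covered by the DP analysis of the timing release, and the fairness guarantee alone does not obviously imply the joint release (aggregate timing plus one selected copy's answer) is safe. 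You would need an additional argument here; the held-out copy sidesteps it entirely. Finally, the paper also applies a median-amplification step at the end to boost the per-query runtime guarantee from constant probability to $1-\delta$, which your sketch omits.
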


For many metric spaces, the value of $\beta$ resolves nicely. For the hypercube $\{0,1\}^d$ under the Hamming distance we have $\rho(c) = \frac{1}{2c-1}$, which yields $\beta = \Theta(\frac{\log\log c}{\log c}) \to 0$ as $c\to\infty$, which is not the case with the exponent $\frac{1}{2-\rho(c)}$ of \autoref{thm:robust-ann-bucketing}. 
As a corollary, this technique also allows us to achieve purely sublinear time for a class of ``relaxed'' fair ANN problems.

\paragraph{For-all Algorithms}
For low-dimensional metric spaces, we develop algorithms for ANN that provide a powerful \textit{for-all guarantee}: with high probability, the data structure correctly answers \textit{every possible} query $q \in \mathcal{M}$. Our approach builds on a discretization technique applied to an LSH data structure, a paradigm explored in prior work \citep{cherapanamjeri2020adaptive, cherapanamjeri2024terminal, bateni2024efficient}. We refine this line of research by introducing a novel, simpler metric covering construction, improving the space complexity by a logarithmic factor, and using sampling to improve the time complexity by a factor of $d$. We present our result for the Hamming hypercube below, including results for $\ell_p$ spaces in \autoref{sec:for-all-main}.

\begin{theorem}
For the $(c,r)$-ANN problem in the $d$-dimensional Hamming hypercube $\{0,1\}^d$, there exists an algorithm that correctly answers all possible queries with at least $0.99$ probability. The space complexity is $\widetilde{O}(d\cdot n^{1+\rho+o(1)})$ and query time is $\widetilde{O}(d \cdot n^\rho)$, where $\rho = \frac{1}{2c-1}$.
\end{theorem}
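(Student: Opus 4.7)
My plan is to start from a standard LSH data structure for $\{0,1\}^d$ with $\rho = 1/(2c-1)$ and promote its per-query success into a for-all guarantee via a discretization of the query space. Three ingredients are needed: (i) running LSH at a slightly tightened approximation $c/(1+\eta)$ for some $\eta = o(1)$, with success probability boosted to $1-1/\poly(|\mathcal{N}|)$; (ii) a net $\mathcal{N} \subseteq \{0,1\}^d$ such that every query has a net point within Hamming distance $\eta r$; and (iii) a sampling-based distance estimator to shave an extra factor of $d$ from the candidate-screening step.

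For (ii) I would use a simple coordinate-block covering: partition the $d$ coordinates into blocks of width $w \approx \eta r$, and let $\mathcal{N}$ be the set of strings that are constant on each block, so $\log|\mathcal{N}| \leq d/w$. This logarithm is absorbed into the $n^{o(1)}$ slack in the regime where the LSH machinery is meaningful; moreover, one only materializes those net points that are actually reached by rounding some query landing in a nonempty LSH bucket, keeping the true storage at $\widetilde{O}(dn^{1+\rho+o(1)})$ rather than a wasteful $|\mathcal{N}|\cdot n^\rho$. The correctness transfer is a one-line triangle inequality: if $q$ has an $r$-near neighbor $p\in S$, then $\|p-q'\|_H \leq (1+\eta)r$, so the LSH structure queried at $q'$ returns some $p'$ with $\|p'-q'\|_H \leq cr$, hence $\|p'-q\|_H \leq (c+\eta)r \leq (1+o(1))cr$. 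A union bound over $|\mathcal{N}|$ then suffices if we repeat each LSH table $O(\log|\mathcal{N}|)$ times, which folds into the $n^{o(1)}$ factor.

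For (iii), the naive screening step costs $O(d)$ per inspected candidate because we compute full Hamming distances. Replacing this by a random sample of $O(\log n)$ coordinates gives a $(1\pm 1/c)$-approximate distance by a Chernoff bound, enough to cheaply reject non-$cr$-close candidates. The only remaining $d$-factor in $\widetilde{O}(dn^\rho)$ is paid once, when we hash the query and verify the final reported point. Combining (i)--(iii) and conditioning on the good event over $\mathcal{N}$ yields the claimed space and query-time bounds with success probability at least $0.99$.

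The main obstacle, in my view, is producing a covering $\mathcal{N}$ that is simultaneously (a) small enough that $\log|\mathcal{N}|$ lives in the $n^{o(1)}$ slack, (b) lazily computable, so that on a given query $q$ we can locate its $q'$ without preallocating all of $\mathcal{N}$, and (c) LSH-compatible, in the sense that the LSH bucket of $q'$ and that of $q$ contain essentially the same candidate set with high probability. Formulating and proving the corresponding rounding-invariance lemma is where most of the technical effort concentrates; the LSH analysis and the sampled-distance filter are each essentially standard arguments layered on top.
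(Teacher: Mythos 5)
Your proposal diverges substantially from the paper's route, and two of its ingredients do not work as stated.

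First, the paper does not use a net for the Hamming case at all — it exploits the discreteness of $\{0,1\}^d$ and union-bounds directly over all $2^d$ possible queries, which costs only an extra factor of $d$ in the number of hash tables $L$. Your coordinate-block covering is not a valid substitute: if each block has width $w$ and net points are constant on each block, then rounding a query $q$ to its nearest net point costs up to $\lfloor w/2\rfloor$ flips per block over $d/w$ blocks, for a total Hamming distance of roughly $d/2$, not $\eta r$. So $\mathcal{N}$ is not an $\eta r$-covering unless $r = \Theta(d)$. Equally, $\log|\mathcal{N}| \approx d/w$ is not smaller than $d$ when $w\approx\eta r$ is small, so the net buys you nothing even in the cases where it is well-defined. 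For the continuous $\ell_p$ setting the paper does use a net (a uniform grid on a bounded box, then a data-dependent refinement), but that technique is genuinely continuous and cannot be grafted onto the Hamming cube, where coordinates are already maximally coarse.

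Second, your coordinate-sampling distance filter does not give a $(1\pm 1/c)$-multiplicative estimate with only $O(\log n)$ sampled coordinates. Chernoff gives a multiplicative $(1\pm\epsilon)$ estimate of the fraction $\mu=\|p-q\|_H/d$ only when the sample size is $\Omega(\log(1/\delta)/(\epsilon^2\mu))$; for $\mu = r/d$ this is $\Omega(c^2 (d/r)\log(1/\delta))$, which is $O(\log n)$ only in the regime $r=\Omega(d)$. The paper's way to remove the extra $d$ factor from query time is different and works unconditionally: after boosting $L$ to $\Theta(d\, n^\rho\log n)$, the expected number of buckets containing a near point is $\Omega(d\log n)$, so one can \emph{sample} $\Theta(n^\rho\log n)$ of the $L$ buckets and find a near point with high probability, instead of sampling coordinates of candidates. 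You should replace your coordinate-sampling filter with this bucket-sampling argument, and you should drop the net entirely and union-bound over $2^d$.
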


\begin{remark}[The Price of For-All Algorithms]
Despite their remarkable guarantees, \textit{for-all} algorithms have significant drawbacks. Their space complexity scales by a factor of $d$, making them intractable for high-dimensional metric spaces. This is a direct consequence of the large number of hash functions required to ensure a tiny probability of error for any query. Furthermore, these algorithms lack the generality of their adaptive counterparts; they are metric-space dependent and must be tailored to the specific metric space being used.
\end{remark}

\section{Related and Concurrent Work}
\label{sec:related-work-more}
The challenge of designing algorithms robust to adversarial queries is well-studied, particularly in privacy and statistics \citep{bassily2015more, smith2017information, bassily2016algorithmic}, where Differential Privacy is a central tool for ensuring robustness \citep{dwork2015generalization, dinur2023differential}. The question of adversarial robustness was formally introduced to streaming algorithms by Ben-Eliezer et al.~\citep{ben2022framework}, motivated by attacks on linear sketches \citep{hardt2013robust}, and has since inspired a long line of work on robustifying various streaming algorithms \citep{hassidim2022adversarially, chakrabarti2021adversarially, lai2020adversarial, chakrabarti2024finding, stoeckl2023streaming, woodruff2022tight, ben2022adversarially}.

Our work is most directly inspired by the framework of Hassidim et al.~\citep{hassidim2022adversarially}, who used Differential Privacy to solve estimation problems robustly, and by Cherapanamjeri et al.~\citep{cherapanamjeri2023robust}, who applied this framework with low query time overhead. While we adapt a similar approach, their methods are fundamentally limited to estimation and don't extend to search problems like NNS, where the output must be a specific dataset element. The difficulty of robust search is further highlighted by Beimel et al.~\citep{beimel2022dynamic}, who established lower bounds showing that robust algorithms for certain search problems are inherently slower than their oblivious counterparts, motivating our investigation.

Different works further reinforce the unique challenges of robust search. Work on robust graph coloring, for example, also requires techniques beyond simple noise addition due to its discrete output space \citep{chakrabarti2021adversarially, BehnezhadRW25}. Our approach is also distinct from Las Vegas LSH constructions \citep{pham2016scalability, wei2022optimal}. While these methods guarantee no false negatives, they remain vulnerable to adversaries who can inflate their expected runtime \citep{kapralov2024adversarial}. Our focus, in contrast, is on robustifying traditional Monte Carlo algorithms.

Finally, our approach builds on the use of discretization and net-based arguments to achieve 'for-all' guarantees for ANN. This technique was previously used for robust distance estimation \citep{cherapanamjeri2020adaptive}, for ANN in conjunction with partition trees \citep{cherapanamjeri2024terminal} and for efficient centroid-linkage clustering \citep{bateni2024efficient}. We contribute a simpler and more streamlined construction that offers a modest performance improvement over this prior work.

\subsection{Comparison with \texorpdfstring{\citet{fengdifferential}}{Feng et al (2025)}}
Our work was developed concurrently and independently with \citet{fengdifferential}. Our approaches, assumptions, and performance guarantees differ significantly.

\begin{description}
    \item[Methodology] \citet{fengdifferential} propose a method tightly coupled to the structure of DP noise via a reduction to the private selection problem. In contrast, our ``search-to-decision'' and fairness frameworks are more general, treating the DP component as a black-box primitive.

    \item[Assumptions] Their algorithm's complexity depends on a near-neighbor density bound $s$, where $|B_S(q,cr)| \leq s$. We present the first algorithms whose query runtimes are independent of the input dataset, making them robust to worst-case data distributions.
    
    \item[Performance] Their query time scales \textit{multiplicatively} with the number of points in the annulus, $|B_S(q,cr)|$, while our algorithms are either purely sublinear or their query time depends \textit{additively} only on the density ratio $D = \frac{|B_S(q,cr)|}{|B_S(q,r)|}$. Crucially, this dependency on $D$ does not affect our space complexity, which still grows by an additional factor of $\sqrt{Q}$.
\end{description}

\section{Preliminaries}
In this section we present necessary definitions of the problems we address and the computational models we consider. For a comprehensive notation table, see \autoref{table:notation}. For background on differential privacy theorems and concepts, see \autoref{appx:dp-review}.
\paragraph{The Adversarial Robustness Model} An algorithm is adversarially robust if it correctly answers a sequence of adaptively chosen queries with high probability. This is formalized \citep{ben2022framework} through the following interactive game:
\begin{definition}
\label{def:adversarial-robustness}
Consider the following game $\mathcal{G}$ between \textbf{Algorithm} ($\mathcal{A}$) and \textbf{Adversary} ($\mathcal{B}$):
\begin{enumerate}
    \item \textbf{Setup Phase:} The adversary chooses a dataset $S$. The algorithm $\mathcal{A}$ then uses its private internal randomness $R_{\text{setup}} \in \{0,1\}^*$ to preprocess $S$ and build a data structure $D$. The adversary may know the code for $\mathcal{A}$ but not the specific instance of $R_{\text{setup}}$.

    \item \textbf{Query Phase:} The game proceeds for $Q$ rounds. In each round $i \in [Q]$:
    \begin{itemize}
        \item The adversary adaptively chooses a query $q_i$. This choice can depend on the dataset $S$ and the history of all previous queries and their corresponding answers, $(q_1, a_1), \ldots, (q_{i-1}, a_{i-1})$.
        \item The algorithm $\mathcal{A}$ uses its data structure $D$ and potentially new private randomness $R_i \in \{0,1\}^*$ to compute and return an answer $a_i$.
    \end{itemize}

    \item \textbf{Winning Condition:} The algorithm fails if there exists at least one round $i \in [Q]$ for which the answer $a_i$ is an incorrect response to the query $q_i$.
\end{enumerate}

We say that an algorithm $\mathcal{A}$ is $\delta$-adversarially robust if for any dataset and any strategy the adversary can employ, the probability that the algorithm fails is at most $\delta$. The probability is taken over the algorithm's entire internal randomness ($R_{\text{setup}}, R_1, \ldots, R_Q$).
\end{definition}

\begin{remark}[Oblivious Updates]
    Our framework also supports oblivious updates to the dataset. The adversary selects a series of update timesteps in advance in the form of additions or deletions of points. These updates are interleaved with the query phases during the game. We briefly discuss how efficiently updates can be implemented in our algorithms; we are not concerned with robustness as our algorithms works with respect to arbitrarily chosen datasets $S$.
\end{remark}
\paragraph{Approximate Nearest Neighbor Search and LSH}
In the \textit{Nearest Neighbors} problem, we seek to find a point in our input dataset that minimizes the distance to some query point.

\begin{definition}[ANN]
Let $c > 1$ and $r > 0$ be positive constants. In the $(c,r)$--\textbf{Approximate Nearest-Neighbors Problem (ANN)} we are given as input a set $S \subset \mathcal{M}$ with $|S| = n$ and a sequence of queries $\{q_i\}_{i=1}^Q$ with $q_i \in \mathcal{M}$. For each query $q_i$, if there exists $p \in B_S(q_i,r)$, we are required to output some point $p' \in B_S(q_i,cr)$. If $B_S(q_i,cr) = \emptyset$, we are required to output $\perp$. In the case where $B_S(q_i,r)=\emptyset \neq B_S(q_i,cr)$ we can either output a point from $B_S(q_i,cr)$ or $\perp$. Our algorithm should successfully satisfy these requirements with probability at least $2/3$.
\end{definition}
A prevalent method for solving ANN is Locality Sensitive Hashing (LSH):
\begin{definition}[Locality Sensitive Hashing, \citep{HIM12}]
A hash family $\mathcal{H}$ of functions mapping $\mathcal{M}$ to a set of buckets is called a $(c,r,p_1,p_2)$--\textbf{Locality Sensitive Hash Family (LSH)} if the following two conditions are satisfied:
\begin{itemize}
    \item If $x,y \in \mathcal{M}$ have $||x-y|| \leq r$, then $\Pr_{h\in \mathcal{H}}[h(x)=h(y)] \geq p_1$.
    \item If $x,y \in \mathcal{M}$ have $||x-y|| \geq cr$, then $\Pr_{h \in \mathcal{H}}[h(x) = h(y)] \leq p_2$.
\end{itemize}
where $p_1 \gg p_2$ are parameters in $(0,1)$. We often assume that computing $h$ in a $d$--dimensional metric space requires $O(d)$ time. We assume that the LSH constructions we consider are \textbf{monotone}, which means that $\Pr[h(x)=h(y)]$ monotonically decreases as $||x-y||$ increases.
\end{definition}
Given a construction of a $(c,r,p_1,p_2)$--LSH for a metric space, we can solve the $(c,r)$--ANN problem by amplifying the LSH guarantees. This is done via an \textit{``OR of ANDs''} construction: we sample $L := p_1^{-1} = n^\rho$ hash functions $h_1,...,h_L$ for $\rho(c) \in (0,1)$ by concatenating the outputs of $k= \lceil \log_{1/p_2} n\rceil$ ``prototypical'' LSH functions in $\mathcal{H}$, as shown in \citep{indyk1998approximate}.
\begin{theorem}
\label{thm:lsh_guarantees}
If a $d$--dimensional metric space admits a $(c,r,p_1,p_2)$--LSH family, then we can solve the $(c,r)$--ANN problem on it using $O(n^{1+\rho})$ space and $O(d n^\rho)$ time per query, where $\rho = \frac{\log (1/p_1)}{\log (1/p_2)}$.
\end{theorem}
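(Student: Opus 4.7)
\textbf{Proof plan for Theorem \ref{thm:lsh_guarantees}.} The plan is to follow the classical Indyk--Motwani amplification. Fix $k = \lceil \log_{1/p_2} n \rceil$ and $L = \lceil n^{\rho} \rceil$ where $\rho = \log(1/p_1)/\log(1/p_2)$. Sample $L$ independent hash functions $g_1, \ldots, g_L$, each defined as the concatenation $g_j(x) = (h_{j,1}(x), \ldots, h_{j,k}(x))$ of $k$ independent draws from the $(c,r,p_1,p_2)$--LSH family $\mathcal{H}$. The preprocessing builds $L$ hash tables, the $j$-th of which stores every $p \in S$ at bucket $g_j(p)$. Since each point appears in $L$ tables and each bucket stores only a pointer, the total space is $O(nL) = O(n^{1+\rho})$, as claimed (the description of the $g_j$'s requires an additional $O(dkL) = \widetilde{O}(dn^\rho)$ bits, which can be absorbed into the stated bounds up to logarithmic factors, or into $d \ge \log n$).

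The query procedure on input $q$ computes $g_1(q), \ldots, g_L(q)$, then scans the points in the corresponding buckets, returning the first point found within distance $cr$ of $q$, and returning $\perp$ if no such point is found after inspecting a bounded number of candidates. The key is to enforce an early-termination rule: abort and return $\perp$ once the total number of candidates examined across all $L$ tables exceeds $3L$. Computing the $L$ hash keys costs $O(dkL) = O(d n^\rho \log_{1/p_2} n)$ time, and scanning at most $3L$ candidates (each distance check in $O(d)$) costs $O(dL) = O(dn^\rho)$ time, giving the claimed query bound.

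For correctness, I would bound two failure events separately and then union bound. Let $p^\star \in B_S(q,r)$ (if one exists). The collision probability of $q$ with $p^\star$ under a single $g_j$ is at least $p_1^k$, and by the choice of $k$, $p_1^k \ge p_1^{\log_{1/p_2} n} = n^{-\rho}$. Thus the probability that no $g_j$ collides $q$ with $p^\star$ is at most $(1 - n^{-\rho})^L \le 1/e$. For the second event, let $X$ be the total number of collisions between $q$ and points of $S$ at distance greater than $cr$, summed across all $L$ tables. Each such far point collides with $q$ in a given table with probability at most $p_2^k \le 1/n$, so $\EE[X] \le L \cdot n \cdot (1/n) = L$. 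By Markov's inequality, $\Pr[X > 3L] \le 1/3$. Hence with probability at least $1 - 1/e - 1/3 > 0$ constant, the algorithm both encounters $p^\star$ in one of the buckets and terminates before the cutoff, returning a valid point in $B_S(q, cr)$.

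The only subtle step is calibrating the constants so that the success probability reaches $2/3$ as required by the definition of ANN; this I would handle by slightly inflating $L$ by a constant factor (or by running $O(1)$ independent copies in parallel and taking any valid answer), which does not change the asymptotic bounds. I expect no real obstacle here, since all of the ingredients (union bound, Markov, geometric sum for the collision probability) are standard; the main bookkeeping challenge is just verifying the factor of $n^{-\rho}$ that emerges from $p_1^k$ with the ceiling in the definition of $k$, which only inflates constants.
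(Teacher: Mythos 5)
Your proof is correct and follows essentially the same route the paper intends; the paper does not spell out a proof of this theorem and instead cites \citet{indyk1998approximate} after sketching the same ``OR of ANDs'' amplification you use (with $k = \lceil \log_{1/p_2} n\rceil$ and $L \approx n^\rho$ tables, Markov for the far points, and a geometric-series argument for the near point).

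Two small bookkeeping notes, both of which you essentially anticipate in your final paragraph. First, the inequality $p_1^k \ge p_1^{\log_{1/p_2} n}$ is written in the wrong direction: since $p_1 < 1$ and $k = \lceil \log_{1/p_2} n \rceil \ge \log_{1/p_2} n$, one actually has $p_1^k \le p_1^{\log_{1/p_2} n} = n^{-\rho}$, but also $p_1^k \ge p_1 \cdot n^{-\rho}$, so the collision probability is $\Theta(n^{-\rho})$ with the constant $p_1$; your remedy of inflating $L$ by a constant (or $O(1/p_1)$) is exactly right. Second, the $O(dkL) = O(d\, n^\rho \log_{1/p_2} n)$ cost of evaluating the hash keys is a $\log n$ factor above the stated $O(dn^\rho)$; this logarithmic slack is standard in LSH statements (and the rest of the paper consistently uses $\widetilde{O}$ or absorbs such factors), so it is not a defect of your argument but a known looseness in the theorem statement itself.
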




\section{Fairness Implies Robustness}
We first establish a connection between robustness and fairness. We refine the definition of fairness in ANN given by \citet{aumuller2022sampling} to enable the proofs that follow.
\begin{definition}[Exact Fair $(c,r)$-ANN]
A data structure solves the \textbf{Exact Fair $(c,r)$ ANN problem} if, conditioned on all answers
returned so far being correct, the following holds for every round $i$ and every
transcript $T_{i-1}$:
\begin{enumerate}
    \item If $B_S(q_i,r)\neq\emptyset$, then the conditional distribution $\mathcal{L}$ is uniform:
    \[
    \mathcal{L}(a_i \mid T_{i-1}, q_i)
    = \textup{Unif}(B_S(q_i,r)).
    \]
    Otherwise the algorithm outputs $\perp$.
    \item The conditional distribution $\mathcal{L}(a_i \mid T_{i-1}, q_i)$ does not
    depend on the setup randomness $R_{\text{setup}}$.
\end{enumerate}
\end{definition}

We first show that the exact fair ANN algorithm of \citet{aumuller2022sampling} fits our definition. They use a standard $(r,cr,p_1,p_2)$-LSH family with
$\rho=\log(1/p_1)/\log(1/p_2)$.
The data structure consists of
$L=\Theta(n^\rho\log(nQ))$ independent hash tables, yielding
$\widetilde{O}(n^{1+\rho})$ space. Given a query $q$, the algorithm collects the candidate set $C(q)$ from all
tables and applies the exact neighborhood sampling procedure of
\citet[Section~3.5]{aumuller2022sampling}, which uses rejection sampling and fresh per-query randomness.

\begin{theorem}
\label{thm:fair_lsh_prior}
There exists a randomized $(c,r)$-ANN data structure using $O(n^{1+\rho}\log(nQ))$ space with the following properties.
For any fixed query $q$ independent of $R_\text{setup}$, with probability at least $1-\frac{1}{nQ}$ over $R_{\text{setup}}$, the candidate set $C(q)$ contains all points of $B_S(q,r)$. Conditioned on this event, the algorithm returns an element of $B_S(q,r)$ that is exactly uniformly distributed and whose distribution is independent of $R_\text{setup}$. Moreover, the expected query time for $q$ is bounded:
\[
O\!\left(d\cdot\Big(n^\rho + \frac{n(q,cr)}{n(q,r)+1}\Big)\cdot\log(nQ)\right).
\]
\end{theorem}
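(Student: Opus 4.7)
\begin{prf}
The plan is to directly instantiate the construction of \citet{aumuller2022sampling} with the standard Indyk--Motwani amplification tuned so that the per-query failure probability is small enough to survive a union bound over $nQ$. Concretely, I would set $k = \lceil \log_{1/p_2} n \rceil$ so that a single concatenated hash $g$ satisfies $\Pr[g(x)=g(y)]\le p_2^k = 1/n$ for any pair at distance at least $cr$, and $\Pr[g(x)=g(y)]\ge p_1^k = n^{-\rho}$ for any pair at distance at most $r$. I would then take $L = \Theta(n^\rho \log(nQ))$ independent tables. The space bound $O(n^{1+\rho}\log(nQ))$ is immediate since each of the $n$ points occupies one bucket in each of the $L$ tables.

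For the coverage guarantee, fix any $p \in B_S(q,r)$ with $q$ independent of $R_{\text{setup}}$. The probability that $p$ fails to collide with $q$ in all $L$ tables is at most $(1-p_1^k)^L \le \exp(-L\cdot n^{-\rho})$, which is $\le 1/(n^2 Q)$ for an appropriate choice of the constant hidden in $L$. A union bound over the at most $n$ points of $B_S(q,r)$ then yields the stated $1-1/(nQ)$ coverage probability over $R_{\text{setup}}$. This establishes the first conclusion of the theorem.

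Conditioned on the coverage event $\cE_q$ that $B_S(q,r)\subseteq C(q)$, I would invoke the neighborhood sampling subroutine of \citet[Section 3.5]{aumuller2022sampling} verbatim. Because that subroutine draws a uniformly random bucket-occurrence among $C(q)$, checks membership in $B_S(q,r)$, and performs rejection sampling using degrees (number of tables each point appears in) with \emph{fresh} per-query randomness $R_i$, its output distribution conditioned on $C(q)$ and $\cE_q$ is exactly uniform on $B_S(q,r)$ and is a function of $(C(q), R_i)$ alone. Since $\cE_q$ only asserts containment and the procedure's distribution depends on the setup only through $C(q)$ restricted to $B_S(q,r)$, which equals $B_S(q,r)$ under $\cE_q$, the output distribution is independent of $R_{\text{setup}}$. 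This gives the second and third conclusions.

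For the expected runtime, I would split the work into two parts. First, computing the $L$ concatenated hashes of $q$ and retrieving the $L$ candidate buckets costs $O(dkL) = O(dn^\rho \log(nQ))$ after absorbing the $\log n$ factor into the polylogarithm. Second, the rejection sampling loop terminates in expectation after a number of trials proportional to the inverse acceptance probability, which by the analysis of \citet{aumuller2022sampling} is $O(n(q,cr)/(n(q,r)+1))$ when $B_S(q,r)$ is non-empty, each trial costing $O(d)$ time for a distance check. Summing the two contributions yields the stated bound. The main technical obstacle, which I would handle by carefully citing the corresponding lemmas of \citet{aumuller2022sampling}, is verifying that their rejection-sampling analysis remains intact when the fresh randomness $R_i$ is drawn per query and that the expected runtime bound holds unconditionally (not merely conditioned on $\cE_q$); the latter follows because the trial-count random variable is independent of $R_{\text{setup}}$ once $C(q)$ is fixed and the expected number of trials is bounded by a deterministic function of the local geometry around $q$.
\end{prf}
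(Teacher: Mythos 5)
Your proposal is correct and follows essentially the same route as the paper: amplify the LSH so the per-point miss probability is $1/(n^2Q)$, union bound over the at most $n$ near points to get coverage with probability $1-1/(nQ)$, and then invoke the exact neighborhood sampling procedure of \citet[Section 3.5]{aumuller2022sampling} with fresh per-query randomness to get an exactly uniform output independent of $R_{\text{setup}}$. If anything you are slightly more careful than the paper's write-up in two places: you write the collision probability of a concatenated hash as $p_1^k$ rather than the paper's abbreviated $p_1$, and you explicitly observe that the output distribution depends on $R_{\text{setup}}$ only through $C(q)\cap B_S(q,r)$, which is exactly $B_S(q,r)$ on the coverage event, whereas the paper states the independence without spelling out this step.
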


\begin{proof}
Fix any query $q$ that is independent of $R_{\textup{setup}}$.
For any $p\in B_S(q,r)$, the probability that $p$ fails to collide with $q$
in all $L$ tables is at most $(1-p_1)^L$.
By a union bound over the at most $n$ data points, our choice of $L$ yields
\[
\Pr\limits_{R_\text{setup}}[\exists p\in B_S(q,r)\notin C(q)] \le n(1-p_1)^L \le \frac{1}{nQ}.
\]
Let $\mathcal{E}_q$ denote the complementary event that
$B_S(q,r)\subseteq C(q)$ and condition on $\mathcal{E}_q$.

Conditioned on $\mathcal{E}_q$, the candidate set $C(q)$ contains the entire
$r$-neighborhood of $q$.
The query algorithm then applies the exact neighborhood sampling procedure
of \citet[Section~3.5]{aumuller2022sampling} to the candidate set $C(q)$.
By the correctness of that procedure, the returned point is distributed
\emph{exactly uniformly} over $B_S(q,r)$.

Moreover, the random choices that determine the output (rejection sampling
coins and random swaps) are generated freshly at query time and are independent
of both the preprocessing randomness $R_{\textup{setup}}$ and any prior
interaction transcript.
Since the candidate set $C(q)$ itself depends only on $q$ and
$R_{\textup{setup}}$, it follows that conditioned on $\mathcal{E}_q$ and
for any transcript $T$ the conditional distribution $\mathcal{L}(a \mid T, q)$ is exactly uniform over $B_S(q,r)$ and does not depend on $R_{\textup{setup}}$.
\end{proof}

Let $\mathcal{A}_{\text{fair}}$ be the Exact Fair ANN algorithm given by \autoref{thm:fair_lsh_prior}.
We now show that fairness implies adversarial robustness. Intuitively, conditioning on success up to round $i-1$ preserves independence between the
adversary’s next query and the preprocessing randomness, allowing us to apply the per-query LSH guarantee at every round.

\begin{claim}[Fairness Implies Robustness]
\label{claim:fairness-robustness}
The algorithm $\mathcal{A}_{\text{fair}}$ is $\tfrac{1}{n}$-adversarially robust for $Q$ adaptive queries.
\end{claim}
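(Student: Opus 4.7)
The plan is to reduce the adaptive guarantee to the per-query correctness bound of \autoref{thm:fair_lsh_prior} by an inductive decoupling argument that exploits the fairness property. Let $W_i$ denote the event that round $i$ returns a correct answer, and write $\mathcal{E}_i := W_1 \cap \cdots \cap W_{i-1}$. First I would observe that whenever the event $\mathcal{E}_{q_i}$ of \autoref{thm:fair_lsh_prior} holds (i.e.\ $C(q_i) \supseteq B_S(q_i, r)$), the fair sampler returns either $\perp$ when $B_S(q_i, r) = \emptyset$, or a uniform element of $B_S(q_i, r) \subseteq B_S(q_i, cr)$; in either case $W_i$ is implied. So it suffices to upper bound $\Pr[\overline{\mathcal{E}_{q_i}}]$ at each round.

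The heart of the argument is the inductive invariant that, conditioned on $\mathcal{E}_i$, the transcript $T_{i-1}$ is independent of $R_{\text{setup}}$. The base case $i=1$ is vacuous. For the inductive step, $q_i$ is a randomized function of $T_{i-1}$ that uses only the adversary's private randomness, so the independence transfers to the pair $(T_{i-1}, q_i)$. Conditioning further on $W_i$, property~(2) of exact fairness (as established in \autoref{thm:fair_lsh_prior}) states that $\mathcal{L}(a_i \mid T_{i-1}, q_i)$ is uniform on $B_S(q_i, r)$ and does not depend on $R_{\text{setup}}$. Therefore the augmented transcript $T_i = (T_{i-1}, q_i, a_i)$ remains independent of $R_{\text{setup}}$ conditional on $\mathcal{E}_{i+1} = \mathcal{E}_i \cap W_i$, closing the induction.

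With the invariant in hand, I would apply the per-query bound of \autoref{thm:fair_lsh_prior} at each round. Since $q_i$ is independent of $R_{\text{setup}}$ conditional on $\mathcal{E}_i$, integrating the bound $\Pr_{R_{\text{setup}}}[\overline{\mathcal{E}_q}] \le 1/(nQ)$ (valid for any fixed $q$ independent of $R_{\text{setup}}$) over the conditional law of $q_i$ yields
\[
\Pr[\overline{\mathcal{E}_{q_i}} \mid \mathcal{E}_i] \le \frac{1}{nQ},
\]
and hence $\Pr[\overline{W_i} \mid \mathcal{E}_i] \le 1/(nQ)$. A chain-rule union bound over the $Q$ rounds then gives $\Pr\bigl[\bigcup_{i=1}^Q \overline{W_i}\bigr] \le Q \cdot \frac{1}{nQ} = \frac{1}{n}$, which is the desired adversarial-robustness guarantee.

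The main obstacle is the measure-theoretic subtlety in the inductive step: conditioning on $W_i$, an event that itself depends on $R_{\text{setup}}$, could \emph{a priori} reintroduce dependence between the transcript and the preprocessing randomness. The resolution is precisely property~(2) of exact fairness, which asserts that the conditional law of $a_i$ given $(T_{i-1}, q_i)$ is \emph{functionally}---not merely distributionally---independent of $R_{\text{setup}}$; this strong notion of independence is exactly what is needed to ensure that conditioning on the correctness event $W_i$ does not leak information about the setup randomness to the adversary, and so the induction propagates cleanly from round to round.
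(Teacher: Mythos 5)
Your overall architecture — an inductive invariant that the transcript is independent of $R_{\text{setup}}$ conditioned on prior success, a per-round application of the bound from \autoref{thm:fair_lsh_prior}, and a sequential (chain-rule) union bound — is the same as the paper's. However, there is a genuine gap in the choice of conditioning event, and it is exactly the subtlety you flag in your last paragraph.

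You define $W_i$ as the event that round $i$ returns a \emph{correct answer} and condition on $\mathcal{E}_i = W_1 \cap \cdots \cap W_{i-1}$ throughout. The inductive step then invokes ``property (2) of exact fairness (as established in \autoref{thm:fair_lsh_prior})'' to assert that, after conditioning on $W_i$, the law $\mathcal{L}(a_i \mid T_{i-1}, q_i)$ does not depend on $R_{\text{setup}}$. But \autoref{thm:fair_lsh_prior} does not establish this. It establishes uniformity and independence from $R_{\text{setup}}$ only \emph{conditioned on the event $\mathcal{E}_{q_i}$ that the candidate set $C(q_i)$ contains all of $B_S(q_i,r)$}, which is a strictly stronger event than $W_i$. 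When $\mathcal{E}_{q_i}$ fails but $W_i$ still holds — e.g., $C(q_i)$ misses some of $B_S(q_i,r)$ yet the rejection sampler still returns a valid near neighbor — the output distribution is supported on $C(q_i)\cap B_S(q_i,r)$ and so can depend on $R_{\text{setup}}$. Consequently, conditioning on $W_i$ can reintroduce a correlation between $T_i$ and $R_{\text{setup}}$, and your inductive invariant does not propagate. The appeal to the ``Exact Fair $(c,r)$-ANN'' definition (which is phrased as conditioning on correctness) does not help, because the paper never proves that $\mathcal{A}_{\text{fair}}$ satisfies that definition in full; it proves only the $\mathcal{E}_q$-conditional version.

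The fix coincides with what the paper actually does: let $F_i$ be the candidate-set-incompleteness event $\overline{\mathcal{E}_{q_i}}$, condition on $S_{i-1} = \bigcap_{j<i}\neg F_j$ rather than on $\mathcal{E}_i$, and run your identical induction. Conditioned on $\neg F_{i}$, \autoref{thm:fair_lsh_prior} directly gives that $\mathcal{L}(a_{i}\mid T_{i-1},q_{i})$ is uniform on $B_S(q_{i},r)$ and independent of $R_{\text{setup}}$, so the invariant propagates. Since $\neg F_i$ implies $W_i$, bounding $\Pr[\bigcup_i F_i] \le \sum_i \Pr[F_i\mid S_{i-1}] \le \frac{1}{n}$ also bounds the probability of any incorrect answer. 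Everything else in your argument — the treatment of a (possibly randomized) adversary via its private coins, integrating the per-query bound over the conditional law of $q_i$, and the sequential union bound — is sound.
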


\begin{proof}
Without loss of generality, we can consider the adversary as deterministic.
Let $F_i$ denote the event that the algorithm fails (i.e., the candidate set does
not contain the full $r$-ball) at round $i$, and let
$S_i = \bigcap_{j=1}^i \neg F_j$ denote the event of success up to round $i$.

We first prove by induction on $i$ that, conditioned on $S_{i-1}$, the transcript of interactions
$T_{i-1}=(q_1,a_1,\dots,q_{i-1},a_{i-1})$ is independent of the setup randomness
$R_{\textup{setup}}$. For $i=1$, the transcript is empty and the claim is immediate.
Assume the claim holds for $i-1$.
Conditioned on $S_{i-1}$, exact fairness implies that the distribution of
$a_{i-1}$ given $(T_{i-2},q_{i-1})$ does not depend on $R_{\textup{setup}}$.
Hence extending the transcript from $T_{i-2}$ to $T_{i-1}$ preserves independence
from $R_{\textup{setup}}$. Since the adversary is deterministic, the next query
$q_i = \mathcal B(T_{i-1})$ is a function of $T_{i-1}$ and therefore is also
independent of $R_{\textup{setup}}$ conditioned on $S_{i-1}$.

We can now apply the per-query guarantee of
Theorem~\ref{thm:fair_lsh_prior} to $q_i$, yielding
\[
\Pr[F_i \mid S_{i-1}] \le \frac{1}{nQ}.
\]
Finally, by the union bound for sequential events,
\[
\Pr\!\left[\bigcup_{i=1}^Q F_i\right]
\le \sum_{i=1}^Q \Pr[F_i \mid S_{i-1}]
\le \frac{1}{n}.
\]
\end{proof}

\paragraph{Updates to the dataset $S$}
The fair ANN data structure also supports oblivious dynamic updates.
An insertion is handled by hashing the new point into each of the $L$ LSH tables
using the fixed preprocessing hash functions.
A deletion is handled by removing the point from the corresponding buckets,
which can be implemented either by maintaining bucket pointers or by querying
the data structure with the point itself and deleting it from the resulting
candidate lists. Each update touches the same set of hash tables and incurs the same asymptotic
cost as a query. Consequently, the expected update time matches the expected query time up to
constant factors.

Both operations depend only on the preprocessing randomness and do not use
transcript-dependent randomness.
Moreover, the exact neighborhood sampling procedure is unchanged and continues
to use fresh per-query randomness.
Therefore, exact fairness and the robustness analysis remain valid under
any sequence of oblivious updates.



\section{Assumption-Free Robust Searching via Bucketing}
A major limitation of the fair algorithm is that it only works efficiently when the dataset does not induce a high density ratio, which is not guaranteed if $S$ is picked by the adversary. Ideally, we aim to obtain sublinear algorithms that work without any assumptions on the input dataset. To do this, we introduce a search-to-decision framework:

\subsection{Weak Decision ANN}
\begin{definition}[\textsc{Weak-Decision-ANN}]
Consider the metric space $\mathcal{M}$ and let $S \subseteq U$ with $|S| = n$ be an input point dataset. Let $r > 0$, $c > 1$ be two parameters and $q \in \mathcal{M}$ be an adaptively chosen query. If $B_S(q,r) \neq \emptyset$, then we should answer \textbf{1}. If $B_S(q,cr) = \emptyset$, we must answer \textbf{0}. In any other case, any answer is acceptable. Let $D(S,q,c,r) \subseteq \{0,1\}$ denote the set of correct answers to this weak decision problem for dataset $S$, parameters $c,r$ and query $q$.
\end{definition}
Let $\mathcal{A}$ be an algorithm for solving the weak decision ANN problem, though not necessarily robustly. We can design an adversarially robust decider $\mathcal{A}_{\text{dec}}$ by using $\mathcal{A}$, while only increasing the space by a factor of $\sqrt{Q}$. Adhering to the framework of \citet{hassidim2022adversarially}, we maintain $L= \widetilde{\Theta}(\sqrt{Q})$  copies of the data structures $\cD_1,...,\cD_L$ generated by $\mathcal{A}$ using $L$ independent random strings, and then for each query $q$ we combine the answers of $\mathcal{A}$ privately. As opposed to the original framework of \citep{hassidim2022adversarially}, we do not need to use a private median algorithm, which simplifies the analysis. To keep the query time small, we utilize privacy amplification by subsampling $k \ll L$ copies per query (\autoref{lem:privacy_ampl_subsampling}). Our algorithm appears below as \autoref{alg:robust-decider}.

\begin{algorithm}[h!]
\caption{The robust decider $\mathcal{A}_{\text{dec}}$ (based on an oblivious decider $\mathcal{A}$)}
\label{alg:robust-decider}
\begin{algorithmic}[1]
    \State \textbf{Inputs:} Random string $R = \sigma_1\circ \sigma_2\circ\cdots \sigma_L$.
    \State \textbf{Parameters}: Number of queries $Q$, number of copies $L$, number of sampled indices $k$.
    \vspace{1mm}
    \State Receive input dataset $S \subseteq U$ from the adversary, where $n = |S|$.
    \vspace{0.5mm}
    \State Initialize $\cD_1,...,\cD_L$ where $\cD_i \gets \mathcal{A}(S)$ on random string $\sigma_i$.
    \vspace{1mm}
    \For{$i = 1$ to $Q$}
        \State Receive query $q_i$ from the adversary.
        \State $J_i\gets$ Sample $k$ indices in $[L]$ with replacement.
        \vspace{1mm}
        \State Let $a_{ij} \gets \mathcal{D}_j(q_i) \in \{0,1\}$ and $N_i := \frac{1}{k}\left|\{j \in J_i\mid a_{ij} = 1\}\right|$.\label{line: robust decider: set statistic}
        \vspace{0.75mm}
        \State Let $\widehat{N}_i = N_i + \text{Lap}\left(\frac{1}{k}\right)$.\label{line: robust decider: Laplace mechanism}
        \vspace{0.75mm}
        \State Output $\mathbbm{1}[\widehat{N}_i > \frac{1}{2}]$
    \EndFor
\end{algorithmic}
\end{algorithm}

To analyze this algorithm, we argue that for all $i \in [Q]$, at least $\frac{8}{10}$ of the $k$ answers $a_{ij}$ are correct, even in the presence of adversarially generated queries. To do this, we first need to show that the algorithm is $(\varepsilon,\delta)$-differentially private with respect to the input random strings $R$ which are used to specify the chosen LSH functions, where $\delta$ is the desired failure probability and $\varepsilon$ is an appropriately picked constant. As a result, if we set $L = \Theta(\varepsilon^{-1}\log^{1.5}(1/\delta)\cdot \sqrt{2Q})$ and $k = \Theta(\log(Q/\delta))$ we obtain a robust decider that succeeds with probability at least $1-\Theta(\delta)$. Our analysis (\autoref{thm:robust-decider-thm}) is included in full in \autoref{appendix:decider-proof}. It uses technical tools from Differential Privacy to adapt the robustification argument of \cite{ben2022framework} for our setting. The main difference between their analysis and ours is that we can avoid using a black-box private median algorithm; the Laplace mechanism suffices for our needs.

\subsection{Bucketing-Based Search}
To perform the final search, we partition our point dataset $S$ into $n^{1-\alpha}$ segments, for $\alpha \in (0,1)$. We then instantiate a copy $\cA_i \equiv \cA_{\text{dec}}$ of $\cA_{\text{dec}}$ in each segment. When a query comes in, we forward it to each $\cA_i$ and if some segment answers \textbf{1}, we perform an exhaustive search in the segment to find a point in $B_S(q,cr)$.

\begin{algorithm}[h]
\caption{Robust ANN via Weak Decisions and Bucketing}
\label{alg:bucketization-lsh}
\begin{algorithmic}[1]
    \State \textbf{Parameters}: Error probability $\delta > 0$, number of queries $Q$
    \State Partition point set $S$ arbitrarily into $\kappa=n^{1-\alpha}$ segments $L_1,...,L_\kappa$ of size $n/\kappa$.
    \State Initialize $\kappa$ independent copies $\cA_1,...,\cA_\kappa$ of $\cA_{\text{dec}}$, each with $\delta' = \delta / \kappa$
    \vspace{0.5mm}
    \For{$i = 1$ to $Q$}
        \State Receive query $q_i$ from the adversary. 
        \For{$j = 1$ to $\kappa$ where $A_j(q_i) = 1$}
            \For{$p \in L_j$}
                \If{$p \in B_S(q,cr)$}
                    \State \textbf{Output} $p$ and proceed to the next query.
                \EndIf
            \EndFor
        \EndFor
        \State Output $\perp$ and proceed to the next query.
    \EndFor
\end{algorithmic}
\end{algorithm}

\begin{lemma}
Algorithm \ref{alg:bucketization-lsh} is a $\delta$-adversarially robust algorithm for the ANN problem.
\end{lemma}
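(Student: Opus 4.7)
The plan is to decompose the overall failure probability across the individual deciders $\mathcal{A}_1,\dots,\mathcal{A}_\kappa$, argue that each one is adversarially robust in the relevant sense, and then verify correctness of the final output conditional on all deciders succeeding. Concretely, each $\mathcal{A}_j$ is a $(\delta/\kappa)$-adversarially robust weak decider for its bucket $L_j$ by the guarantee proved for $\mathcal{A}_{\text{dec}}$ (Theorem~\ref{thm:robust-decider-thm} in the appendix), since Algorithm~\ref{alg:bucketization-lsh} instantiates each copy with $\delta'=\delta/\kappa$. A union bound over the $\kappa = n^{1-\alpha}$ buckets then gives that, with probability at least $1-\delta$, every $\mathcal{A}_j$ answers all $Q$ of its weak-decision queries correctly throughout the entire game; I will condition on this event.

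One subtlety worth addressing is that the guarantee for $\mathcal{A}_{\text{dec}}$ is stated against an adversary that directly observes its binary outputs, whereas here the adversary only observes the aggregated final answer (a point from some $L_j$ or $\perp$). I would handle this by a standard simulation argument: any adversary $\mathcal{B}$ interacting with the combined search algorithm can be lifted to an adaptive adversary against $\mathcal{A}_j$ alone that internally simulates the remaining $\kappa-1$ deciders and the exhaustive search step using its own fresh, independent randomness. Because these simulated components are independent of $\mathcal{A}_j$'s private random string, the lifted adversary is a valid adaptive adversary for $\mathcal{A}_j$, and its robustness guarantee transfers to the combined setting.

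Conditioned on every decider being correct on every query, I would verify correctness on a query $q_i$ by case analysis on the geometry at $q_i$:
\begin{itemize}
    \item If $B_S(q_i,r)\neq\emptyset$, pick any $p\in B_S(q_i,r)$ and let $j^\star$ be the bucket with $p\in L_{j^\star}$. Then $B_{L_{j^\star}}(q_i,r)\neq\emptyset$, forcing $\mathcal{A}_{j^\star}(q_i)=1$, so the algorithm exhaustively scans $L_{j^\star}$ and returns a valid point of $B_S(q_i,cr)$ (the point $p$ itself suffices since $c\geq 1$).
    \item If $B_S(q_i,cr)=\emptyset$, then $B_{L_j}(q_i,cr)=\emptyset$ for all $j$, so every $\mathcal{A}_j(q_i)=0$, no exhaustive scan is triggered, and the algorithm outputs $\perp$ as required.
    \item If $B_S(q_i,r)=\emptyset$ but $B_S(q_i,cr)\neq\emptyset$, any output is admissible by the ANN specification, so the algorithm is trivially correct.
\end{itemize}

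The main obstacle is the simulation argument in the second paragraph: one must carefully check that lifting the combined-output adversary to a per-decider adversary really does preserve the independence between $\mathcal{A}_j$'s randomness and the queries presented to it, so that the adaptive robustness guarantee of $\mathcal{A}_{\text{dec}}$ applies verbatim. Once this is set up, the union bound and case analysis are routine.
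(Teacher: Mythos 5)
Your proof takes essentially the same route as the paper's: union-bound over the $\kappa$ buckets using the $(\delta/\kappa)$-robustness of each $\mathcal{A}_{\text{dec}}$ copy, then verify correctness of the final output conditioned on all deciders succeeding. The paper compresses the "examine it in isolation" step into a single sentence appealing to independence of the copies, whereas you spell out the simulation/lifting argument (the rest of the system plus the adversary forms a valid adaptive adversary for $\mathcal{A}_j$ because its randomness is independent of $\mathcal{A}_j$'s) and add an explicit three-case geometric check; both additions are correct and make the argument more careful than the paper's terse version, but they are elaborations rather than a different approach.
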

\begin{proof}
Suppose the adversary has won. The algorithm can only make a mistake when all the data structures reply with \textbf{0}, even though there is a point $p \in B_S(q,r)$. Consider the segment $L_i$ for which $p \in L_i$, and examine it in isolation. Because all the copies of Algorithm $\cA_{\text{dec}}$ are initialized independently from each other, the adversary successfully induces $\cA_i$ to make a mistake, which by assumption happens with probability at most $\kappa\delta' = \delta$ over all the segments via union bound.
\end{proof}

Now we proceed to the proof of \autoref{thm:robust-ann-bucketing}:
\begin{proof}[Proof of \autoref{thm:robust-ann-bucketing}]
We create $n^{1-\alpha}$ segments, each containing $n^\alpha$ points. Recall that a single copy of Algorithm $\cA_{\text{dec}}$ takes $\widetilde{O}(n^{1+\rho}\sqrt{Q})$ pre-processing time and space, and $\widetilde{O}(n^\rho)$ time and space per query. Each copy $\cA_i$ runs on $n^\alpha$ points, so for pre-processing, our algorithm uses 
\begin{align*}
\wO\left(n^{1-\alpha} (n^\alpha)^{1+\rho}\sqrt{Q}\right)
=\wO\left(n^{1+\alpha\rho}\sqrt{Q}\right)
\end{align*}
bits of space for creating $n^{1-\alpha}$ copies $\cA_1,...,\cA_L$. On the other hand, to process a single query the algorithm uses
\begin{align*}
\wO\left(n^{1-\alpha}\cdot (n^\alpha)^\rho + n^\alpha\right) 
=\wO\left(n^{1-\alpha+\alpha\rho} + n^\alpha\right)
\end{align*}
time. To balance the summands in the query complexity term, we set 
\begin{align*}
n^{1-\alpha+\alpha\rho} = n^\alpha \implies \alpha = \frac{1}{2-\rho}
\end{align*}
This proves \autoref{thm:robust-ann-bucketing} and concludes the analysis of Algorithm \ref{alg:bucketization-lsh}. 
\end{proof}

The biggest advantage of our algorithm is that it does not make any assumptions on the input dataset. However, it achieves sublinear query time as $\frac{1}{2-\rho} < 1$ when $\rho < 1$. Furthermore, the space complexity of our algorithm for small values of $Q$ is superior to the space complexity of even the oblivious ANN algorithm that has space complexity $n^{1+\rho}$. 

Finally, updates are easily implemented in our framework as well, with insertions and deletions being performed in every LSH copy we maintain. We build $L = \widetilde{O}(\sqrt{Q})$ LSH data structures, each of which has $n^{\rho/(2-\rho)}$ hash functions to update, which yields the final update time complexity. We also need to update our buckets, but we only need to update one bucket at a time as the partitioning over the dataset can be arbitrary. For more efficient performance, in practice we could perform updates lazily, only modifying the data structures we use just-in-time.

\section{Relaxed Fair ANN via Concentric LSH Annuli}
\label{sec:fairness}
As a warm-up, we first present an algorithmic improvement to \autoref{thm:fair_lsh_prior} for fair ANN, removing the dependency on the ratio $\frac{n(q,cr)}{n(q,r)}$ which could grow as big as $n$ in the query time. We achieve purely sublinear time for a relaxed fairness guarantee:
\begin{definition}[Relaxed Fairness in ANN]
\label{def:relaxed-fairness}
Let $S$ be the input dataset and $q \in \mathcal{M}$ be a query point. If $B_S(q,r) \neq \emptyset$, the algorithm aims to output some point chosen uniformly at random, independently of past queries, from $B_S(q,r')$, where $r' \in [r,cr]$ is a random variable depending on $q$ and $S$. Otherwise, if $B_S(q,r) = \emptyset$, the algorithm can either answer $\perp$ or output a uniformly random point from $B_S(q,r')$ with $r' \in (r,cr]$. 
\end{definition}

Consider the following sequence of radii between $r$ and $cr$, interspersed so that the ratio between two consecutive ones is constant: $r_0 = r,r_1,...,r_{k-1},r_k=cr$ are defined as $r_i = c'\cdot r_{i-1}$ for $i \in \{1,...,k\}$, where $c' = \sqrt[k]{c}$. We create $k$ instances of $\mathcal{A}_{\text{fair}}$, where the $i$-th instance is initialized with parameters $(c',r_k)$. We run each instance to output a point uniformly from $B_S(q,r_i)$. Letting $\rho(c')$ be the LSH-$\rho$ constant depending on the new approximating parameter $c'$, if we observe an instance running for longer than $100\cdot dn^{\max\{\rho(c'),1/k\}}$ timesteps, we stop the execution and switch to the next instance.
\begin{claim}
\label{claim:telescoping}
Consider a query $q$ and suppose $B_S(q,r) \neq \emptyset $. There exists $i \in \{0,...,k-1\}$ such that:
\begin{align}
    \frac{n(q,r_{i+1})}{n(q,r_i)} \leq n^{\frac{1}{k}}
\end{align}
\end{claim}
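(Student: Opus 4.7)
\begin{prf}[Proof proposal for \autoref{claim:telescoping}]
The plan is a standard telescoping pigeonhole argument. First I would observe that the radii are nested ($r = r_0 < r_1 < \cdots < r_k = cr$), so the balls $B_S(q, r_i)$ are nested as well and the counts $n(q, r_i)$ are monotone non-decreasing in $i$. In particular each of the $k$ ratios $n(q, r_{i+1}) / n(q, r_i)$ is well-defined (the denominator is at least $1$, since by assumption $B_S(q,r) \neq \emptyset$ and hence $n(q, r_0) \geq 1$) and is at least $1$.

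Next I would form the telescoping product
\[
\prod_{i=0}^{k-1} \frac{n(q, r_{i+1})}{n(q, r_i)}
= \frac{n(q, r_k)}{n(q, r_0)}
= \frac{n(q, cr)}{n(q, r)}
\leq \frac{n}{1} = n,
\]
using $n(q, cr) \leq |S| = n$ in the last step.

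Finally, since the product of $k$ numbers, each at least $1$, is at most $n$, a pigeonhole (equivalently, an AM--GM) argument shows that at least one of the $k$ factors must be at most the geometric mean of the bound, namely $n^{1/k}$. Picking that index gives the desired $i \in \{0, \ldots, k-1\}$.

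There is essentially no obstacle here, as the entire argument is a one-line telescoping followed by pigeonhole; the only thing to verify carefully is that $n(q, r_0) \geq 1$, which is exactly the hypothesis $B_S(q, r) \neq \emptyset$, ensuring that all the ratios involved are finite and that the bound on the product is valid.
\end{prf}
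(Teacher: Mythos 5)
Your proof is correct and matches the paper's argument: both rely on the telescoping product $\prod_{i=0}^{k-1} n(q,r_{i+1})/n(q,r_i) = n(q,cr)/n(q,r) \leq n$ and a pigeonhole step to extract one small factor. The paper phrases the pigeonhole step as a proof by contradiction (assume all ratios exceed $n^{1/k}$ and derive a product $> n$), while you state it directly via the geometric mean, but the two are logically the same.
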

\begin{proof}
Since $n(q,r) \geq 1$ it also holds that $n(q,r_i) \geq 1$ for all $i \in \{0,...,k\}$. Suppose that for all $i \in \{0,...,k-1\}$ it holds that $\frac{n(q,r_{i+1})}{n(q,r_i)} > n^{\frac{1}{k}}$. Then, via a telescoping product we arrive at a contradiction:
\begin{align}
\frac{n(q,cr)}{n(q,r)} &= \frac{n(q,r_1)}{n(q,r_0)}\cdot \frac{n(q,r_2)}{n(q,r_1)}\cdots\frac{n(q,r_{k-1})}{n(q,r_{k-2})}\cdot\frac{n(q,cr)}{n(q,r_{k-1})}>\left(n^{\frac{1}{k-1}}\right)^{k-1} = n\qedhere
\end{align}
\end{proof}
\autoref{claim:telescoping} shows that if $B_S(q,r) \neq \emptyset$ we output a uniformly sampled point from some $B_S(q,r_i)$, where $r_i$ is a random variable $R$ depending on $S,q$ and our algorithm's randomness. On the other hand, if $B_S(q,r) = \emptyset$, we either output $\perp$ if all the copies $\mathcal{D}_i$ time-out, or a uniformly sampled point from some sphere $B_S(q,r_i)$. In either case, we enjoy the relaxed fairness guarantee of \autoref{def:relaxed-fairness}. For the runtime, our algorithm takes space $O(kn^{1+\rho(c')})$ for pre-processing, and time $\widetilde{O}(dk\cdot \max\{n^{\rho(c')},n^{1/k}\})$ for answering each query. Choosing $k$ that minimizes this exponent we get the following theorem:
\begin{theorem}
\label{thm:fairness-concentric}
There exists an algorithm for solving the relaxed fair $(c,r)$-ANN problem that uses $\widetilde{O}(dn^{\beta})$ time per query and $\widetilde{O}(n^{1+\rho(c^{1/\bar{k}})})$ space for pre-processing, where $\bar{k}=\arg\min\limits_{k\in \mathbb{Z}}\max\{\rho(c^{1/k}), 1/k\}$ and $\beta = \max\{\rho(c^{1/\bar{k}}),1/\bar{k}\}$.
\end{theorem}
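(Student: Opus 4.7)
The plan is to instantiate $k$ independent copies of the exact fair ANN data structure from Theorem~\ref{thm:fair_lsh_prior}, where the $i$-th copy (for $i=0,1,\dots,k-1$) is configured with approximation factor $c'=c^{1/k}$ and inner radius $r_i$, so that a successful run outputs a point drawn exactly uniformly from $B_S(q,r_i)$. Given a query $q$, the algorithm sweeps through the copies in order of increasing $i$ and runs each one until either it terminates or its runtime crosses the threshold $T=100\,d\,n^{\max\{\rho(c'),1/k\}}\log(nQ)$; upon a time-out it aborts that copy and moves to the next one. The first copy to return a point within its budget provides the output, and if every copy times out the algorithm returns $\perp$.

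To verify relaxed fairness I would invoke Theorem~\ref{thm:fair_lsh_prior}: conditional on the $i$-th copy terminating, its output is exactly uniform over $B_S(q,r_i)$, using fresh per-query randomness independent of $R_{\text{setup}}$ and of the prior transcript. Since $r_i\in[r,cr]$ for every $i$, the returned point's distribution is a mixture of uniform distributions over balls with radii in $[r,cr]$, which is exactly what Definition~\ref{def:relaxed-fairness} permits. To argue that some copy usually terminates within $T$, I would apply Claim~\ref{claim:telescoping}: whenever $B_S(q,r)\neq\emptyset$ there exists an index $i^\star$ with $n(q,r_{i^\star+1})/n(q,r_{i^\star})\le n^{1/k}$. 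The expected runtime of the $i^\star$-th copy, by Theorem~\ref{thm:fair_lsh_prior}, is then $O(d(n^{\rho(c')}+n^{1/k})\log(nQ))\le T/100$, so Markov's inequality yields at least a $0.99$ success probability, which can be amplified if desired via a few independent trials.

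For the complexity bounds, each of the $k$ copies occupies $O(n^{1+\rho(c')}\log(nQ))$ bits, giving total space $\widetilde{O}(k\cdot n^{1+\rho(c')})$, while the per-query runtime is bounded by $k\cdot T=\widetilde{O}(dk\cdot n^{\max\{\rho(c'),1/k\}})$. Choosing $k=\bar{k}$ that minimizes $\max\{\rho(c^{1/k}),1/k\}$ then produces the advertised $\widetilde{O}(dn^{\beta})$ query time and $\widetilde{O}(n^{1+\rho(c^{1/\bar{k}})})$ space bounds.

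The main subtlety I anticipate is confirming that the time-out mechanism does not silently bias the output distribution. The abort decision for copy $i$ depends only on its own runtime, which is a function of $q$, $S$, and that copy's fresh per-query randomness alone, so the random radius actually chosen carries no hidden dependence on the setup randomness of other copies or on past transcripts. Consequently the conditional law of the output given the transcript and the query remains a mixture of uniform distributions over the balls $B_S(q,r_j)$ with $r_j\in[r,cr]$, matching Definition~\ref{def:relaxed-fairness}.
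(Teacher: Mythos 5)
Your proposal follows essentially the same construction and argument as the paper: the same geometric sequence of radii with per-annulus approximation factor $c'=c^{1/k}$, the same runtime truncation threshold, an appeal to Claim~\ref{claim:telescoping} to guarantee a low-density annulus, and the same balancing of exponents to obtain $\beta$. Your explicit discussion of why the time-out mechanism does not bias the conditional output distribution is a welcome additional sanity check that the paper's proof leaves implicit, but it does not change the substance of the argument.
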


Solving for $\beta$ is metric space dependent. For the hypercube, we can use $\rho(c) = \frac{1}{2c-1}$ and a back-of-the-envelope calculation yields $k = \Theta(\frac{\log c}{\log\log c})$. To nail down the constants precisely, we pick:
    $$
    \beta=\min\left\{\max\left\{\frac{1}{\lfloor k^* \rfloor},\frac{1}{2c^{1/\lfloor k^* \rfloor}-1}\right\},\max\left\{\frac{1}{\lceil k^* \rceil},\frac{1}{2c^{1/\lceil k^*\rceil}-1}\right\}\right\}
    $$
    with our algorithm having runtime $\widetilde{O}(dn^\beta)$ and space complexity $\widetilde{O}(\sqrt{Q}\cdot n^{1+\beta})$. For instance, if $c = 4$ we have $k^* = 2.48$, so $\beta = 1/3$, while for $c = 10$ we have $k^* \approx 3.15$ so $\beta=1/3$. We plot the solutions for $\beta$ for $c\in[2,100]$ in \autoref{fig:beta-solutions}, both for the Hamming distance and $\ell_2$ distance metrics. Note that $\beta \to 0$ as $c\to\infty$.
\begin{figure}[h]
    \centering
    \begin{subfigure}[T]{0.48\textwidth}
        \centering
        \includegraphics[width=\textwidth]{./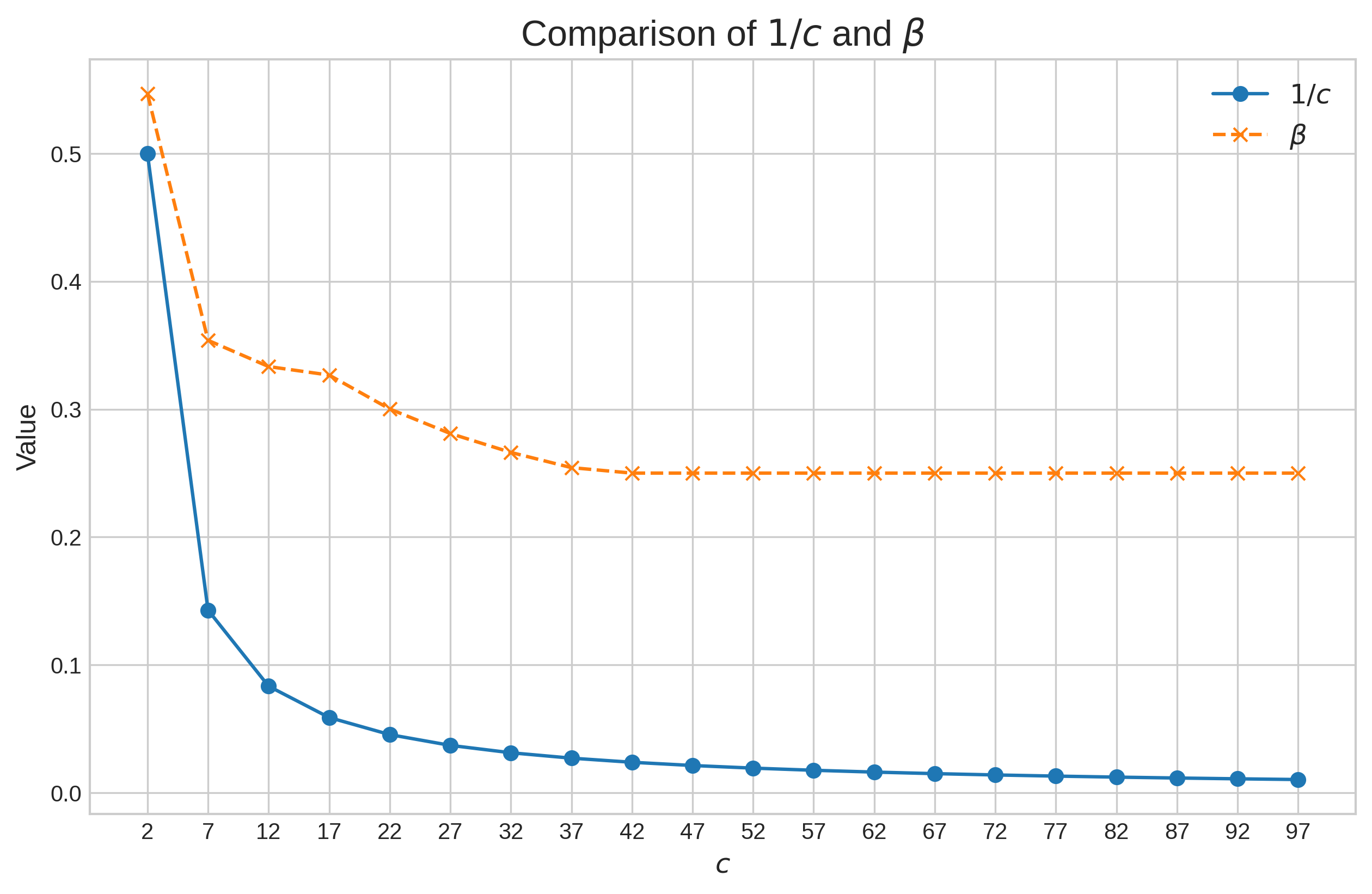}
    \end{subfigure}
    \hfill
    \begin{subfigure}[T]{0.48\textwidth}
        \centering
        \includegraphics[width=\textwidth]{./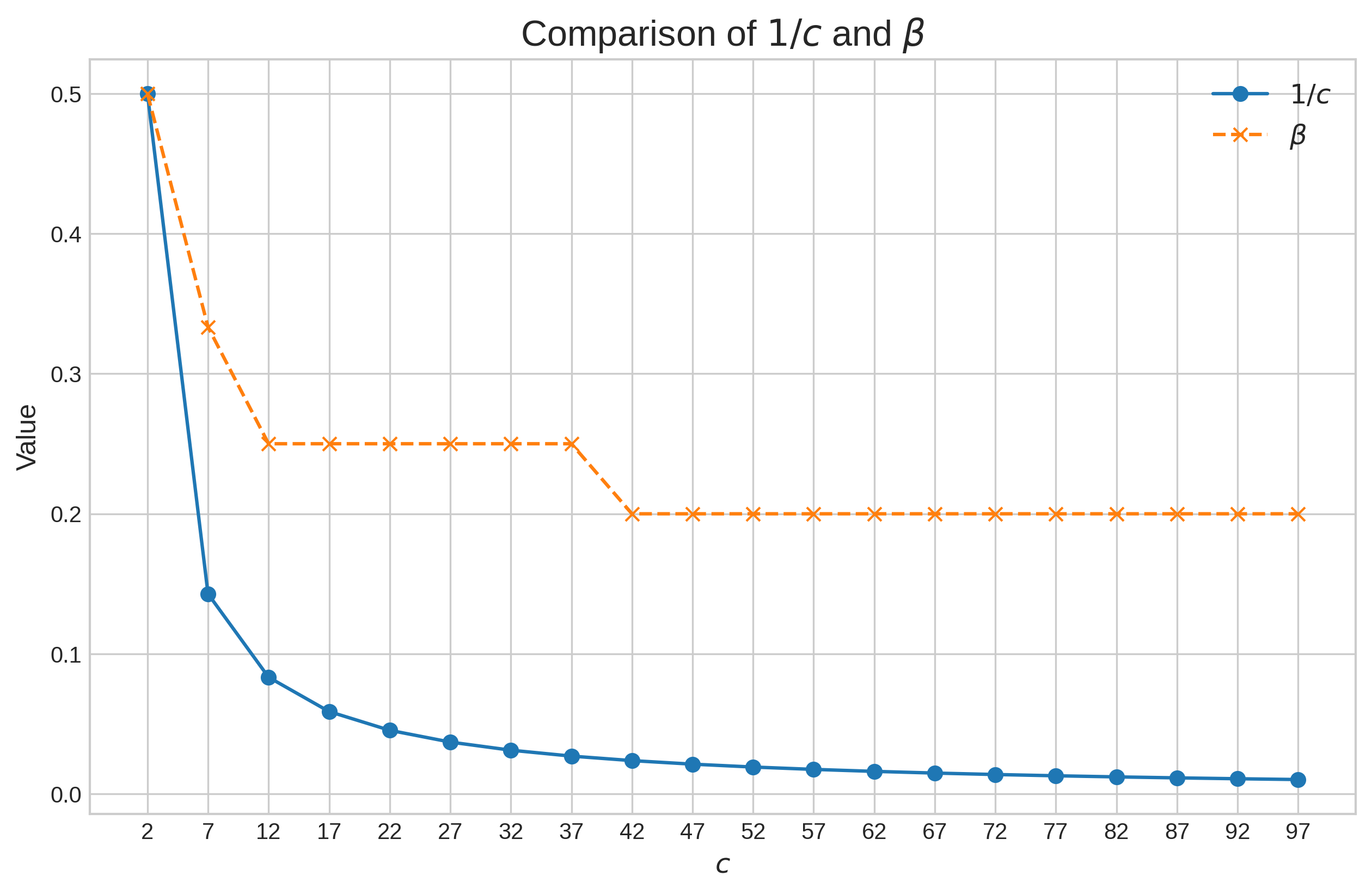}
    \end{subfigure}
     \caption{Solutions for $\beta$ for different values of $c$ in the hypercube (left) and $\ell_2$ (right) domains.}
    \label{fig:beta-solutions}
\end{figure}

\section{Robust ANN Improvements}
We now combine our concentric annuli technique with fair ANN to develop a more efficient and robust algorithm. The core idea is to use a set of parallel instances to dynamically identify an annulus that is likely to yield a fast solution for a given query, and then carefully release this information to maintain adaptive security.

We again partition the space into $k \ge 1$ concentric annuli $(r_{i-1}, r_i]$, where $r_0=r$ and $r_i = c' \cdot r_{i-1}$ for $c' = \sqrt[k]{c}$. For each annulus $i$, we instantiate two independent copies of the base algorithm: a \textit{testing} instance $\mathcal{A}_i \gets \mathcal{A}_{\text{fair}}(c', r_{i-1})$ and a held-out \textit{execution} instance $\mathcal{A}^{(i)}_{\text{fair}} \gets \mathcal{A}_{\text{fair}}(c', r_{i-1})$. Our goal is to find an annulus whose algorithm terminates within a pre-determined runtime threshold with large probability. We formalize this notion as follows.
\begin{definition}[Good Annuli]
Let $T_i$ be the random runtime of the testing instance $\mathcal{A}_i$. The $i$-th annulus is a \textbf{good annulus} if its probability of fast termination, $p_i$, is high:
$$
p_i:=\Pr[T_i\leq 4d(n^{\rho(c')}+n^{1/k})\log(nQ)] \geq 0.999
$$
\end{definition}

Upon receiving a query $q$, we estimate each probability $p_i$ with an additive error of at most $\eta$ by observing the fraction of $\Theta(\eta^{-2}\log(kQ/\delta))$ independent sub-trials of $\mathcal{A}_i$ that halt within the time bound. Let $\hat{p}_i$ be this empirical estimate for $p_i$. We identify candidate annuli with an indicator vector $\hat{\alpha} \in \{0,1\}^k$, where: $\hat{\alpha}_i = \mathbbm{1}[\hat{p_i} \geq 0.998]$.

With high probability, $\hat{\alpha}_i = 1$ implies that annulus $i$ is good. A similar argument to \autoref{claim:telescoping} guarantees that at least one good annulus must exist. We therefore find the first index $i^*$ for which $\hat{\alpha}_{i^*} = 1$ and run the corresponding execution instance $\mathcal{A}_{\text{fair}}^{(i^*)}$ to completion. This approach yields a solution in $\widetilde{O}(d(n^{\rho(c')} + n^{1/k}))$ time with probability at least $0.998$

To ensure robustness, the release of the vector $\hat{\alpha}$ (and thus the choice of $i^*$) must not reveal information about the internal randomness of our algorithm instances. We therefore use the DP-based robustification framework of \citet{hassidim2022adversarially} to release $\hat{\alpha}$ privately. While this increases the space complexity by a factor of $\sqrt{Q}$, it allows us to achieve a considerably better, assumption-free, and purely sublinear query time. \autoref{alg:robust_ann_improved} presents the details of our approach.

To tackle updates, we again simply count the number of LSH data structures we maintain and scale by the number of hash functions (see \autoref{table:comparison})

\begin{algorithm}[h]
\caption{Improved Robust ANN Search }
\label{alg:robust_ann_improved}
\begin{algorithmic}[1]
\State \textbf{Input:} Query $q \in \mathcal{M}$, parameters $c, r, k \geq 1$ and $\delta \in (0,0.0025)$
\Procedure{Initialize}{}
    \State Let $c' \gets \sqrt[k]{c}$ and $r_0 \gets r$.
    \State Let $\eta = 0.001$, $m = \eta^{-2}\log(Qk/\delta)$ and $L=2400\log^{1.5}(1/\delta)\sqrt{2Q}$.
    \For{$i = 1, \dots, k$} 
        \State Let $N=m\times L$ and $r_i \gets c' \cdot r_{i-1}$.  \Comment{Testing Instance Grid}
        \State Instantiate $N$ copies $\mathcal{A}_{i,j_m,j_L} \gets \mathcal{A}_{\text{fair}}(c', r_{i-1})$ for $(j_m,j_L) \in [m]\times [L]$.
        \State Instantiate $\mathcal{A}_{\text{fair}}^{(i)} \gets \mathcal{A}_{\text{fair}}(c', r_{i-1})$ \Comment{Execution Instances}
    \EndFor
\EndProcedure
\Procedure{Query}{$q$}
    \For{$i \in \{1, \dots, k\}$}
        \State Let $S_{\text{trunc}} \gets 4d(n^{1/k}+n^{\rho(c')})\log(nQ)$ \Comment{Let $p_j \gets \Pr[T_j < S_{\text{trunc}}]$.}
        \State $J_i\gets$ Sample $s=\frac{2}{\eta}\log(Qk/\delta)$ indices $(j_m,j_L)\in [m]\times [L]$ with replacement.
        \For{$j:=(j_m, j_L) \in J_i$} 
            \State \Comment{$T_{i,j_m,j_L}$ is the execution time of testing instance $A_{i,j_m,j_L}$.}
            \State Let $\widetilde{p}_{ij} \gets \mathbbm{1}[T_{i,j_m,j_L} < S_{\text{trunc}}]$. 
        \EndFor
        \State Let $\hat{p}_i \gets \frac{1}{s}\sum\limits_{j\in J_i} \widetilde{p}_{ij} + \text{Lap}(\frac{1}{s})$.
        \State Set $\hat{a}(q)_i \gets \mathbbm{1}[\hat{p}_i \geq 0.998]$.
    \EndFor
    \If{$\hat{a}(q) = \vec{0}$}
        \State \textbf{return} $\perp$
    \Else
        \State $i^* \gets \min \{ i \in \{1, \dots, k\} \mid a(q)_i = 1 \}$ \Comment{Find most significant bit index}
        \State \textbf{return} $\mathcal{A}_{\text{fair}}^{(i^*)}(q)$
    \EndIf
\EndProcedure
\end{algorithmic}
\end{algorithm}

\subsection{Analysis}
 First we show that the vector $\hat{a}(q)$ is produced robustly.

\begin{figure}[h]
    \centering
    \begin{tikzpicture}[scale=1]
        \fill[green!20] (0,0) circle (3cm);
        \fill[blue!20] (0,0) circle (3cm) -- (0,0) circle (2cm);
    
        \fill[blue] (0,0) circle (2cm);
        \fill[green!20] (0,0) circle (2cm) -- (0,0) circle (1cm);
    
        \fill[green!20] (0,0) circle (1cm);
    
        \draw (0,0) circle (3cm);
        \draw (0,0) circle (2cm);
        \draw (0,0) circle (1cm);
        \draw (0,0) circle (4cm);

        \node at (2.5, 0) {$r_{i+1}$}; 
        \node at (1.5, 0) {$r_{i}$}; 

    \end{tikzpicture}
    \caption{Our concentric LSH construction. In green lies the set $B_S(q,r_i)$, and blue represents the annulus that extends to $B_S(q,r_{i+1})$}
    \label{fig:concentric_lsh}
\end{figure}
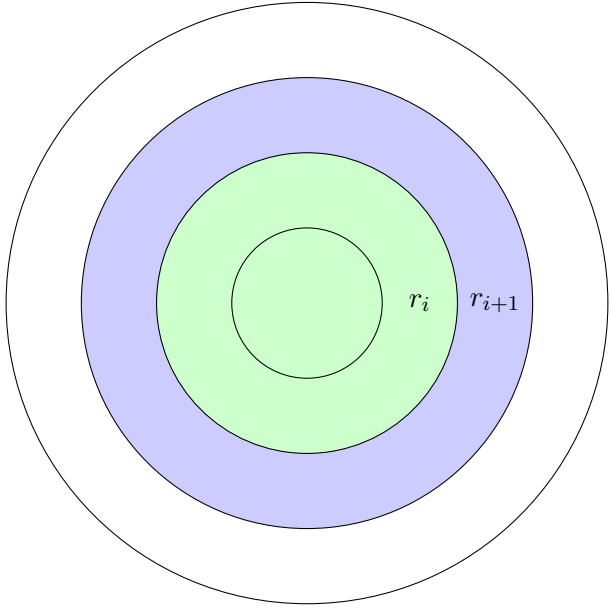

\begin{lemma}
\label{lemma:p_j_approx}
    With probability at least $1-\Theta(\delta)$ over all queries and annuli, the vectors $\hat{p}$ computed by \autoref{alg:robust_ann_improved} are such that:
    \begin{align*}
        ||\hat{p}-p||_\infty \leq \eta
    \end{align*}
    This holds despite the adversary's action to establish the opposite.
\end{lemma}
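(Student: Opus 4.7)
The plan is to adapt the differentially private robustification framework used in \autoref{thm:robust-decider-thm} to the present real-valued estimation task, viewing the random strings of the $N = m \cdot L$ testing instances inside each annulus as a ``dataset'' whose influence on the released estimates $\hat{p}$ we need to control. The conceptual reduction is the same as in Hassidim et al.: if the sequence of releases is differentially private in those random strings, then the oblivious-setting concentration of the empirical average around the true probability $p_i$ transfers to the adaptive setting with only a mild degradation. Because we release a bounded real statistic directly (rather than a thresholded majority), the argument is in fact slightly cleaner than in the decider analysis.

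First, I would fix an annulus $i$ and a single query. The subsampled empirical fraction $\frac{1}{s}\sum_{j\in J_i}\widetilde{p}_{ij}$ has global sensitivity $1/s$ in the random strings of the testing instances (swapping one string flips at most one subsampled bit), so releasing it with $\mathrm{Lap}(1/s)$ noise is $1$-DP per query. Privacy amplification by subsampling (\autoref{lem:privacy_ampl_subsampling}) sharpens this to $O(s/N)$-DP, and advanced composition across the $Q$ queries and $k$ annuli yields an overall $(\varepsilon,\delta_0)$-DP release with parameters matching the choice $L = 2400\log^{1.5}(1/\delta)\sqrt{2Q}$; this is exactly the parameter regime tuned in \autoref{thm:robust-decider-thm}.

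Next, I would invoke the adaptive-to-oblivious transfer theorem from the DP generalization toolkit (as used by Ben-Eliezer et al.\ and Hassidim et al.) to replace the adversarially chosen query by an oblivious one at the cost of a small additive slack absorbed into $\eta$. In the oblivious setting, the $\widetilde{p}_{ij}$ are i.i.d.\ Bernoulli($p_i$), so by a Hoeffding bound the empirical fraction is within $\eta/2$ of $p_i$ with probability $1 - O(\delta/(Qk))$, using $s = (2/\eta)\log(Qk/\delta)$. The Laplace tail bound $\Pr[|\mathrm{Lap}(1/s)| > \eta/2] \le e^{-s\eta/2}$, again with the chosen $s$, ensures the added noise contributes at most $\eta/2$ with the same failure probability. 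A union bound across the $k$ annuli and $Q$ rounds gives $\|\hat{p}-p\|_\infty \le \eta$ with probability $1 - \Theta(\delta)$.

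The main obstacle is ensuring that the chain of reductions — subsampling amplification, Laplace-mechanism DP, advanced composition across $Q$ queries, and the DP-to-generalization transfer theorem — can be chained with the specific constants $L, m, s, \eta$ chosen in \autoref{alg:robust_ann_improved} so that the resulting failure probability is $\Theta(\delta)$ rather than something larger. This bookkeeping essentially mirrors the proof of \autoref{thm:robust-decider-thm} but must be redone for a real-valued mean rather than a thresholded count; I would verify each quantitative step by lining up parameters with the decider analysis and appealing to that theorem for any identical sub-claim.
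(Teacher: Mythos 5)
Your proposal follows essentially the same template as the paper's proof: Laplace mechanism with sensitivity $1/s$, privacy amplification by subsampling, composition across queries, and then a concentration bound on the empirical fraction combined with a Laplace tail bound, closed out by a union bound over $k$ annuli and $Q$ rounds. The conceptual backbone ("DP in the internal randomness implies the oblivious-case concentration transfers to the adaptive setting") is exactly what the paper relies on, and you make the transfer step more explicit than the paper does by naming the DP-generalization theorem directly.

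The one structural difference worth flagging: for the $k$ annuli you propose \emph{advanced composition}, whereas the paper invokes \emph{parallel composition} (\autoref{thm:parallel-composition}). Parallel composition is the natural tool here because the testing instances $\mathcal{A}_{i,j_m,j_L}$ for different annuli $i$ are built from disjoint blocks of the random string, so releasing all $k$ coordinates of $\hat p$ costs nothing extra in the privacy budget. Using advanced composition over the $k$ annuli would instead inflate $\varepsilon$ by roughly $\sqrt{k}$, which, while harmless asymptotically since $k = O(\log c / \log\log c)$ is small, is strictly looser and would not line up as cleanly with the specific constants $L$, $m$, $s$ chosen in \autoref{alg:robust_ann_improved}. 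If you swap advanced composition for parallel composition across annuli (keeping advanced composition only across the $Q$ rounds), your bookkeeping matches the paper exactly.
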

\begin{proof}
Our argument mimics the proof of \autoref{thm:robust-decider-thm} in that it invokes the robustification framework of \citet{hassidim2022adversarially}. 

First, producing $\hat{p}$ over $Q$ timesteps is $(\varepsilon,\delta)$-private with respect to the random strings our algorithm uses. This follows from the application of the Laplace mechanism with sensitivity $\frac{1}{s}$ (\autoref{thm:laplace-mechanism}) and from privacy amplification by subsampling (\autoref{lem:privacy_ampl_subsampling}). Furthermore, if each $\hat{p}_i$ is produced privately with respect to the input randomness, we can invoke the parallel composition theorem of DP (\autoref{thm:parallel-composition}) to show that the entire release of $\hat{p}$ is private without additional cost to the privacy parameters. As we know, privacy with respect to the input randomness implies robustness, so we now have to calculate the cost of privacy in our approximation algorithm.

It suffices to argue that for a fixed $i \in [k]$ we have $|\hat{p}_i - p_i| \leq \eta$. Fixing some query and $i \in [k]$ we know by the triangle inequality and a standard Chernoff bound (since $m = \Omega(\eta^{-2}\log (Qk/\delta))$) that:
$$
|p_i - \hat{p}_i| \leq \left|p_i-\frac{1}{s}\sum\limits_{j \in J_i}\widetilde{p}_i\right|+\left|\frac{1}{s}\sum\limits_{j \in J_i}\widetilde{p}_{ij} - \widehat{p}_i\right| \leq \frac{\eta}{2} + \left|\frac{1}{s}\sum\limits_{j \in J_i}\widetilde{p}_{ij} - \widehat{p}_i\right|
$$
The latter term of the above sum is the error incurred via the privatization process. We can bound it by using our known bound on the magnitude of Laplacian noise (\autoref{lem:bounded max of laplace distribution}):
\begin{align*}
   \left|\frac{1}{s}\sum\limits_{j \in J_i}\widetilde{p}_{ij} - \widehat{p}_i\right| = \left|\frac{1}{s}\sum\limits_{j \in J_i}\widetilde{p}_{ij} - \frac{1}{s}\sum\limits_{j \in J_i}\widetilde{p}_{ij} + \text{Lap}\left(\frac{1}{s}\right)\right| \leq \frac{1}{s}\cdot\frac{\eta}{2}\cdot s=\frac{\eta}{2}
\end{align*}
This happens with probability at least $1-e^{-\eta s} \geq 1-\frac{1}{\text{poly}(kQ/\delta)}$. Taking a union bound over $k$ annuli and $Q$ queries establishes the lemma.
\end{proof}
\begin{corollary}
    Since \autoref{alg:robust_ann_improved} generates vector $\vec{a} \in \{0,1\}^k$ by post-processing, this also implies that $a$ is generated robustly.
\end{corollary}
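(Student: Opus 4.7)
My plan is to invoke the post-processing property of differential privacy together with \autoref{lemma:p_j_approx}. The lemma's proof establishes that the mechanism producing $\hat{p}$ over $Q$ rounds is $(\varepsilon,\delta)$-differentially private with respect to the internal random strings of the testing instances $\mathcal{A}_{i,j_m,j_L}$. Because each coordinate $\hat{a}(q)_i = \mathbbm{1}[\hat{p}_i \geq 0.998]$ is a deterministic, query-independent function of $\hat{p}$ alone, the standard post-processing theorem of DP immediately implies that the release of the entire vector $\hat{a}(q)$ is also $(\varepsilon,\delta)$-private with respect to the same internal randomness, with no additional privacy loss and no need to recompose the Laplace mechanism.

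Having established privacy with respect to the internal randomness, I would then invoke the robustification framework of \citet{hassidim2022adversarially} that already underlies the proof of \autoref{lemma:p_j_approx} and of \autoref{thm:robust-decider-thm}, which converts such privacy guarantees into adversarial robustness against an adaptive adversary. The reasoning is identical to what was used for $\hat{p}$: the adversary cannot exploit a private observable to tailor future queries against our internal hash functions, so the statistics of $\hat{a}(q)$ remain faithful to a deterministic functional of the true vector $p$ throughout all $Q$ adaptive rounds.

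For concreteness, on the high-probability event $\{\|\hat{p}-p\|_\infty \leq \eta\}$ guaranteed by the lemma (with $\eta = 0.001$), any coordinate with $\hat{a}(q)_i = 1$ must have $p_i \geq 0.998 - \eta = 0.997$, and any truly good annulus with $p_i \geq 0.999$ necessarily satisfies $\hat{p}_i \geq 0.998$ and hence $\hat{a}(q)_i = 1$. Thus $\hat{a}(q)$ faithfully flags good annuli up to a vanishing slack, robustly and uniformly over all queries. The main ``obstacle'' is purely verificational: one must confirm that the post-processing map does not covertly depend on query-time coins in a way that could leak additional information, but this is immediate from the pseudocode of \autoref{alg:robust_ann_improved}, where $\hat{a}$ is defined by a plain coordinatewise threshold applied to $\hat{p}$. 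Consequently the corollary reduces to a one-line appeal to post-processing plus the preceding lemma.
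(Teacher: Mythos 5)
Your proposal is correct and matches the paper's (unwritten) justification: the corollary follows by combining \autoref{lemma:p_j_approx} with the post-processing property of DP (\autoref{thm:post processing DP}), since the threshold map $\hat{p} \mapsto \hat{a}$ is a fixed deterministic function independent of the internal randomness, and the Hassidim-et-al framework then converts privacy into adversarial robustness. Your concrete bookkeeping ($\hat{a}_i = 1 \Rightarrow p_i \geq 0.997$) is also slightly more careful than the paper's later statement, which appears to have a minor arithmetic slip.
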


Next, we argue that the output point of the algorithm is produced correctly and within the claimed runtime and space complexity.
\begin{theorem}
    \autoref{alg:robust_ann_improved} is a $(\delta+\frac{1}{\text{poly}(n)})$-robust $(c,r)$--ANN algorithm that uses pre-processing space $\widetilde{O}(\sqrt{Q}\cdot n^{1+\rho(c')})$, where $c' = c^{1/\bar{k}}$ and $\bar{k} = \arg\min_{k \in \mathbb{Z}_{\geq 1}}\left(\beta:=\max\{\rho(c^{1/k}), 1/k\}\right)$. Each query takes $\widetilde{O}(dn^{\beta})$ time with probability at least $0.998$.
\end{theorem}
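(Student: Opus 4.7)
The plan is to decompose the argument into three parts -- correctness/robustness, per-query runtime, and space -- and to combine \autoref{lemma:p_j_approx} (robust estimation of the vector $p$), \autoref{claim:telescoping} (the telescoping density bound), and \autoref{claim:fairness-robustness} (fairness implies robustness of each execution instance). Throughout, one sets $k = \bar k$ so that the exponents $\rho(c')$ and $1/k$ are balanced and produce $\beta$.

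For correctness I would condition on two high-probability events: (i) $\|\hat p - p\|_\infty \le \eta$ from \autoref{lemma:p_j_approx}, which by that lemma holds with probability $1-\Theta(\delta)$ across all queries and all annuli; and (ii) the event that every execution instance $\mathcal{A}_{\text{fair}}^{(i)}$ answers each of its adaptive queries correctly throughout the game, which by \autoref{claim:fairness-robustness} and a union bound over the $k$ annuli holds with probability at least $1 - k/n = 1 - 1/\poly(n)$. If $B_S(q,r) = \emptyset$ every possible output is trivially a valid ANN answer. Otherwise \autoref{claim:telescoping} produces an index $i^0 \in \{1,\dots,k\}$ with $n(q,r_{i^0})/n(q,r_{i^0-1}) \le n^{1/k}$, so by \autoref{thm:fair_lsh_prior} the expected testing runtime in annulus $i^0$ is $O(d(n^{\rho(c')}+n^{1/k})\log(nQ))$. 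Markov's inequality, after adjusting the multiplicative constant in $S_{\text{trunc}}$ to absorb the hidden one, then yields $p_{i^0} \ge 0.999$, which combined with $\eta = 0.001$ gives $\hat a(q)_{i^0}=1$ and therefore the selected $i^* \le i^0$ exists. Since the chosen $i^*$ satisfies $\hat p_{i^*} \ge 0.998$ we have $p_{i^*} \ge 0.997$; because the execution copy $\mathcal{A}_{\text{fair}}^{(i^*)}$ is independent of the testing copies and has the same runtime law, it halts within $S_{\text{trunc}}$ and outputs a point in $B_S(q,r_{i^*}) \subseteq B_S(q,cr)$ with probability at least $0.998$.

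For the per-query runtime, the cost is dominated by probing $s = \widetilde O(1/\eta)$ truncated testing copies per annulus at cost $S_{\text{trunc}} = \widetilde O(d(n^{1/k}+n^{\rho(c')}))$ each, plus one execution of $\mathcal{A}_{\text{fair}}^{(i^*)}$ that by the preceding paragraph halts within $S_{\text{trunc}}$ with probability at least $0.997$. This totals $\widetilde O(dn^{\max\{1/k,\rho(c')\}}) = \widetilde O(dn^\beta)$ at the optimal $k=\bar k$. For space, \autoref{thm:fair_lsh_prior} makes each of the $kmL+k$ fair ANN copies cost $\widetilde O(n^{1+\rho(c')})$; since $k$ and $m$ are polylogarithmic in $Q/\delta$ and $L = \widetilde\Theta(\sqrt Q)$, the total is $\widetilde O(\sqrt Q \cdot n^{1+\rho(c')})$.

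The main obstacle is calibrating the constant inside $S_{\text{trunc}}$ so that a single application of Markov on the good annulus produced by \autoref{claim:telescoping} actually clears the $0.999$ threshold -- large enough to survive the $\eta$-slack imposed by \autoref{lemma:p_j_approx} and trigger $\hat a(q)_{i^0}=1$ with high probability; this pins down the hidden constants in the algorithm's truncation rule. A secondary subtlety is composition: the per-query selection $i^*$ is protected because $\hat p$ is released under a DP mechanism via \autoref{lemma:p_j_approx}, while the subsequent execution is protected by \autoref{claim:fairness-robustness}, and one must verify that the adaptive dependence of $i^*$ on the transcript does not break the inductive fairness-robustness argument -- which it does not, since exact fairness makes the conditional law of each execution copy's output independent of its setup randomness irrespective of how $i^*$ is chosen, so the induction of \autoref{claim:fairness-robustness} can be carried out separately inside each annulus.
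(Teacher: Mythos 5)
Your proof follows the paper's own proof closely: the same decomposition into correctness, runtime, and space; the same three supporting results (\autoref{lemma:p_j_approx}, \autoref{claim:telescoping}, \autoref{claim:fairness-robustness}); and the same observation that the execution copy's runtime has the same law as the testing copies' so that $p_{i^*}$ lower-bounds the chance the execution halts within $S_{\text{trunc}}$. Where you deviate, you are in fact more careful than the published argument. The paper infers $\Pr[T_\ell < 4d(n^{\rho(c')}+n^{1/k})\log(nQ)] \geq 1-1/\mathrm{poly}(n)$ directly from \autoref{thm:fair_lsh_prior}, but that theorem states only an \emph{expected} query-time bound, so a $1-1/\mathrm{poly}(n)$ tail does not follow without an unstated concentration claim; your Markov route, after inflating the constant in $S_{\text{trunc}}$ so the threshold exceeds $1000$ times the expectation, is self-contained and delivers $p_{i^0}\geq 0.999$, which is all the calibration needs. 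Your threshold arithmetic is also the correct one: with the algorithm's threshold $\hat p_{i^*}\geq 0.998$ and $\eta = 0.001$, one gets $p_{i^*}\geq 0.997$ (the paper writes $0.999-\eta=0.998$, which mixes up the threshold with the target), and either value supports the final $0.998$ runtime probability. Lastly, your note that exact fairness makes each execution copy's conditional output law independent of its own setup randomness \emph{no matter how $i^*$ is selected from the transcript} is exactly the observation needed to compose the privacy of $\hat a$ with the per-annulus fairness-robustness induction; it matches the paper's ``post-processing'' remark but states the required invariant more explicitly.
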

\begin{proof}
    We maintain $\Theta(k\log^{2.5}(kQ)\sqrt{Q})$ testing instances, as well as $k$ execution instances. This means our total space complexity is:
    \[
    \Theta\left(k\log^{2.5}(kQ/\delta)\sqrt{Q}\cdot n^{1+\rho(c')} + kn^{1+\rho(c')}\right) = \widetilde{O}(k\sqrt{Q}\cdot n^{1+\rho(c')}).
    \]
    For the query runtime, suppose $B_S(q,r) \neq \emptyset$. As we argued in \autoref{claim:telescoping}, there must exist some annulus $\ell$ for which the density ratio is at most $n^{1/k}$. For that annulus, \autoref{thm:fair_lsh_prior} implies that:
    \begin{align*}
        \Pr\left[T_\ell <4(n^\rho+n^{1/k})\log(nQ)\right] \geq 1-\frac{1}{\text{poly}(n)} \gg 0.999
    \end{align*}
    Therefore, there always exists a good annulus when $B_S(q,r) \neq \emptyset$. 
    
    By \autoref{lemma:p_j_approx} we have that a good annulus will, with high probability be captured by \autoref{alg:robust_ann_improved}. Conversely, if $\hat{a}(q)_i = 1$, then $p_i \geq 0.999-\eta = 0.998$. As a result, if $i^*$ is the MSB of $a$, the corresponding execution instance $\mathcal{A}_{\text{fair}}^{(i^*)}$ runs in time $O(dn^{\beta})$ with probability at least $0.998$. Overall, to process one query, we run all $ks = O(\log(Qk))$ truncated copies of $\mathcal{A}_{ij}$. Thus, our algorithm takes $O(d\log (nQ)\cdot n^{\beta})$ per query, as initially claimed.

    Finally, to argue robustness, we know from \autoref{lemma:p_j_approx} that releasing vector $\hat{a}$ is done robustly. Also, \autoref{claim:fairness-robustness} tells us that the held-out execution copy is robust, given that the MSB $i^*$ is produced from $a$ via a fixed function (post-processing). Overall, the output of \autoref{alg:robust_ann_improved} is adversarially robust with probability at least $1-\delta-\frac{1}{\text{poly}(n)}$, accounting for the probability that any of the fair ANN algorithms fail. 
\end{proof}
Finally, we use the median amplification trick to get a high probability guarantee on the runtime. We maintain $t=\Theta(\log(Q/\delta))$ independent copies of the algorithm above and declare the runtime to be the median runtime. A standard Chernoff bound argument shows that the probability any of $Q$ queries failing is at most $\delta/Q$, which by union bound makes the failure probability at most $\delta$, as desired.

\section{Conclusion}
This study presents a series of algorithms for solving ANN against adaptive adversaries. Our approaches, which integrate principles of fairness and privacy with novel data constructions, are efficient and input-independent. Our work raises several intriguing questions for future research: Can we establish time and space lower bounds for robust algorithms? Can the powerful link between fairness and robustness be extended to other domains, like estimation problems? Lastly, a practical implementation of our approach and the nuances it presents in a real system are also important avenues to investigate. 

\section*{Acknowledgments}
The authors would like to thank the organizers of the Summer Program on Sublinear Algorithms at the Simons Institute, where this work was initiated. We are grateful to Sofya Raskhodnikova for early discussions that helped formulate the problem, and to Adam Smith for explaining the recent status of techniques in adaptive data analysis. We also thank Christian Sohler for valuable discussions regarding his paper, and Rathin Desay for his helpful comments.

\bibliography{refs}
\bibliographystyle{apalike}

\newpage
\appendix

\section{Notation Table}
\begin{table}[h]
\centering
\renewcommand{\arraystretch}{1.2}
\begin{tabular}{|c|p{10.5cm}|}
\hline
\textbf{Notation} & \textbf{Description} \\ \hline

$n$ & Number of points in the dataset $P \subseteq \mathbb{R}^d$ \\ \hline
$d$ & Dimension of the ambient space \\ \hline
$Q$ & Number of (possibly adaptive) queries \\ \hline
$r$ & Target radius for the $(c,r)$-ANN problem \\ \hline
$c>1$ & Approximation factor \\ \hline
$q$ & A query point \\ \hline
$B_S(q,r)$ & Ball of radius $r$ centered at $q$, intersected with dataset $S$\\ \hline
$n(q,r)$ & Number of dataset points in $B_S(q,r)$ \\ \hline
$\rho(c)$ & LSH exponent for approximation factor $c$ \\ \hline

$A_{\text{dec}}$ & Weak $(c,r)$-ANN decision algorithm \\ \hline
$A_{\text{fair}}$ & Fair ANN algorithm whose output distribution is independent of past queries \\ \hline

$\alpha$ & Bucketing parameter (segment size exponent) \\ \hline
$S_i$ & The $i$-th bucket/segment of the dataset \\ \hline

$k$ & Number of concentric annuli (or subsampling parameter, context-dependent) \\ \hline
$r_i$ & Radius of the $i$-th annulus, $r_i = c^{i/k} r$ \\ \hline
$c'$ & Per-annulus approximation factor, $c' = c^{1/k}$ \\ \hline
$\beta$ & Query-time exponent $\beta = \min_k \max\{\rho(c^{1/k}), 1/k\}$ \\ \hline

$L$ & Number of independent algorithmic copies (for robustness) \\ \hline
$m$ & Number of runtime trials per copy \\ \hline
$T_{i,j}$ & Runtime of the $j$-th copy on annulus $i$ \\ \hline
$S_{\text{trunc}}$ & Runtime truncation threshold \\ \hline

$N_i$ & Fraction of copies that halt within time $S_{\text{trunc}}$ on annulus $i$ \\ \hline
$\hat N_i$ & Noisy estimate of $N_i$ released via the Laplace mechanism \\ \hline

$\varepsilon$ & Differential privacy parameter \\ \hline
$\delta$ & Failure probability parameter / DP parameter \\ \hline


\end{tabular}
\caption{Summary of notation used throughout the paper.}
\label{table:notation}
\end{table}

\section{Background from Differential Privacy}
\label{appx:dp-review}
Our work leans heavily on results from differential privacy, so we give the necessary definitions and results here.

\subsection{Definition of differential privacy}
\begin{definition}[Differential Privacy]
Let $\mathcal{A}$ be any randomized algorithm that operates on databases whose elements come from some universe. For parameters $\varepsilon > 0$ and $\delta \in [0,1]$, the algorithm $\mathcal{A}$ is $(\varepsilon,\delta)$--differentially private (DP) if for any two neighboring databases $S \sim S'$ (ones that differ on one row only), the distributions on the algorithm's outputs when run on $S$ vs $S'$ are very close. That is, for any $S\sim S'$ and any subset of outcomes $T$ of the output space of $\mathcal{A}$ we have:
    \begin{align*}
        \Pr[\mathcal{A}(S)\in T] \leq e^\varepsilon\cdot \Pr[\mathcal{A}(S')\in T] + \delta
    \end{align*}
\end{definition}

\subsection{The Laplace Mechanism and its properties}
\begin{theorem}[The Laplace Mechanism, \citep{dwork2006calibrating}]\label{thm:laplace-mechanism} Let $f : X^* \rightarrow \mathbb{R}$ be a function. Define its {sensitivity} $\ell$ to be an upper bound to how much $f$ can change on neighboring databases:
\begin{align*}
    \forall S\sim S':\quad |f(S)-f(S')| \leq \ell
\end{align*}
The algorithm that on input $S \in X^*$ returns $f(S) + \text{Lap}\left(\frac{\ell}{\varepsilon}\right)$ is $(\varepsilon, 0)$--DP, where 
$$
\text{Lap}(\lambda;x) := \frac{1}{2\lambda}\exp\bigg(-\frac{|x|}{\lambda}\bigg)
$$
is the Laplace Distribution over $\mathbb{R}$.
\end{theorem}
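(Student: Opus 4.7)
The plan is to prove the statement by the classical pointwise density-ratio argument. Fix any two neighboring databases $S \sim S'$ and any measurable $T \subseteq \mathbb{R}$. Let $\mathcal{A}(S) = f(S) + Z$ and $\mathcal{A}(S') = f(S') + Z'$ where $Z, Z' \sim \text{Lap}(\ell/\varepsilon)$ are independent. Since the output distribution is a translated Laplace, the density of $\mathcal{A}(S)$ at a point $y \in \mathbb{R}$ is $p_S(y) = \frac{\varepsilon}{2\ell}\exp(-\varepsilon |y - f(S)|/\ell)$, and analogously for $p_{S'}(y)$.

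The first key step is to bound the pointwise ratio $p_S(y)/p_{S'}(y)$. Writing it out, the ratio simplifies to $\exp\!\bigl(\frac{\varepsilon}{\ell}(|y - f(S')| - |y - f(S)|)\bigr)$. By the reverse triangle inequality,
\begin{equation*}
\bigl||y - f(S')| - |y - f(S)|\bigr| \leq |f(S) - f(S')| \leq \ell,
\end{equation*}
where the second inequality uses the sensitivity bound. Hence the pointwise ratio is at most $e^{\varepsilon}$ for every $y$.

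The second step is to integrate this pointwise bound over $T$:
\begin{equation*}
\Pr[\mathcal{A}(S) \in T] = \int_T p_S(y)\,dy \leq \int_T e^{\varepsilon}\, p_{S'}(y)\,dy = e^{\varepsilon}\Pr[\mathcal{A}(S') \in T],
\end{equation*}
which is exactly the $(\varepsilon, 0)$-DP guarantee since the additive slack $\delta$ is zero. The argument is symmetric in $S$ and $S'$, so the bound holds in both directions.

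There is no real obstacle here: the proof is entirely a mechanical consequence of (i) the explicit form of the Laplace density, (ii) the reverse triangle inequality, and (iii) the sensitivity assumption. The only minor care needed is to use the reverse triangle inequality in the correct direction and to observe that the additive noise $Z$ has the same distribution under both $S$ and $S'$, so the only difference between $\mathcal{A}(S)$ and $\mathcal{A}(S')$ is a deterministic shift of magnitude at most $\ell$.
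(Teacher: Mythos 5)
Your proof is correct and is the classical density-ratio argument from the cited reference (Dwork, McSherry, Nissim, Smith, ``Calibrating Noise to Sensitivity''); the paper states this theorem as background without reproducing a proof. Nothing to add.
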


We will make use of the following concentration property of the Laplace Distribution:
\begin{lemma}\label{lem:bounded max of laplace distribution}
For $m \geq 1$, let $Z_1,...Z_{m}\sim \text{Lap}\left(\lambda\right)$ be iid random variables. We have that:
    \begin{align*}
        \Pr\left[\max\limits_{i=1}^m Z_i> \lambda(\ln(m) + t)\right] \leq e^{-t}
    \end{align*}
\end{lemma}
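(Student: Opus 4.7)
The plan is to prove this by a direct union bound argument on the one-sided tail of a single Laplace random variable. The statement is a standard maximum-of-tails concentration inequality, and the main work is just to check that the constants line up when we set the threshold to $\lambda(\ln m + t)$.

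First, I would compute the one-sided tail of $Z \sim \mathrm{Lap}(\lambda)$. Using the density $\mathrm{Lap}(\lambda; x) = \frac{1}{2\lambda}\exp(-|x|/\lambda)$ stated in Theorem~\ref{thm:laplace-mechanism}, for any $s \geq 0$,
\begin{equation*}
\Pr[Z > s] = \int_s^\infty \frac{1}{2\lambda} e^{-x/\lambda}\,dx = \tfrac{1}{2}e^{-s/\lambda}.
\end{equation*}
This is the only analytic input needed.

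Next, I would apply a simple union bound over the $m$ variables. For any threshold $s \geq 0$,
\begin{equation*}
\Pr\!\left[\max_{i=1}^m Z_i > s\right] \leq \sum_{i=1}^m \Pr[Z_i > s] = \tfrac{m}{2} e^{-s/\lambda}.
\end{equation*}
Then I would substitute $s = \lambda(\ln m + t)$, which gives $e^{-s/\lambda} = e^{-\ln m - t} = m^{-1}e^{-t}$. Plugging in yields $\tfrac{m}{2}\cdot m^{-1} e^{-t} = \tfrac{1}{2}e^{-t} \leq e^{-t}$, which is the claimed bound (in fact with a slightly better constant of $1/2$).

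I do not expect any genuine obstacle here — the lemma is a textbook consequence of Laplace tail decay combined with a union bound, and the only thing to be careful about is making sure the density normalization $\tfrac{1}{2\lambda}$ is handled correctly so that the factor of $\tfrac{1}{2}$ in the tail cancels favorably against $m$. The statement is tight up to this constant, since even a single variable has $\Pr[Z > \lambda t] = \tfrac{1}{2}e^{-t}$.
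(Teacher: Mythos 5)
Your proof is correct: the one-sided Laplace tail $\Pr[Z>s]=\tfrac{1}{2}e^{-s/\lambda}$ for $s\ge 0$, a union bound over the $m$ variables, and the substitution $s=\lambda(\ln m+t)$ give exactly $\tfrac{1}{2}e^{-t}\le e^{-t}$. The paper states this lemma without proof as a standard concentration fact, so there is nothing to compare against; your argument is the usual and essentially unique derivation, and you even recover the slightly sharper constant $\tfrac{1}{2}$.
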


\subsection{Properties of differential privacy}
Differential Privacy has numerous properties that are useful in the design of algorithms. The following theorem is known as ``advanced adaptive composition'' and describes a situation when DP algorithms are linked sequentially in an adaptive way.
\begin{theorem}[Advanced Composition, \citep{dwork2010boosting}]\label{thm:advanced composition}
    Suppose algorithms $\mathcal{A}_1,...,\mathcal{A}_k$ are $(\varepsilon, \delta)$--DP. Let $\mathcal{A}'$ be the adaptive composition of these algorithms: on input database $x$, algorithm $\mathcal{A}_i$ is provided with $x$, and, for $i \geq 2$, with the output $y_{i-1}$ of $\mathcal{A}_{i-1}$. Then, for any $\delta' \in (0,1)$, Algorithm $\mathcal{A}$ is $(\widetilde{\varepsilon},\widetilde{\delta})$--DP with:
    $$\widetilde{\varepsilon} = \varepsilon\cdot\sqrt{2k\ln(1/\delta')}+2k\varepsilon^2\,\text{ and }\,\widetilde{\delta}=k\delta+\delta'$$
\end{theorem}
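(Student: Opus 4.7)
The plan is to follow the classical \emph{privacy loss random variable} (PLRV) framework. I fix two neighboring databases $x \sim x'$ and, for any realization of the output prefix $y_{<i} := (y_1,\dots,y_{i-1})$, define the privacy loss at round $i$ as
\[
Z_i \;:=\; \log \frac{\Pr[\mathcal{A}_i(x,\,y_{<i}) = y_i]}{\Pr[\mathcal{A}_i(x',\,y_{<i}) = y_i]}.
\]
The per-mechanism $(\varepsilon,\delta)$-DP guarantee, interpreted (via the Kasiviswanathan--Smith characterization) as ``$|Z_i| \le \varepsilon$ except on an event of probability at most $\delta$,'' lets me take a union bound over the $k$ rounds and carve out a failure event of total mass at most $k\delta$, which will be absorbed into $\widetilde{\delta}$.

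On the complement of that failure event, the sequence $\{Z_i\}$ is adapted to the natural filtration and satisfies two properties that drive the argument: (i) $|Z_i| \le \varepsilon$ almost surely, and (ii) $\mathbb{E}[Z_i \mid y_{<i}] \le \varepsilon(e^\varepsilon - 1) \le 2\varepsilon^2$ when $\varepsilon \le 1$, the latter bound obtained by relating max-divergence to KL-divergence in the standard way. Summing the conditional means across all $k$ rounds is what produces the $2k\varepsilon^2$ term in the statement.

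Next I would apply Azuma--Hoeffding to the centered martingale $\sum_{i \le k}\bigl(Z_i - \mathbb{E}[Z_i \mid y_{<i}]\bigr)$, whose increments have range at most $2\varepsilon$. This yields
\[
\sum_{i=1}^{k} Z_i \;\le\; 2k\varepsilon^2 + \varepsilon\sqrt{2k\ln(1/\delta')} \;=\; \widetilde{\varepsilon}
\]
except with probability at most $\delta'$. The final step is the standard equivalence between bounded privacy loss and approximate DP: a pointwise bound of $\widetilde{\varepsilon}$ on $\sum Z_i$ (outside an event of total probability $k\delta + \delta'$) is exactly what it means for the adaptive composition to be $(\widetilde{\varepsilon}, k\delta+\delta')$-DP.

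The main obstacle will be rigorously handling the interaction between adaptivity and the $\delta$-slack. I need to argue that for \emph{every} prefix $y_{<i}$, the mechanism $\mathcal{A}_i(\cdot, y_{<i})$ is genuinely $(\varepsilon,\delta)$-DP on $(x,y_{<i})$ vs.\ $(x', y_{<i})$ --- which the adaptive DP definition supplies by design --- and then to collect the $k$ per-round ``bad'' sets into a single event of probability at most $k\delta$ without spurious coupling between rounds. The clean route is to first pass to the lossless $(\varepsilon,0)$-DP characterization on the good event, carry out the martingale concentration there, and only at the very end re-introduce the $k\delta$ slack into $\widetilde{\delta}$ via a union bound.
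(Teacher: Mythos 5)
The paper does not prove this theorem; it is quoted directly (with a citation to Dwork, Rothblum, and Vadhan, 2010) as background material in the appendix, so there is no in-paper argument to compare against. Your sketch is, in substance, the canonical proof via privacy-loss random variables and Azuma--Hoeffding, and the overall plan --- bound $\mathbb{E}[Z_i \mid y_{<i}]$ by $\varepsilon(e^\varepsilon-1) \le 2\varepsilon^2$, concentrate the centered martingale within $\varepsilon\sqrt{2k\ln(1/\delta')}$ using the range bound $|Z_i|\le\varepsilon$, convert a high-probability bound on $\sum Z_i$ back into $(\widetilde\varepsilon,\widetilde\delta)$-DP, and pay $k\delta$ for the approximate slack --- is the right one.

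One imprecision worth fixing. Your appeal to a characterization of the form ``$(\varepsilon,\delta)$-DP $\iff$ $|Z_i|\le\varepsilon$ except on an event of probability at most $\delta$'' is not an equivalence: the direction you need (approximate DP $\Rightarrow$ tail bound on the privacy loss) does not hold with the same $\delta$. You already gesture at the correct repair in your last paragraph, but let me state it concretely: the lemma you want (Dwork--Roth, Lemma~3.17, or the decomposition of Kairouz--Oh--Viswanath) says that for any $(\varepsilon,\delta)$-DP $M$ and neighbors $x\sim x'$ there exist auxiliary distributions $P,Q$ satisfying the pure $(\varepsilon,0)$ guarantee with $d_{\mathrm{TV}}(M(x),P)\le\delta$ and $d_{\mathrm{TV}}(M(x'),Q)\le\delta$. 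One then runs the entire Azuma argument on the coupled pure-DP mechanisms $P_i,Q_i$ (where $|Z_i|\le\varepsilon$ holds surely, so no per-round ``bad set'' arises and the adaptivity is handled cleanly by the martingale), and only at the very end pays the $k\delta$ to transport the conclusion back to the real transcript distribution. Phrased this way, the concern in your final paragraph about ``spurious coupling between rounds'' disappears automatically, since the $\delta$-slack never enters the concentration step. Finally, note the paper's constant $2k\varepsilon^2$ in place of the canonical $k\varepsilon(e^\varepsilon-1)$ implicitly assumes $\varepsilon$ is bounded (roughly $\varepsilon\le 1$), which you correctly flag but which should be stated as a hypothesis.
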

\noindent There is also a composition theorem concerning situations where the dataset is partitioned:
\begin{theorem}[Parallel Composition]
    \label{thm:parallel-composition}
    Let $f_1,...,f_k$ be $(\varepsilon,0)$-DP mechanisms and $X$ be a dataset. Suppose $X$ is partitioned into $k$ parts $X_1,...,X_k$ and let $f(X) = (f_1(X_1),...,f_k(X_k))$. Then $f$ is $(\varepsilon,0)$-DP.
\end{theorem}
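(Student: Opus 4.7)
The plan is to exploit the structural fact that changing a single row of $X$ can only affect \emph{one} of the $k$ blocks of the partition. Fix neighboring datasets $X \sim X'$ differing in exactly one row. Since the partition is a function of indices (not of values), there is a unique $j \in [k]$ such that $X_j$ and $X'_j$ are neighboring (in fact also differing in a single row), while $X_i = X'_i$ for every $i \ne j$. The key observation is that the output $f(X)$ is obtained by running the $k$ mechanisms on disjoint inputs using \emph{independent} internal coins, so the coordinates of $f(X)$ are mutually independent random variables, and likewise for $f(X')$.

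Next, I would verify the DP guarantee on product events first. For any measurable rectangle $T = T_1 \times \cdots \times T_k$ in the joint output space, independence gives
\begin{align*}
\Pr[f(X) \in T] &= \prod_{i=1}^{k} \Pr[f_i(X_i) \in T_i], \\
\Pr[f(X') \in T] &= \prod_{i=1}^{k} \Pr[f_i(X'_i) \in T_i].
\end{align*}
For $i \ne j$, the two factors are identical because $X_i = X'_i$. For $i = j$, the $(\varepsilon,0)$-DP guarantee of $f_j$ applied to the neighbors $X_j \sim X'_j$ yields $\Pr[f_j(X_j) \in T_j] \le e^{\varepsilon}\Pr[f_j(X'_j) \in T_j]$. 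Multiplying across all $k$ factors produces exactly
\[
\Pr[f(X) \in T] \le e^{\varepsilon} \Pr[f(X') \in T],
\]
with no additive $\delta$ term because each $f_i$ is pure DP.

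To upgrade from rectangles to arbitrary measurable events $T$ in the product output space, I would invoke a standard measure-theoretic extension: rectangles form a $\pi$-system generating the product $\sigma$-algebra, and the collection of events on which the $e^{\varepsilon}$ inequality holds is closed under countable disjoint unions and complements relative to the product measure; alternatively, one can write $\Pr[f(X) \in T] = \int \mathbbm{1}[(y_1,\dots,y_k) \in T]\, d\mu_1(y_1)\cdots d\mu_k(y_k)$ and apply the pointwise ratio bound inside the integral of the $j$-th coordinate only. Either route yields the inequality for every measurable $T$, completing the proof.

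I do not expect any substantive obstacle: the result is essentially a bookkeeping exercise on top of independence. The only subtlety worth flagging explicitly is the implicit assumption that the $k$ mechanisms use independent randomness, which is what licenses the factorization; the proof would break down if the mechanisms shared random coins, so I would state this assumption up front.
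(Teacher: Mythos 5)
The paper states this parallel-composition theorem as standard background in Appendix B and gives no proof of its own, so there is nothing to match your argument against; judged on its merits, your proof is correct and is the standard one. The two load-bearing observations --- that a single-row change to $X$ perturbs exactly one block $X_j$ of an index-based partition, and that the $k$ mechanisms run on disjoint inputs with independent coins so the joint output law factorizes --- are exactly right, and they are also exactly what the paper needs when it invokes the theorem in Lemma 6.2 (the blocks there are the per-annulus collections of independently seeded instances, so both assumptions you flag are satisfied). One small caution on your extension step: the Dynkin-class route as phrased does not quite work, because the family of events $T$ satisfying $\Pr[f(X)\in T]\le e^{\varepsilon}\Pr[f(X')\in T]$ is closed under countable disjoint unions but \emph{not} under complements (the inequality does not pass to $T^{c}$), so it is not a $\lambda$-system and the $\pi$--$\lambda$ theorem cannot be applied directly. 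Your second route is the one to keep: write $\Pr[f(X)\in T]$ as an iterated integral, apply the $e^{\varepsilon}$ bound to the $y_j$-section of $T$ for each fixed value of the other coordinates (each section is measurable and the DP guarantee of $f_j$ applies to it), and integrate out the remaining coordinates, whose marginals are identical under $X$ and $X'$. With that route made primary, the proof is complete and fully rigorous; no additive $\delta$ appears, matching the $(\varepsilon,0)$ conclusion.
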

\noindent The next theorem dictates that post-processing the output of a DP algorithm cannot degrade its privacy guarantees, as long as the processing does not use information from the original database.
\begin{theorem}[DP is closed under Post-Processing]\label{thm:post processing DP}
Let $\mathcal{A} : U^n \to Y^m$ and $\mathcal{B} : Y^m \to Z^r$ be randomized algorithms, where $U, Y, Z$ are arbitrary sets. If $\mathcal{A}$ is $(\varepsilon,\delta)$--DP, then so is the composed algorithm $\mathcal{B}(\cA(\cdot))$.
\end{theorem}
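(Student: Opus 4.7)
The plan is to reduce to the case where $\mathcal{B}$ is deterministic, prove post-processing there by a preimage argument, and then lift to randomized $\mathcal{B}$ by conditioning on its internal coins. Concretely, I would model any randomized $\mathcal{B}$ as a deterministic map $\mathcal{B}' : Y^m \times \{0,1\}^* \to Z^r$ together with an independent random string $\omega$, so that $\mathcal{B}(y) \stackrel{d}{=} \mathcal{B}'(y,\omega)$ and $\omega$ is independent of $\mathcal{A}$'s internal randomness.

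First I would handle the deterministic case. Fix two neighboring databases $S \sim S'$ and any measurable $T \subseteq Z^r$. Since $\mathcal{B}$ is deterministic, the preimage $T' := \mathcal{B}^{-1}(T) = \{y \in Y^m : \mathcal{B}(y) \in T\}$ is a well-defined measurable subset of $Y^m$, and the events $\{\mathcal{B}(\mathcal{A}(S)) \in T\}$ and $\{\mathcal{A}(S) \in T'\}$ coincide. Applying the $(\varepsilon,\delta)$-DP guarantee of $\mathcal{A}$ to the set $T'$ then yields
\begin{align*}
\Pr[\mathcal{B}(\mathcal{A}(S)) \in T]
= \Pr[\mathcal{A}(S) \in T']
\le e^{\varepsilon}\Pr[\mathcal{A}(S') \in T'] + \delta
= e^{\varepsilon}\Pr[\mathcal{B}(\mathcal{A}(S')) \in T] + \delta,
\end{align*}
which is exactly the definition of $(\varepsilon,\delta)$-DP for $\mathcal{B} \circ \mathcal{A}$.

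For the randomized case, I would condition on $\omega$. For each fixed value $\omega = w$, the map $y \mapsto \mathcal{B}'(y,w)$ is deterministic, so the previous step gives
\begin{align*}
\Pr[\mathcal{B}'(\mathcal{A}(S),w) \in T]
\le e^{\varepsilon}\Pr[\mathcal{B}'(\mathcal{A}(S'),w) \in T] + \delta.
\end{align*}
Since $\omega$ is independent of $\mathcal{A}$, integrating both sides against the distribution of $\omega$ preserves the inequality by linearity of expectation, giving the desired bound on $\Pr[\mathcal{B}(\mathcal{A}(S)) \in T]$. The only mild subtlety, which I would note but not belabor, is the measurability of $T' = \mathcal{B}^{-1}(T)$ and the applicability of Fubini when averaging over $\omega$; both follow from standard conventions on $(Y^m, Z^r)$ as measurable spaces with $\mathcal{B}$ a measurable kernel, so there is no real obstacle here — the content of the theorem is essentially the preimage identity combined with independence of post-processing randomness.
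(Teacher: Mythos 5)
Your proof is correct and is the standard argument (reduce to deterministic post-processing via the preimage identity, then average over the post-processor's independent coins, i.e., write the randomized map as a mixture of deterministic ones). The paper itself states this theorem as known background in its differential-privacy appendix and gives no proof, so there is nothing to compare against; your write-up would serve as a complete and correct proof of the stated fact.
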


The following theorem showcases the power of DP algorithms in learning. 
\begin{theorem}[DP and Generalization, \citep{bassily2016algorithmic, dwork2015preserving}]
\label{thm:dp-generalization}
Let $\varepsilon \in (0,1/3)$ and $\delta \in (0,\varepsilon/4)$. Let $\mathcal{A}$ be a $(\varepsilon,\delta)$--DP algorithm that operates on databases in $X^n$ and outputs $m$ predicate functions $h_i : X \to \{0,1\}$ for $i \in [m]$. Then, if $D$ is any distribution over $X$ and $S$ consists of $n\geq \frac{1}{\varepsilon^2}\cdot \log\left(\frac{2\varepsilon m}{\delta}\right)$ iid samples from $D$, we have for all $i \in [m]$ that:
\begin{align*}
\Pr_{\substack{S\sim D^n\\ h_i\gets \mathcal{A}(S)}}\left[\left|\frac{1}{|S|}\sum\limits_{x\in S}h_i(x)-\underset{x\sim D}{\mathbb{E}}[h_i(x)]\right| \geq 10\varepsilon\right] \leq \frac{\delta}{\varepsilon}
\end{align*}
\end{theorem}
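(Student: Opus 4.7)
The plan is to follow the monitor/transference argument from the DP-generalization literature (\citet{dwork2015preserving}, \citet{bassily2016algorithmic}). At a high level, DP forces hypotheses produced by $\cA$ on a sample $S$ to behave almost as if drawn independently of $S$; a large generalization gap would yield a distinguisher violating $(\varepsilon,\delta)$-DP.

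I would fix $i \in [m]$ and reduce the claim to bounding $\Pr[B_i] \leq \delta/\varepsilon$, where $B_i$ is the event $\{|\frac{1}{n}\sum_{x \in S} h_i(x) - \EE_{x\sim D}[h_i(x)]| \geq 10\varepsilon\}$. The technical workhorse is the following DP stability inequality: for any $[0,1]$-valued $\phi$, any coordinate $j \in [n]$, and $x' \sim D$ independent of $S$,
$$
\EE_{S, \cA}\!\bigl[\phi(S_j, \cA(S))\bigr] \le e^{\varepsilon}\, \EE_{S, \cA, x'}\!\bigl[\phi(x', \cA(S))\bigr] + \delta,
$$
which follows by applying $(\varepsilon,\delta)$-DP to the neighboring inputs $S$ and $S^{(j)} = (S_1, \ldots, S_{j-1}, x', S_{j+1}, \ldots, S_n)$ together with the exchangeability of $S_j$ and $x'$. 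Summing over $j$ converts the empirical mean on $S$ into the true mean under $D$, and crucially avoids the catastrophic $e^{n\varepsilon}$ loss of group privacy.

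I would then apply the inequality to a monitor $\phi(x, y) = \indic[h_{i^\star(y)}(x) = 1]$, where $i^\star(y) \in [m]$ selects the output hypothesis whose empirical mean on $S$ deviates most from its $D$-expectation. For any fixed $h$, Hoeffding gives $\Pr_S[|\widehat{h}(S) - \EE_D h| \geq 10\varepsilon] \leq 2\exp(-200 n\varepsilon^2)$, which with $n \geq \log(2\varepsilon m/\delta)/\varepsilon^2$ and a union bound over the $m$ candidate outputs is at most a small multiple of $\delta/(\varepsilon m)$. Combining this with the stability inequality and using $\delta<\varepsilon/4$ to absorb the additive $\delta$ term yields $\Pr[B_i]\le \delta/\varepsilon$. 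The main obstacle is that a direct application of the stability inequality to $\indic[B_i]$ alone only gives $O(\delta)$; recovering the sharper $\delta/\varepsilon$ bound requires the monitor construction, which converts a high-probability event into an expectation to which the stability inequality applies cleanly, with the $10\varepsilon$ threshold providing the slack needed to absorb the $e^\varepsilon-1$ factor and the $m$-way selection loss. I would track constants carefully through this calculation following the argument of \citet{bassily2016algorithmic}.
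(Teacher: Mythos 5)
This statement is not proved in the paper at all: it is imported verbatim from the adaptive data analysis literature and cited as \citep{bassily2016algorithmic, dwork2015preserving} in the background appendix, so there is no in-paper proof to compare your argument against. Your sketch correctly identifies the two standard ingredients of the literature's proof --- the per-coordinate stability inequality $\EE[\phi(S_j,\cA(S))]\le e^{\varepsilon}\EE[\phi(x',\cA(S))]+\delta$ obtained from $(\varepsilon,\delta)$-DP plus exchangeability, and the monitor construction --- so the overall route is the right one.

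There are, however, two concrete gaps in the middle of your argument. First, you invoke Hoeffding ``for any fixed $h$'' and then union-bound over the $m$ output hypotheses; but the $h_i$ are produced by $\cA(S)$ and hence depend on $S$, so they are not fixed, and Hoeffding plus a union bound over the outputs is precisely the naive argument that fails in the adaptive setting and that this theorem exists to replace. In the actual proofs, the concentration step is applied either to fresh/resampled data inside the monitor's analysis or via a description-length-style union bound over a data-independent family; the $\log m$ term in the sample-complexity requirement $n\ge\varepsilon^{-2}\log(2\varepsilon m/\delta)$ enters there, not through a union bound over the realized outputs. Second, your sketch does not account for where the $\delta/\varepsilon$ failure probability comes from. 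A single application of the stability inequality to the indicator of the bad event only yields a bound of order $\varepsilon+\delta$ on its probability, not $\delta/\varepsilon$; the amplification requires running the mechanism on roughly $T=\lceil\varepsilon/\delta\rceil$ independent datasets and having the monitor select the worst (hypothesis, dataset) pair, so that a per-dataset failure probability exceeding $\delta/\varepsilon$ would force the monitor's expected deviation above what the stability inequality permits. Without that multi-dataset step the quantitative conclusion of the theorem does not follow. Since the paper uses this theorem as a black box (in the proof of Lemma~\ref{lem:majority_is_correct}), the cleanest fix is simply to cite the transfer theorem of \citet{bassily2016algorithmic} rather than reprove it; if you do want a self-contained proof, you need to spell out the multi-dataset monitor and relocate the Hoeffding/union-bound step to data-independent quantities.
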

\noindent 
In other words, a privately generated predicate is a good estimator of its expectation under any distribution on the input data. A final property of privacy that we will use is a boosting technique through sub-sampling:
\begin{theorem}[Privacy Amplification by Subsampling, \citep{bun2015differentially, cherapanamjeri2023robust}]
\label{lem:privacy_ampl_subsampling}
Let $\cA$ be an $(\eps,\delta)$--DP algorithm operating on databases of size $m$.  For $n \geq 2m$, consider an algorithm that for input a database of size $n$, it subsamples (with replacement) $m$ rows from the database and runs $\cA$ on the result. Then this algorithm is $(\varepsilon', \delta')$--DP for
\begin{align*}
    \varepsilon' = \frac{6\varepsilon m}{n}\,\text{ and }\,\delta'=\exp\left(\frac{6\varepsilon m}{n}\right)\cdot\frac{4m}{n}\cdot \delta
\end{align*}
\end{theorem}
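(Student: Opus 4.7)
The plan is to prove this privacy amplification bound via a coupling argument combined with the group privacy property of differential privacy and a Chernoff-style tail bound on the subsampling distribution. First, fix two neighboring databases $D \sim D'$ of size $n$ that differ in a single row, at some index $i^*$. I would couple the two invocations of the subsampled algorithm by drawing the \emph{same} random multiset of indices $I$ of size $m$, sampled uniformly with replacement from $[n]$, and then populating the subsample with the corresponding rows of $D$ versus $D'$. Let $K$ denote the number of times $i^*$ appears in $I$; then $K \sim \text{Bin}(m, 1/n)$, and conditioned on $K = k$, the resulting subsamples $S_D$ and $S_{D'}$ are databases of size $m$ that differ in exactly $k$ rows.

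The key workhorse is group privacy: any $(\varepsilon,\delta)$--DP algorithm $\cA$ is $(k\varepsilon,\, k e^{(k-1)\varepsilon}\delta)$-indistinguishable on databases differing in $k$ rows. For any measurable output event $E$, the next step is to decompose
\begin{align*}
\Pr[\cA(S_D) \in E] \;=\; \sum_{k=0}^{m} \Pr[K=k] \cdot \Pr[\cA(S_D) \in E \mid K=k].
\end{align*}
The $k=0$ contribution agrees exactly with the corresponding term for $S_{D'}$, since when $i^*$ is not sampled the two subsamples are identically distributed. For each $k \geq 1$, conditional group privacy replaces $\Pr[\cA(S_D) \in E \mid K=k]$ by $e^{k\varepsilon} \Pr[\cA(S_{D'}) \in E \mid K=k] + k e^{(k-1)\varepsilon} \delta$.

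To convert the resulting bound into the required $(\varepsilon',\delta')$-DP form, I would split the sum at a carefully chosen threshold $k^* = \Theta(m/n)$. For $k \leq k^*$, the multiplicative blowup $e^{k\varepsilon}$ is uniformly absorbed into $e^{\varepsilon'}$, folding those terms cleanly into $e^{\varepsilon'}\Pr[\cA(S_{D'}) \in E]$. For $k > k^*$, the mass $\Pr[K = k]$ decays geometrically past the mean $m/n \leq 1/2$ (using the hypothesis $n \geq 2m$), so a standard Chernoff bound on $\text{Bin}(m,1/n)$ controls the tail sum $\sum_{k > k^*}\Pr[K=k]\,e^{k\varepsilon}$, which is absorbed into $\delta'$. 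The accumulated additive $\delta$-terms from group privacy aggregate to $\mathbb{E}[K\,e^{(K-1)\varepsilon}]\cdot \delta$, which evaluates under the same assumption to at most $(4m/n)\exp(6\varepsilon m/n)\,\delta$, matching the claimed prefactor.

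The main technical obstacle is pinning down the exact constants $\varepsilon' = 6\varepsilon m/n$ and $\delta' = \exp(6\varepsilon m/n)\cdot(4m/n)\cdot\delta$: the factor $6$ reflects the looseness inherent in sampling with replacement (compared to the tighter factor of about $2$ for sampling without replacement, where the coupling avoids coincidences), and the factor $4$ comes from tracking $\mathbb{E}[K\,e^{(K-1)\varepsilon}]$ carefully against the tail truncation. A secondary subtlety is that the subsample must be treated as a length-$m$ sequence rather than a set to avoid implicit deduplication in the coupling. Finally, symmetry under swapping $D$ and $D'$ is immediate from the coupling, yielding $(\varepsilon',\delta')$-DP in both directions.
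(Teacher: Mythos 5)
The paper does not actually prove this lemma; it is stated as a cited background result from the differential-privacy literature (Bun et al., Cherapanamjeri et al.), so there is no in-paper proof to compare against. Your proposal therefore has to stand on its own, and unfortunately it contains a genuine gap.

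Your decomposition conditions on $K = k$ (the number of times the differing index $i^*$ is sampled) and, within each fiber, applies group privacy to compare $\Pr[\cA(S_D)\in E\mid K=k]$ with $\Pr[\cA(S_{D'})\in E\mid K=k]$. The problem is that this per-fiber comparison cannot produce any amplification. For $k=1$ alone, group privacy gives a multiplicative factor of $e^{\varepsilon}$, while the target factor is $e^{\varepsilon'}$ with $\varepsilon' = 6\varepsilon m/n$, which is strictly smaller than $\varepsilon$ whenever $m < n/6$ --- precisely the regime where amplification is meaningful. Your plan is to push the $k\ge 1$ terms into the additive slack $\delta'$, but that mass is $\sum_{k\ge 1}\Pr[K=k]\,e^{k\varepsilon}\Pr[\cA(S_{D'})\in E\mid K=k]$, which can be on the order of $(m/n)\,e^{\varepsilon}$ because the conditional probability $\Pr[\cA(S_{D'})\in E\mid K\geq 1]$ is in general $\Theta(1)$ (nothing prevents $E$ from being an event the algorithm almost always lands in given that $i^*$ was sampled). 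By contrast, $\delta' = \Theta\bigl((m/n)\,\delta\bigr)$ scales with $\delta$, so the tail cannot be absorbed there unless $\delta = \Omega(1)$. The Chernoff tail bound on $\Bin(m,1/n)$ does not help: the $k=1$ term, not the far tail, is the obstruction.

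The missing idea is the one that actually powers subsampling amplification: you must also compare the ``$i^*$ was sampled'' conditional law to the ``$i^*$ was not sampled'' conditional law \emph{for the same database}. Concretely, writing $\mu = \mathrm{Law}(S_D) = (1-p)\mu_0 + p\,\mu_1$ and $\mu' = \mathrm{Law}(S_{D'}) = (1-p)\mu_0 + p\,\mu_1'$ with $p = \Pr[K\ge 1]$ and $\mu_0$ the common $K=0$ conditional, the crucial step is to establish that $\cA(\mu_1)$ is close (in the DP sense, up to group-privacy factors for multiplicity $K$) to $\cA(\mu_0)$ as well as to $\cA(\mu_1')$. Only then does the convexity / mixture argument (the so-called advanced joint convexity lemma of Balle--Barthe--Gaboardi, or the analogous manipulation in the works cited here) yield the overlap-only blowup $\varepsilon' = \log\bigl(1 + p(e^{\tilde\varepsilon}-1)\bigr) = O(p\,\tilde\varepsilon)$ and $\delta' = O(p\,\tilde\delta)$. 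Without the $\mu_1$--versus--$\mu_0$ comparison, your per-fiber group-privacy bound produces no gain over $\varepsilon$ and the proof cannot close. Your coupling, the use of group privacy with multiplicity $K$, and the observation that the subsample should be treated as a length-$m$ sequence are all sound ingredients; they just need to be fed through the mixture argument rather than a fiberwise comparison.
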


\subsection{Robustification via Privacy Over Internal Randomness}
A recurring tool used in this work is that we can obtain robustness against adaptive queries by enforcing stability with respect to the algorithm’s internal randomness, rather than the input dataset \citep{hassidim2022adversarially}. Concretely, we view the random coins used by the algorithm as a (hidden) database and require that the observable transcript of interaction be differentially private with respect to changes in this randomness. DP then implies that no adaptively chosen sequence of queries can significantly depend on or overfit to particular random choices made during setup or execution. We explicitly utilize and refine such arguments in the proof of \autoref{appendix:decider-proof}.

\section{Proof of \autoref{thm:robust-decider-thm}}
\label{appendix:decider-proof}
In this section we include a formal analysis of the construction of \autoref{alg:robust-decider}. We prove the following theorem:

\begin{theorem}
\label{thm:robust-decider-thm}
Let $\mathcal{A}$ be an oblivious decider algorithm for ANN that uses $s(n)$ space and $t(n)$ time per query. Let $\varepsilon=0.01$, $\delta < \varepsilon/4$ and suppose we set $L = 24\varepsilon^{-1}\log^{1.5}(1/\delta)\cdot \sqrt{2Q}$ and $k =\log(Q/\delta)$.
Then, the algorithm $\mathcal{A}_{\text{dec}}$ is an adversarially robust decider that succeeds with probability at least $1-\Theta(\delta)$ using $s(n)\cdot \widetilde{O}\left(\sqrt{Q}\right)$ bits of space and $\widetilde{O}\left(t(n)\right)$ time per query. 
\end{theorem}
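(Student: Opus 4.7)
The plan is to follow the Hassidim et al.~framework, treating the independent random strings $\sigma_1, \dots, \sigma_L$ of the $L$ copies as a database on which the transcript of releases $\widehat{N}_1, \dots, \widehat{N}_Q$ acts as a multi-query mechanism. The argument has two layers: first show that this mechanism is $(\varepsilon, \delta)$-differentially private with respect to the internal-randomness database, then invoke the generalization properties of DP to conclude that each $\widehat{N}_i$ is a faithful estimator of $p_i := \Pr_{\sigma}[\mathcal{A}(q_i;\sigma)=1]$ even when $q_i$ is chosen adaptively. Since the oblivious decider is correct on every fixed query with probability at least $2/3$, $p_i$ lies on the correct side of $1/2$ by a gap of at least $1/6$, so the thresholding $\mathbbm{1}[\widehat{N}_i > 1/2]$ recovers the correct bit.

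For the privacy analysis, I would first note that the statistic $\frac{1}{k}\sum_{j \in J_i} a_{ij}$ has sensitivity $1/k$ with respect to changing a single $\sigma_j$, so adding $\mathrm{Lap}(1/k)$ yields a $(1,0)$-DP release conditioned on the subsample $J_i$ (\autoref{thm:laplace-mechanism}). Treating the draw of $J_i$ from $[L]$ as a subsampling step and applying \autoref{lem:privacy_ampl_subsampling} gives per-query privacy of order $O(k/L)$ with respect to the full internal-randomness database. Composing $Q$ such releases via advanced composition (\autoref{thm:advanced composition}) yields total privacy $\widetilde{\varepsilon} = O\bigl(\tfrac{k}{L}\sqrt{Q\log(1/\delta)}\bigr) + O\bigl(Q(k/L)^2\bigr)$ and $\widetilde{\delta} = \delta$. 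Substituting $k = \log(Q/\delta)$ and $L = 24\varepsilon^{-1}\log^{1.5}(1/\delta)\sqrt{2Q}$ keeps both terms below $\varepsilon$. Since the final output bit is a deterministic function of $\widehat{N}_i$, closure under post-processing (\autoref{thm:post processing DP}) preserves the $(\varepsilon,\delta)$-DP guarantee for the full observable transcript.

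From DP I would deduce correctness. By the generalization theorem for DP (\autoref{thm:dp-generalization}), viewing the i.i.d.~$\sigma_j$'s as the sample, the released $\widehat{N}_i$ is within $O(\varepsilon)$ of $p_i$ simultaneously for all $Q$ queries, with probability at least $1 - \delta/\varepsilon = 1 - \Theta(\delta)$. Since the oblivious decider answers each fixed query correctly with probability at least $2/3$, we have $p_i \ge 2/3$ on positive instances and $p_i \le 1/3$ on negative instances. The remaining Chernoff fluctuation of the subsample mean around the population mean is $O(1/\sqrt{k})$, and by \autoref{lem:bounded max of laplace distribution} the maximum Laplace perturbation across $Q$ queries is at most $O(\log(Q/\delta)/k)$ with probability $1 - \delta$. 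Choosing the hidden constants in $k = \Theta(\log(Q/\delta))$ large enough ensures that the sum of these deviations stays strictly below $1/6$, so $\mathbbm{1}[\widehat{N}_i > 1/2]$ returns the correct bit on every query.

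Complexity is immediate: maintaining $L = \widetilde{O}(\sqrt{Q})$ independent instances of $\mathcal{A}$ costs $s(n) \cdot \widetilde{O}(\sqrt{Q})$ bits, and each query evaluates $k = O(\log(Q/\delta))$ oblivious deciders plus a constant-time Laplace draw, yielding $\widetilde{O}(t(n))$ time per query. The main obstacle I anticipate is not any single step but the simultaneous parameter tuning: $L$ must be large enough for advanced composition to deliver $(\varepsilon,\delta)$-DP, $k$ large enough for both the Chernoff concentration of the subsample and for the Laplace noise of scale $1/k$ to be negligible after a union bound over $Q$ queries, yet $k$ small enough relative to $L$ so that the subsampling amplification genuinely improves the per-query privacy. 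Verifying that the prescribed constants satisfy all of these bounds simultaneously, together with the implicit $L \geq 2k$ requirement of \autoref{lem:privacy_ampl_subsampling}, is the bookkeeping heart of the proof.
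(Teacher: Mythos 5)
Your proposal follows essentially the same route as the paper's proof: Laplace mechanism plus subsampling amplification plus advanced composition establishes $(\varepsilon,\delta)$-DP with respect to the random strings, the Hassidim et al.\ generalization-of-DP argument transfers the oblivious per-query guarantee to the adaptive setting, and a Chernoff bound on the subsample together with the Laplace-tail bound handles the final thresholding. One small imprecision: the DP generalization theorem controls the empirical fraction over \emph{all} $L$ copies (via a meta-algorithm that post-processes the transcript into predicates on the $\sigma_j$'s), not the released $\widehat{N}_i$ directly, but you recover the right statement in your third paragraph by separately charging the subsample Chernoff error and the Laplace noise.
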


First, we show that the algorithm is differentially private with respect to its input randomness.

\begin{lemma}
\label{lemma:dec-dp}
Let $\varepsilon=0.01$ and $\delta < \varepsilon/4$. Algorithm $\mathcal{A}_{\text{dec}}$ is $(\varepsilon,\delta)$--DP with respect to the string of randomness ${R}$.
\end{lemma}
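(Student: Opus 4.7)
The plan is to view the bitstring $R = \sigma_1 \circ \cdots \circ \sigma_L$ as a database with $L$ rows (so two neighboring instances of $R$ differ in exactly one $\sigma_j$) and to build up transcript-level privacy with respect to $R$ from three standard building blocks: the Laplace mechanism, privacy amplification by subsampling, and advanced composition.

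First, I would fix a single round $i$ and work conditionally on the random subsample $J_i$. Each answer $a_{ij}$ is a deterministic function of $\sigma_j$ alone, so the statistic $N_i = \frac{1}{k}|\{j \in J_i : a_{ij}=1\}|$ from line~\ref{line: robust decider: set statistic}, viewed as a function of the size-$k$ sampled database, has sensitivity exactly $1/k$. Therefore, by \autoref{thm:laplace-mechanism}, adding $\text{Lap}(1/k)$ noise in line~\ref{line: robust decider: Laplace mechanism} yields a $(1,0)$-DP release on the subsampled database, and the subsequent thresholding $\mathbbm{1}[\widehat{N}_i > 1/2]$ is a fixed post-processing step that by \autoref{thm:post processing DP} does not degrade the privacy parameter.

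Second, I would lift this inner guarantee back to the original database $R$ of size $L$ by applying privacy amplification by subsampling. Each round forms $J_i$ by drawing $k$ indices with replacement from $[L]$ and then invokes the $(1,0)$-DP routine above, which fits exactly into the framework of \autoref{lem:privacy_ampl_subsampling} (and the condition $L \geq 2k$ is easily satisfied under the chosen parameters), yielding a per-query guarantee of $(\varepsilon_0', 0)$-DP with respect to $R$, where $\varepsilon_0' = 6k/L$.

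Third, I would compose across the $Q$ adaptively chosen rounds via advanced composition (\autoref{thm:advanced composition}) with slack parameter $\delta' = \delta$, obtaining a transcript-level guarantee of $(\widetilde{\varepsilon}, \delta)$-DP with
\[
\widetilde{\varepsilon} = \varepsilon_0' \sqrt{2Q \ln(1/\delta)} + 2Q (\varepsilon_0')^2.
\]
Plugging in $L = 24 \varepsilon^{-1} \log^{1.5}(1/\delta)\sqrt{2Q}$ and $k = \log(Q/\delta)$ and checking that both summands are at most $\varepsilon/2$ then closes the proof. The main subtlety to watch is the per-round sensitivity under with-replacement sampling: a single change in $R$ can in principle influence multiple slots of $J_i$ (whenever the modified index is drawn more than once), but this is exactly the scenario that the subsampling lemma is designed to absorb, so the cleanest route is to apply the Laplace mechanism with $J_i$ held fixed and then treat \autoref{lem:privacy_ampl_subsampling} as a black box, rather than trying to track an end-to-end sensitivity by hand.
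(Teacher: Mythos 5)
Your proposal is correct and follows the paper's proof essentially step for step: Laplace mechanism with sensitivity $1/k$ plus post-processing for the per-round release, privacy amplification by subsampling to lift to $(6k/L,0)$-DP with respect to $R$, and advanced composition over the $Q$ adaptive rounds. In fact you are slightly more careful than the paper on two points — stating explicitly that the Laplace step is $(1,0)$-DP with respect to the size-$k$ subsample before amplification (the paper directly says ``with respect to $R$''), and writing the second advanced-composition term as $2Q(\varepsilon_0')^2$ in accordance with \autoref{thm:advanced composition} where the paper's displayed formula has $Q(\varepsilon')^2$ — but these are refinements of the same argument, not a different route.
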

\begin{proof}
We analyze the privacy of the algorithm $\mathcal{A}_{\text{dec}}$ given in Algorithm \ref{alg:robust-decider} with respect to the string of randomness $R$, which we interpret as its input. 
Suppose we let 
$$
\varepsilon'= \frac{\varepsilon}{2\sqrt{2Q\ln(1/\delta)}}
$$
For all $i \in [Q]$, we claim that the response to query $q_i$ is $(\varepsilon',0)$--DP with respect to $R$. This is because the statistic $N_i$ defined in Line \ref{line: robust decider: set statistic} of Algorithm \ref{alg:robust-decider} has sensitivity $1/k$ and therefore by \autoref{thm:laplace-mechanism}, after applying the Laplace mechanism in Line \ref{line: robust decider: Laplace mechanism}, we have that releasing $\widehat{N}_i$ is 
$(1,0)$--DP with respect to the strings $R$. The binary output based on 
comparing $\widehat{N}_i$ with the constant threshold $1/2$ is still  $(1,0)$-DP by post-processing (\autoref{thm:post processing DP}). 

Since $L \geq 2k$, using the amplification by sub-sampling property (\autoref{lem:privacy_ampl_subsampling}), we get that each iteration is $(\varepsilon',0)$--DP, because for large enough $Q$ we have:
\begin{align*}
\frac{6k}{L} &= \frac{6\cdot 0.01(\log Q+\log\frac{1}{\delta})}{24\log\frac{1}{\delta}\sqrt{2Q\log\frac{1}{\delta}}}\\
&=\frac{6\varepsilon\log\frac{1}{\delta}+6\varepsilon\log Q}{24\cdot \log\frac{1}{\delta}\sqrt{2Q\ln\left(\frac{1}{\delta}\right)}} \tag{Since $\varepsilon = 0.01$} \\
&< \frac{2\varepsilon}{4\sqrt{2Q\ln\left(\frac{1}{\delta}\right)}} \\
&= \varepsilon'
\end{align*}
Finally, by adaptive composition (\autoref{thm:advanced composition}), 
after $Q$ adaptive steps our resulting algorithm is $(\varepsilon'',\delta)$-DP where:
\begin{align*}
\varepsilon'' = \varepsilon' \sqrt{2Q\ln\left(\frac{1}{\delta}\right)} + Q(\varepsilon')^{2} = \frac{\varepsilon}{2} + \frac{\varepsilon^2}{4\ln\left(\frac{1}{\delta}\right)} \leq \varepsilon
\end{align*}
for $\varepsilon \leq 2\ln\delta^{-1}$, which is satisfied for $\delta \in (0,0.0025)$. Thus, Algorithm $\mathcal{A}_{\text{dec}}$ is $(\varepsilon,\delta)$--DP with respect to its inputs -- the random strings $R$.
\end{proof}
\noindent
Next, we show 
that a majority of the data structures $\mathcal{D}_i$ output accurate verdicts with high probability, even against adversarially generated queries.\\
\begin{lemma}
\label{lem:majority_is_correct}
With probability at least $1-\delta$, for all $i \in [Q]$, at least $0.8L$ of the answers $a_{ij}$ are accurate responses to the decision problem with query $q_i$.
\end{lemma}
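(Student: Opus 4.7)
My plan is to leverage the differential privacy guarantee established in Lemma \ref{lemma:dec-dp} in tandem with the DP generalization theorem (Theorem \ref{thm:dp-generalization}), viewing the tuple of random strings $R=(\sigma_1,\ldots,\sigma_L)$ as an i.i.d.\ database drawn from the distribution over internal coins of $\mathcal{A}$. The conceptual point is that fixing $R$ pins down all $L$ verdicts, so an adaptive adversary can only exploit $R$ through whatever the transcript reveals; DP bounds this leakage, and generalization transfers the oblivious per-query accuracy of $\mathcal{A}$ into a statement about the empirical correctness rate across the $L$ copies.

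First I would amplify the oblivious success probability of $\mathcal{A}$ to at least $0.9$ on any fixed query, by replacing each $\mathcal{D}_j$ with the majority vote of a constant number of independent runs; this inflates the per-copy space and time only by a constant factor. Next, for each round $i\in[Q]$ I define the predicate
\begin{align*}
    h_i(\sigma) \;:=\; \mathbbm{1}\bigl[\mathcal{A}(S;\sigma)\text{ correctly decides }q_i\bigr].
\end{align*}
Because $q_i$ is a deterministic function of the transcript $T_{i-1}$, which by Lemma \ref{lemma:dec-dp} is $(\varepsilon,\delta)$-DP as a function of $R$, the map $R\mapsto(h_1,\ldots,h_Q)$ is itself an $(\varepsilon,\delta)$-DP mechanism on the database $R$ by post-processing (Theorem \ref{thm:post processing DP}).

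I would then invoke Theorem \ref{thm:dp-generalization} with $m=Q$ predicates and $\varepsilon=0.01$ to obtain, for every $i\in[Q]$,
\begin{align*}
\left|\tfrac{1}{L}\sum_{j=1}^{L}h_i(\sigma_j)\;-\;\mathbb{E}_\sigma[h_i(\sigma)]\right|\;\leq\;10\varepsilon\;=\;0.1,
\end{align*}
except with probability at most $\delta/\varepsilon$. Combined with the amplified oblivious lower bound $\mathbb{E}_\sigma[h_i(\sigma)]\geq 0.9$ and a union bound over $i\in[Q]$ (after re-parameterizing the DP-$\delta$ to $\delta\varepsilon/Q$, which only inflates $L$ by a $\log$-factor absorbed into the $\log^{1.5}(1/\delta)$ term of the setting), this gives that with probability at least $1-\delta$, in every round at least $0.8L$ of the answers $a_{ij}$ are correct, which is precisely the claim.

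The main obstacle will be reconciling the adaptive generation of the $h_i$'s with the otherwise non-adaptive framing of the generalization theorem; I would resolve this by taking the adversary to be deterministic without loss of generality, so the entire sequence $(h_1,\ldots,h_Q)$ becomes a single DP output of the database $R$, and Theorem \ref{thm:dp-generalization} applies off the shelf. A secondary (routine) check is that the chosen $L=\Theta(\varepsilon^{-1}\log^{1.5}(1/\delta)\sqrt{Q})$ meets the sample-complexity hypothesis $L\geq \varepsilon^{-2}\log(2\varepsilon Q/\delta')$ of Theorem \ref{thm:dp-generalization} after the re-parameterization, which holds comfortably since $\sqrt{Q}$ dominates $\log(Q/\delta)$ in the regime of interest.
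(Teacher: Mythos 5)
Your proposal is correct and follows essentially the same route as the paper: treat $R=(\sigma_1,\dots,\sigma_L)$ as an i.i.d.\ database, observe that the meta-algorithm producing the per-round correctness predicates is a post-processing of the $(\varepsilon,\delta)$-DP transcript from Lemma~\ref{lemma:dec-dp}, and apply the DP generalization theorem (\autoref{thm:dp-generalization}) to transfer the oblivious per-query accuracy into a statement about the empirical fraction of correct copies. The only deviation is cosmetic: you define $h_i(\sigma)$ as a \emph{per-round} correctness indicator, whereas the paper's predicate $h_{\vec q_i}$ is the conjunction $\bigwedge_{j\le i}\{\mathcal A(\sigma)\text{ correct on }q_j\}$; the paper's choice implicitly requires the oblivious failure probability to be amplified down to $O(1/Q)$ so that the cumulative expectation stays above $0.9$, while your per-round predicate needs only the constant-factor boost to $0.9$ that you carry out explicitly. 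Making that amplification step and the union bound over the $Q$ rounds explicit is actually cleaner than the paper's presentation, but the underlying argument — post-processing preserves DP, generalization bounds the empirical-vs.-true gap by $10\varepsilon$, hence $\tfrac1L\sum_j h_i(\sigma_j)\ge 0.9-0.1=0.8$ — is the same.
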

\begin{proof}
The central idea of the proof, as it appeared in \citep{hassidim2022adversarially}, is to imagine the adversary $\mathcal{B}$ as a post-processing mechanism that tries to guess which random strings lead $\mathcal{A}$ to making a mistake.

Imagine a wrapper \textit{meta-algorithm} $\mathcal{C}$, outlined as Algorithm \ref{alg:meta-algorithm}, that takes as input the random string $R = \sigma_1\circ r_2\circ\cdots\circ \sigma_L$, which is generated according to some unknown, arbitrary distribution $\mathcal{R}$. This algorithm $\mathcal{C}$ simulates the game between $\mathcal{A}_{\text{dec}}$ and $\mathcal{B}$: It first runs $\mathcal{B}$ to provide some input dataset $S \subseteq U$ to $\mathcal{A}_\text{dec}$, which is seeded with random strings in $R$. Then, $\mathcal{C}$ uses $\mathcal{B}$ to query $\mathcal{A}_\text{dec}$ adaptively with queries $(q_1,...,q_Q)$. At the same time, it simulates $\mathcal{A}_\text{dec}$ to receive answers $a_1,...,a_Q$ that are fed back to $\mathcal{B}$. By \autoref{lemma:dec-dp}, the output $(a_1,...,a_Q)$ is produced privately with respect to $R$, regardless of how the adversary makes their queries. 

At every step $i$, once $\mathcal{B}$ has provided $\vec{q_i} = (q_1,...,q_i)$ and has gotten back $i$ answers $(a_1,...,a_i)$ from $\mathcal{A}_{\text{dec}}$, our meta-algorithm $\mathcal{C}$ \textit{post-processes} this transcript $\{(q_j,a_j)\}_{j=1}^i$ to generate a predicate $h_{\vec{q_i}}:\{0,1\}^*\to\{0,1\}$. This predicate tells which strings $\sigma \in \{0,1\}^*$ lead algorithm $\mathcal{A}$ to successfully answer query prefix $\vec{q_i}$ on input dataset $S$, in the decision-problem regime. More formally:
\begin{align}
    \label{eq:predicate_definition}
    h_{\vec{q_i}}(\sigma) := \bigwedge\limits_{1\leq j\leq i}\left\{\mathcal{A}(\sigma)(S,q_j) \in D(S,q_j,c,r)
    \right\}
\end{align}
Note that this definition captures the intermediate case in which any answer of the algorithm is considered correct.

\begin{algorithm}[ht]
\caption{The meta-algorithm $\mathcal{C}$, ran for $i$ steps}
\label{alg:meta-algorithm}
\begin{algorithmic}[1]
    \State \textbf{Inputs:} Random string $R = \sigma_1\circ \sigma_2\circ\cdots \sigma_L$, descriptions of Algorithms $\mathcal{A}_{\text{dec}}$ and $\mathcal{B}$.
    \vspace{1mm}
    \State Simulate $B$ to obtain a dataset $S \subset U$. 
    \State Initialize $\mathcal{A}_\text{dec}$ with random strings $(\sigma_1,...,\sigma_L)$ and the dataset $S$. 
    \For{$i \in Q$}
        \State Simulate $\mathcal{B}$ to produce a query $q_j$ based on the prior history of queries and answers.
        \State Simulate $\mathcal{A}$ on query $q_j$ to produce an answer.
        \State Compute (via post-processing of query/answer history) predicate $h_{\vec{q_i}}(\cdot)$ from \autoref{eq:predicate_definition}.
    \EndFor
    \State \textbf{Output} $(h_{\vec{q_1}},...,h_{\vec{q_Q}})$. 
\end{algorithmic}
\end{algorithm}

Generating these predicates is possible because $h_{\vec{q_i}}$ only depends on $\vec{q_i}$, which is a substring of the output history that $\mathcal{C}$ has access to. As a result, $\mathcal{C}$ can produce $h_{\vec{q_i}}$ by (say) calculating its value for each value of $R$ exhaustively\footnote{We assume $\mathcal{C}$ has unbounded computational power.}. Because $\mathcal{C}$ is only allowed to post-process the query/answer vector $(q_1,a_1,...,q_i,a_i)$, the output predicate $h_{\vec{q_i}}$ is also generated in a $(\varepsilon,\delta)$--DP manner with respect to $\sigma_1,...,\sigma_L$, by \autoref{thm:post processing DP}. 

Given these $Q$ privately generated predicates, and since $L > \frac{1}{\varepsilon^2}\log\frac{2\varepsilon Q}{\delta}$ for large enough $Q$, by the generalization property of DP (\autoref{thm:dp-generalization}) we have that\footnote{Assuming $\delta \in (0,\varepsilon/4).$} with probability at least $1-\frac{\delta}{\varepsilon}=1-\Theta(\delta)$ it holds for any distribution $\mathcal{R}$ and for all $i \in [Q]$ that:
\begin{align}
    \label{eq:generalization-predicate}
    \left|\mathop{\mathbb{E}}_{\sigma\sim \mathcal{R}}\left[h_{\vec{q_i}}(\sigma)\right]-\frac{1}{L}\sum\limits_{j=1}^{L} h_{\vec{q_i}}(\sigma_j)\right| \leq 10\varepsilon = \frac{1}{10}
\end{align}

But if $\mathcal{R}$ is the uniform distribution, then $\mathbb{E}_{\sigma\sim \mathcal{R}}\left[h_{\vec{q_i}}(\sigma)\right]$ is simply the probability that $\mathcal{A}$ gives an accurate answer on the \textit{fixed} query sequence $\vec{q_i}$. Since $\mathcal{A}$ is an oblivious decider, \autoref{eq:generalization-predicate} implies that:
\begin{align}
\label{eq:expectation-predicate-bound}
\mathop{\mathbb{E}}_{\sigma\sim\mathcal{R}}\left[h_{\vec{q_i}}(\sigma)\right] \geq \frac{9}{10}
\end{align}
Further, $\frac{1}{L}\sum_{j=1}^{L} h_{\vec{q_i}}(\sigma_j)$ is the fraction of random strings that lead $\mathcal{A}_2$ to be correct. Thus, by \autoref{eq:expectation-predicate-bound}, this fraction is at least
$\left(\frac{9}{10}-\frac{1}{10}\right)L = 0.8L$ for all $i \in [Q]$. 
\end{proof}
We are now ready to prove the main theorem of this section. 
\begin{proof}[Proof of \autoref{thm:robust-decider-thm}]
Let us condition on the event that  \autoref{lem:majority_is_correct} holds, which happens with probability at least $1-\Theta(\delta)$. Then, for all $i \in [Q]$, the fraction of correct answers to query $q_i$ is either at least $0.8$, when $B_S(q_j,r) \neq \emptyset$, or at most $1-0.8 = 0.2$, otherwise. Now we need to account for the error introduced by subsampling, which is done via the following lemma:
\begin{lemma}
\label{lem:subsampling-concentration}
Let $L \in \mathbb{N}$ and suppose that at least an $\alpha$-fraction of $L$ structures are \emph{good}, where
$\alpha \in \{0.8,0.2\}$. Let $j_1,\dots,j_k$ be sampled independently and uniformly from $[L]$ (with replacement),
and define
\[
\bar N \;=\; \frac{1}{k}\sum_{t=1}^k X_t,
\qquad
X_t := \mathbf 1[\text{structure } j_t \text{ is good}].
\]
Then for $k \ge 40 \log(Q/\delta)$,
\[
\Pr\!\left[
\begin{array}{ll}
\bar N \le 0.6 & \text{if } \alpha = 0.8,\\[4pt]
\bar N \ge 0.4 & \text{if } \alpha = 0.2
\end{array}
\right]
\;\le\;
\delta/Q.
\]
\end{lemma}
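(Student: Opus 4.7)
The plan is to view this lemma as essentially a one-line application of Hoeffding's inequality for a bounded i.i.d.\ sum, together with a check that the constant $40$ in front of $\log(Q/\delta)$ is large enough. Because $j_1,\dots,j_k$ are drawn uniformly and independently from $[L]$, the indicators $X_1,\dots,X_k \in \{0,1\}$ are i.i.d.\ Bernoulli with common mean $p := \EE[X_t]$ equal to the true fraction of ``good'' structures in $[L]$. Thus $\bar N$ is simply the empirical mean of $k$ i.i.d.\ $[0,1]$-valued variables with mean $p$.

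First, I would reduce both cases to a symmetric statement about $|\bar N - p|$. When $\alpha = 0.8$, the fraction of good structures satisfies $p \ge 0.8$, so the event $\{\bar N \le 0.6\}$ forces $\bar N - p \le -0.2$. When $\alpha = 0.2$ (interpreting ``good'' as the complementary output, so that $p \le 0.2$), the event $\{\bar N \ge 0.4\}$ forces $\bar N - p \ge 0.2$. Hence in both cases the bad event is contained in $\{|\bar N - p| \ge 0.2\}$.

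Second, I would apply Hoeffding's inequality: for the empirical mean of $k$ i.i.d.\ $[0,1]$-bounded variables,
\[
\Pr\bigl[|\bar N - p| \ge 0.2\bigr] \;\le\; 2\exp\!\bigl(-2k(0.2)^2\bigr) \;=\; 2 e^{-0.08 k}.
\]
Plugging in $k \ge 40 \log(Q/\delta)$ brings the right-hand side well below $\delta/Q$ with ample slack irrespective of whether $\log$ denotes $\log_2$ or $\ln$; the tight requirement is only $k \gtrsim 13\,\ln(Q/\delta)$.

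There is no genuine obstacle here. The only minor care is lining up the statement with its use in the analysis of $\mathcal{A}_{\text{dec}}$: the variable we actually threshold is the fraction answering ``$1$'', so ``good'' should be reinterpreted in each of the two cases so that the true fraction $p$ lies on the correct side of $1/2$. Once this is done, the constants $0.6$ and $0.4$ leave a margin of $0.1$ on each side of the decision threshold, which is precisely what is needed downstream to conclude that the thresholded noisy statistic $\widehat{N}_i$ outputs the correct bit.
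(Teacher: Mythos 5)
Your proof is correct and in essence mirrors the paper's: identify $X_1,\dots,X_k$ as i.i.d.\ Bernoulli, apply a standard concentration inequality to the empirical mean $\bar N$, and check that $k \ge 40\log(Q/\delta)$ drives the tail below $\delta/Q$. The paper invokes the \emph{multiplicative} Chernoff lower-tail bound and handles $\alpha = 0.8$ and $\alpha = 0.2$ as two separate cases (the latter dispatched ``by symmetry''), obtaining the tail $e^{-k/40}$, while you use the two-sided \emph{additive} Hoeffding inequality and collapse both cases into the single event $\{|\bar N - p| \ge 0.2\}$, obtaining $2e^{-0.08k}$. Your route is marginally cleaner: it treats the two cases uniformly, it sidesteps the small (unstated) monotonicity check that the multiplicative bound needs when $p$ strictly exceeds $0.8$, and it leaves more slack in the hypothesis ($k \gtrsim 13\ln(Q/\delta)$ would suffice, rather than $40$). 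You also correctly read through the lemma's loose phrasing: for $\alpha = 0.2$ the stated hypothesis ``at least an $\alpha$-fraction are good'' only makes the conclusion true if interpreted as the \emph{upper} bound $p \le 0.2$, which is the reading both your proof and the paper's implicitly adopt.
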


\begin{proof}
The random variables $X_1,\dots,X_k$ are i.i.d.\ Bernoulli$(\alpha)$, hence
$\mathbb{E}[\bar N] = \alpha$.

\smallskip
\noindent
\textbf{Case 1: $\alpha = 0.8$.}
We apply a multiplicative Chernoff bound. For any $0<\gamma<1$,
\[
\Pr[\bar N \le (1-\gamma)\alpha]
\;\le\;
\exp\!\left(-\frac{\gamma^2 \alpha k}{2}\right).
\]
Setting $(1-\gamma)\alpha = 0.6$ gives $\gamma = 1/4$, and therefore
\[
\Pr[\bar N \le 0.6]
\;\le\;
\exp\!\left(-\frac{(1/4)^2 \cdot 0.8}{2}\, k\right)
\;=\;
\exp(-k/40).
\]

\smallskip
\noindent
\textbf{Case 2: $\alpha = 0.2$.}
By symmetry,
\[
\Pr[\bar N \ge 0.4]
\;\le\;
\exp(-k/40).
\]

\smallskip
\noindent
Choosing $k \ge 40 \log(Q/\delta)$ yields the claimed bounds.
\end{proof}

Finally, by \autoref{lem:bounded max of laplace distribution}, we require that the maximum Laplacian noise not exceed $0.1$ with high probability:
\begin{align}
    \label{eq:laplace-noise-bound}
    \Pr\left[|Z_i| > 0.1\right] = \Pr\left[|Z_i| > \frac{1}{k}\left(\ln (1) + 0.1k\right)\right] \leq e^{-0.1k}
\end{align}
Since our threshold for deciding is $\widehat{N}_i:=N_i+Z_i  \geq 0.5$, we can see that setting $k = \Omega(\log(Q/\delta))$ will make the probability in \autoref{eq:laplace-noise-bound} at most $\frac{\delta}{Q}$, implying, by union bound, that $\mathcal{A}_{\text{dec}}$ outputs the correct answer at every timestep $i \in [Q]$ with high probability.
\end{proof}

\section{Improved Robust ANNS Algorithms with \texorpdfstring{$\forall$}{forall} guarantees}
\label{sec:for-all-main}
In this section, we will discuss another path to adversarial robustness for search problems --providing a \textit{for-all} guarantee. We will focus on the ANN problem for this section, due to its ubiquity and importance, as well as its amenity to the techniques we discuss.



    
\subsection{A \textit{For-all} guarantee in the Hamming cube}
We present the Hamming Distance ANN case first because it is the most natural \textit{for-all} guarantee one can give. This is because the space we are operating over is discrete, and we can easily union-bound over all possible queries and only incur a cost polynomial to the dimension $d$ of the metric space.

\begin{theorem}
\label{thm:hamming-for-all-slow-query}
There exists an adversarially robust algorithm solving the $(c,r)$--ANN problem in the $d$--dimensional Hamming Hypercube that can answer every possible query correctly with probability at least $1-1/n^2$. The space requirements are $\widetilde{O}(d\cdot n^{1+\rho+o(1)})$, and the time required per query is $\widetilde{O}(d^2 \cdot n^\rho)$, where $\rho = 1/c$. 
\end{theorem}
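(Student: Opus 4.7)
My plan is to build a single standard bit-sampling LSH data structure, but with the number of hash tables boosted by a factor of $d$ so that each \emph{fixed} query $q \in \{0,1\}^d$ fails with probability at most $2^{-d}/\poly(n)$. A union bound over all $2^d$ possible queries then gives the for-all guarantee. Adversarial robustness is automatic on the good event, because the query procedure becomes deterministic once preprocessing is fixed: a successfully built structure answers every $q$ correctly, so no adaptive strategy of the adversary can matter.

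Concretely, set $k = \lceil \log_{1/p_2} n \rceil$ with $p_2 = 1 - cr/d$, and $L = \Theta((d + \log n)\, n^{\rho} \log n)$ where $\rho = 1/c$, so that $p_1^{k} \ge n^{-\rho}$ for $p_1 = 1 - r/d$. For a fixed query $q$, two bad events must be controlled. \textbf{Completeness:} if some $p^{\ast} \in B_S(q,r)$, the chance that $p^{\ast}$ collides with $q$ in none of the $L$ tables is at most $(1 - p_1^k)^L \le \exp(-L n^{-\rho}) \le 2^{-d}/n^{3}$. \textbf{Soundness (no overflow):} let $N_q$ be the total number of (point, table) pairs in which a point at distance $> cr$ from $q$ collides with $q$. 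Since collisions across the $nL$ pairs are independent Bernoullis of bias at most $p_2^{k} \le 1/n$, we have $\mathbb{E}[N_q] \le L$, and a Chernoff bound yields $\Pr[N_q > 10L] \le \exp(-\Omega(L)) \le 2^{-d}/n^{3}$. A union bound over all $2^d$ queries in $\{0,1\}^d$ makes both events hold simultaneously for every $q$ with probability at least $1 - 1/n^2$.

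On that good event the standard LSH query routine---compute the $L$ hashes of $q$, scan colliding points under an overflow cap of $O(L)$, and return the first one at distance $\le cr$, else $\perp$---is correct for every $q \in \{0,1\}^d$. The space bound is dominated by $L$ tables storing pointers to the $n$ data points, giving $\widetilde{O}(Ln) = \widetilde{O}(d\, n^{1+\rho})$; the $n^{o(1)}$ slack in the theorem statement absorbs the $Lk$ sampled coordinates plus other polylogarithmic factors. Query time is $O(Lk)$ to hash plus $O(d)$ per verified candidate, and there are at most $O(L)$ candidates on the good event, yielding $\widetilde{O}(d^{2} n^{\rho})$.

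The step I expect to need the most care is the concentration argument for $N_q$: pushing the per-query failure probability below $2^{-d}$ must cost only a linear factor of $d$ in $L$, not something worse. This works precisely because the completeness event hinges on a product-form collision probability that is independent across the $L$ tables, and because soundness is controlled by a Chernoff bound over $nL$ fully independent Bernoullis, so the inflation required to beat $2^{-d}$ is additive in the exponent and translates to exactly an extra factor of $d$ in $L$. The later improvement advertised in the introduction to $\widetilde{O}(dn^{\rho})$ query time presumably replaces the deterministic scan by a subsampling step, but the simple deterministic scan already suffices to establish the theorem as stated.
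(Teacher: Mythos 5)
Your proposal mirrors the paper's argument essentially step for step: bit-sampling LSH with $L$ inflated by a factor $\Theta(d)$ so that the per-query failure probability is of order $2^{-d}/\poly(n)$, a union bound over all $2^d$ queries, $k=\lceil\log_{1/p_2}n\rceil$, completeness via $(1-p_1^k)^L$, and soundness via a Chernoff bound on the overflow count. The final space and query-time bounds also match.

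The soundness step, however, contains a real gap. You claim that the $nL$ collision indicators $\mathbbm{1}[h_i(p)=h_i(q)]$ are ``independent Bernoullis.'' They are not: for a fixed table $i$, two far points $p,p'$ share the same randomness $h_i$, so the indicators are correlated (strongly so if $p$ and $p'$ are Hamming-close). Writing $N_q=\sum_{i=1}^{L}Y_i$ with $Y_i$ the per-table count of colliding far points, the $Y_i$ are genuinely independent across tables and $\mathbb{E}[Y_i]\le 1$, but they can be heavy-tailed: if the far points form a tight cluster of size $m$, then $Y_i\approx m$ with probability roughly $p_2^k$, so $\Pr[N_q>10L]$ is on the order of $L\,p_2^k$, which is only $\poly(n)^{-1}$ and not $2^{-\Theta(d)}$. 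The naive multiplicative Chernoff step therefore does not push the failure probability below $2^{-d}$. To be fair, the paper's own proof invokes a ``simplified Chernoff bound'' on the same non-independent count of colliding far points and inherits the identical lacuna; a careful patch would truncate the per-table scan (so $N_q$ becomes a sum of bounded independent $Y_i$) or maintain $\Theta(d+\log n)$ independent LSH copies and answer from one whose bucket load is below threshold, using per-copy Markov. Either fix should be stated explicitly if the query-time guarantee is meant to hold with the claimed high probability over adaptively chosen queries.
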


\begin{proof}
First, let us recall the standard LSH in the Hamming Hypercube: We are given a point set $S \subseteq \{0,1\}^d$ with $|S| = n$. We receive queries $q \in \{0,1\}^d$. Our Locality Sensitive Hash family $\mathcal{H}$ is defined as follows: Pick some coordinate $i \in [d]$ and hash $x \in \{0,1\}^d$ according to $x_i \in \{0,1\}$. This function $h$ acts as a hyperplane separating the points in the hypercube into two equal halves, depending on the $i$-th coordinate. Sampling $h$ uniformly at random from $\mathcal{H}$ is equivalent to sampling $i \in [d]$ uniformly at random. We can easily see that $\mathcal{H}$ is an $(r,cr,p_1,p_2)$--LSH family, as:

\begin{align*}
\Pr_{h\sim\mathcal{H}}\left[h(p)=h(q)\right] = \frac{d-||p-q||}{d} = \begin{cases}
    \geq 1-\frac{r}{d}:=p_1,&\text{when }||p-q||\leq r\\
    \leq 1-\frac{cr}{d}:=p_2,&\text{when }||p-q||\geq cr
\end{cases}
\end{align*}

We now go through the typical amplification process for LSH families \citep{gionis1999similarity}. Instead of sampling just one coordinate, we sample $k$. And instead of sampling just one hash function, we sample $L$ different ones $h_1,...,h_L \in \mathcal{H}^k$ and require that a close point collides with $q$ at least once. With this scheme, we know that if we fix $q \in \{0,1\}^d$ and $p \in B_S(q,r)$ we have:
\begin{align*}
    \Pr\left[\exists i \in [L]\,:\, h_i(p) = h_i(q)\right] \geq 1-(1-p_1^k)^L
\end{align*}
Furthermore, if $||p-q|| \geq cr$, we must have:
\begin{align*}
    \Pr\left[\exists i\in [L]\,:\, h_i(q) = h_i(p)\right] \leq Lp_2^k
\end{align*}
Now, we want to guarantee that with high probability there doesn't exist any query $q \in \{0,1\}^d$ such that for all points $p \in B_S(q,r)$ we have $h_i(q) \neq h_i(p)$ for all $i \in [L]$. In other words, we want:
\begin{align*}
    \Pr\left[\exists q\in\{0,1\}^d\,:\, \forall p \in B_S(q,r)\,\forall i\in[L]:h_i(p) \neq h_i(q)\right] \leq \frac{1}{n}
\end{align*}
We can use the union bound to get:
\begin{align*}
&\Pr\left[\exists q\in\{0,1\}^d\,:\, \forall p \in B_S(q,r)\,\forall i\in[L]:h_i(p) \neq h_i(q)\right] \\
&\leq \sum\limits_{q \in \{0,1\}^d}\Pr\left[\forall p \in B_S(q,r)\,\forall i\in[L]:h_i(p) \neq h_i(q)\right]
\end{align*}
So it suffices to establish that for fixed $q \in \{0,1\}^d$ we have:
\begin{align*}
\Pr\left[\forall p \in B_S(q,r)\,\forall i\in[L]:h_i(p) \neq h_i(q)\right] \leq \frac{1}{n 2^d}
\end{align*}
We can weaken this statement and union-bound as follows:
\begin{align*}
\Pr\left[\forall p \in B_S(q,r)\,\forall i\in[L]:h_i(p) \neq h_i(q)\right] &\leq \Pr\left[\exists p \in B_S(q,r)\,\not\exists i\in[L]:h_i(p) = h_i(q)\right] \\
&\leq\sum\limits_{p \in B_S(q,r)}\Pr\left[\not\exists i\in[L]:h_i(p) = h_i(q)\right]\\
&\leq |B_S(q,r)|\cdot (1-p_1^k)^L \\
&\leq n(1-p_1^k)^L
\end{align*}
So it suffices to require that:
\begin{align}
\label{eq:L_requirement_hamming}
(1-p_1^k)^L \leq \frac{1}{n^2 2^d}
\end{align}
On the other hand, the expected number of points in $S \setminus B_S(q,cr)$ that we will see in the same buckets as $q$ is:
\begin{align}
    \EE\left[\left|p \in S\setminus B_S(q,cr)\mid \exists i\in[L]\,:\,h_i(p) = h_i(q)\right|\right] &= \sum\limits_{p \in S\setminus B_S(q,cr)} \Pr\left[\exists i \in [L]\mid h_i(p) = h_i(q)\right] \\
    &\leq nLp_2^k
    \label{eq:k_requirement_hamming}
\end{align}
We can now combine \autoref{eq:L_requirement_hamming} and \autoref{eq:k_requirement_hamming} to work out the values of $k$ and $L$. First, we want to get $O(L)$ time in expectation, so we require $p_2^k \leq 1/n$, which gives:
\begin{align*}
    k \geq \log_{1/p_2}(n)
\end{align*}
Now, let $p_1 = p_2^\rho$. Substituting, we resolve the value of $L$ as:
\begin{align*}
    L \geq n^\rho d \log n
\end{align*}
With that in place, we can see that our algorithm takes $O(L)$ time with high probability. Indeed, let $X$ be the number of points in $S \setminus B_S(q,cr)$ that are hashed to some common bucket with $q$. Using a simplified Chernoff bound, we have that:
\begin{align*}
\Pr\left[X \geq 10L\right] \leq 2^{-10L} = \frac{1}{n^{10dn^\rho}} \ll \frac{1}{n^{\Omega(1)}}
\end{align*}
which implies that our runtime per query is $O(L)$ with high probability. As for the value of the constant $\rho$ we have by definition that:
\begin{align*}
    \rho := \frac{\log p_1}{\log p_2} = \frac{\log \left(1-\frac{r}{d}\right)}{\log\left(1-\frac{cr}{d}\right)} \approx \frac{1}{c}
\end{align*}
Overall, evaluating our hash function requires $\widetilde{O}(\log n)$ time, and evaluating distances between points requires $O(d)$ time. We maintain $O(d\cdot n^\rho \log n)$ hash tables, meaning that on a single query we spend $O(d^2\cdot n^\rho \log n)$ time. For pre-processing, apart from storing the entire dataset in $dn$ space, we take $O(d\cdot n^{1+\rho +o(1)})$ space to construct our data structure. 
\end{proof}

\subsubsection{Improving the query runtime via sampling} We can improve the dependency on $d$ for the query runtime by using sampling to find a good bucket. The following theorem encapsulates this finding, reducing the runtime complexity by a factor of $d$:

\begin{theorem}
\label{thm:hamming-for-all-fast-query}
There exists an adversarially robust algorithm solving the $(c,r)$--ANN problem in the $d$--dimensional Hamming Hypercube that can answer all possible queries correctly with probability at least $1-1/n^2$. The space requirements are $\widetilde{O}(d\cdot n^{1+\rho+o(1)})$ and the time required per query is $\widetilde{O}(d \cdot n^\rho)$, where $\rho = 1/c$.
\end{theorem}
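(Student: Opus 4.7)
The plan is to reuse the LSH data structure from \autoref{thm:hamming-for-all-slow-query} essentially verbatim (the same $L = \widetilde{O}(dn^\rho)$ hash tables with the same $k = \lceil \log_{1/p_2} n \rceil$), but at query time to sub-sample only $s = \widetilde{O}(n^\rho)$ of the $L$ tables rather than scanning all of them. The factor-$d$ savings will arise because the bottleneck in \autoref{thm:hamming-for-all-slow-query} is the $O(d)$-time Hamming-distance verification performed for each of the $L$ candidates; reducing the number of verifications by a factor of $d$ while keeping the LSH parameters fixed recovers the claimed bound.

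First I would strengthen the ``for-all'' guarantee of \autoref{thm:hamming-for-all-slow-query} from ``at least one of the $L$ tables witnesses a collision'' to ``many tables witness a collision.'' For any fixed $q \in \{0,1\}^d$ and $p \in B_S(q,r)$, the collision count $M(q,p) := |\{i \in [L] : h_i(q) = h_i(p)\}|$ is $\mathrm{Bin}(L, p_1^k)$, and the choice of $L$ in \autoref{thm:hamming-for-all-slow-query} already yields $\EE[M(q,p)] = Lp_1^k = \Theta(d \log n)$. A Chernoff bound followed by a union bound over the $n\cdot 2^d$ pairs $(q,p)$ shows that, with probability at least $1-1/n^2$ over the setup randomness, $M(q,p) = \Omega(d+\log n)$ for every query $q$ and every near point $p$ simultaneously -- i.e., each near point collides with $q$ in many tables, not merely one.

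Next, at query time I would sample $s = \widetilde{O}(n^\rho)$ tables uniformly at random (with fresh per-query randomness), hash $q$ into each, and inspect the bucket of $q$ in each sampled table, verifying up to $\tau = O(\log n)$ candidate points with the full $O(d)$-time Hamming distance and returning any point in $B_S(q,cr)$. Correctness has two ingredients. First, conditioned on the setup-good event, the probability that none of the $s$ sampled tables witnesses a collision with a fixed $p \in B_S(q,r)$ is at most $(1 - M(q,p)/L)^s \leq e^{-\Omega(s/n^\rho)}$; for $s = \widetilde{O}(n^\rho)$ this is at most $1/(n^2 \cdot 2^d)$ in the low-dimensional regime $d = \polylog n$, and therefore survives a union bound over all $2^d$ possible queries. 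Second, within any bucket $\EE[\#\text{ far points}] \leq np_2^k = 1$, so standard Chernoff plus union-bound arguments imply that, with high probability over setup, every bucket contains at most $\tau = O(\log n)$ points outside $B_S(q,cr)$; hence inspecting $\tau$ candidates per sampled bucket is enough to find a near neighbor whenever one lies there, and if many near points populate the bucket the inspection succeeds even faster.

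The query runtime then decomposes into $O(sk)$ for hashing $q$ into the sampled tables and $O(s\tau d)$ for verifying candidates, both of which resolve to $\widetilde{O}(dn^\rho)$. Preprocessing, space, and the exponent $\rho = 1/c$ are inherited from \autoref{thm:hamming-for-all-slow-query}. The main obstacle will be the joint calibration of $s$, $\tau$, and the hidden constants in $L$ so that all three union bounds -- over pairs $(q,p)$ for the collision-multiplicity event, over queries for the sampling event, and over (query, bucket) pairs for the bucket-size event -- close simultaneously. The amplified guarantee $M(q,p) = \Omega(d+\log n)$ must be exactly strong enough to let per-query sub-sampling shrink the number of inspected tables by a factor of $d$ while paying only $\polylog n$ overhead in the low-dimensional regime the theorem targets.
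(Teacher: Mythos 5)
Your proposal follows the paper's overall strategy --- keep the $L = \widetilde{O}(dn^\rho)$ tables but only probe $\widetilde{O}(n^\rho)$ of them per query --- but makes two technical choices that differ from the paper's. First, you amplify the setup guarantee \emph{per pair}, showing $M(q,p) = \Omega(d+\log n)$ for every near pair $(q,p)$; the paper instead works with the per-query quantity $X_q$, the number of tables whose $q$-bucket contains \emph{some} point of $B_S(q,r)$, and shows $X_q \geq \tfrac{1}{2}d\log n$ simultaneously for all $q$. Since $X_q/L$ is exactly the probability that a uniformly sampled table is useful, the paper's quantity feeds directly into the sub-sampling analysis without a per-pair union bound. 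Second, and more substantively, you push the sub-sampling failure probability down to $1/(n^2 2^d)$ so that it survives a union bound over all $2^d$ queries, and you correctly note this forces $s = \Omega(n^\rho d)$ unless $d = \polylog n$. The paper does not attempt that union bound: it draws fresh per-query randomness and accepts a per-query sub-sampling failure of $O(1/n)$, which is how its runtime avoids the extra $d$ factor; what it actually establishes, as written, is therefore a per-query guarantee conditioned on a good setup rather than the blanket for-all statement your version would give in the $d = \polylog n$ regime.

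The genuine gap in your plan is the per-bucket truncation to $\tau = O(\log n)$ candidates. If a bucket contains a near point together with more than $\tau$ far points, an adversarially ordered bucket lets you skip the near point entirely. Making the truncation safe requires that \emph{every} bucket of \emph{every} query hold at most $\tau$ far points, and a Chernoff bound plus a union bound over the $L \cdot 2^d$ (table, query) pairs forces $\tau = \Omega(d + \log L)$, which is $\Theta(d)$ once $d \gg \log n$ --- again confining the argument to $d = \polylog n$ and inflating the verification cost. The paper avoids truncation: it bounds the \emph{total} number of far points hashed across all $L$ tables by $O(L)$ with high probability and examines every candidate encountered in the sampled buckets, so the expected verification cost is $O(m\cdot d) = \widetilde{O}(dn^\rho)$ with no per-bucket cap. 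If you keep the truncation, state the $d = \polylog n$ restriction explicitly and set $\tau = \Theta(d+\log n)$; if you drop it, follow the paper and examine every candidate encountered.
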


\begin{proof}
From our analysis above, we know that we take $L = n^\rho \cdot d\log n$ different hash functions. Consider some query $q$. We analyze the expected number of buckets that contain some point $p \in B_S(q,r)$. Let $X_q$ be a random variable representing the number of buckets $i \in [L]$ for which some point in $B_S(q,r)$ lies in bucket $i$. Define the following indicator random variable:
\begin{align*}
    \mathbbm{1}_i = \begin{cases}
        1,&\text{if some point $p \in B_S(q,r)$ lies in bucket $i \in [L]$}\\
        0,&\text{otherwise}
    \end{cases}
\end{align*}
By linearity of expectation, we can now write:
\begin{align*}
    \EE[X_q] &= \sum\limits_{i=1}^L \Pr[\mathbbm{1}_i = 1] \\
        &=\sum\limits_{i=1}^L \Pr\left[\bigcup\limits_{p\in B_S(q,r)}\{h_i(p) = h_i(q)\}\right]\\
        &\geq L\cdot p_1^k\\
        &= L\cdot (p_2)^{\rho k}\\
        &\geq \frac{L}{n^\rho} \\
        &= d\log n
\end{align*}

\noindent
By using the Chernoff bound, we can see that with high probability, $X_q$ is close to its expectation: 
\begin{align*}
    \Pr\left[X_q \leq \frac{1}{2}d\log n\right] \leq e^{-\frac{d\log n}{8}} = \frac{1}{n^{d/8}} \ll \frac{1}{n}
\end{align*}

Let us, then, condition on $X_q > \frac{1}{2}d\log n$. On query time, we can simply sample $m = \Theta(n^\rho \log n)$ buckets uniformly at random from $[L]$. We know that with probability at least $\frac{d\log n}{2n^\rho d\log n} = \frac{1}{2n^\rho}$, a single randomly selected bucket contains some point from $B_S(q,r)$. So, for all $m$ of the selections to not contain such a point, the probability is at most:
\begin{align*}
    \left(1-\frac{1}{n^\rho}\right)^{n^\rho \log n} \leq e^{-\log n} = \frac{1}{n}
\end{align*}
So, with probability at least $1-\frac{1}{n}$ we find a bucket containing a good point. Since, with high probability, the number of points in $P \setminus B_S(q,cr)$ in any bucket are $O(L)$, we see that this sampling method improves the query runtime to $O(n^\rho \log n)$. 
\end{proof}

\subsubsection{Utilizing the optimal LSH algorithm}
Our earlier exposition used the original LSH construction for the Hamming Hypercube \citep{indyk1998approximate} that achieves $\rho = 1/c$. We can also use the state-of-the-art approach from \citep{andoni2015optimal} that achieves $\rho = \frac{1}{2c - 1}$ in place of Theorem \ref{thm:hamming-for-all-slow-query}. This slightly improves the exponent on $n$:

\begin{theorem}
\label{thm:hamming-for-all-fast-query-optimal}
There exists an adversarially robust algorithm solving the $(c,r)$--ANN problem in the $d$--dimensional Hamming Hypercube that can answer all possible queries correctly with probability at least $0.99$. The space complexity is $O(d\cdot n^{1+\rho+o(1)})$, and the time required per query is $O(d \cdot n^\rho)$, where $\rho = \frac{1}{2c-1}$. These runtime guarantees hold with high probability.
\end{theorem}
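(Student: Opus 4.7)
The plan is to substitute the optimal Hamming LSH family of \citet{andoni2015optimal} (which achieves $\rho = 1/(2c-1)$) into the framework already developed in the proofs of \autoref{thm:hamming-for-all-slow-query} and \autoref{thm:hamming-for-all-fast-query}, and to argue that the two optimizations compose cleanly. The key observation is that both the for-all union bound and the sampling acceleration are entirely black-box in the underlying LSH: they only use the sensitivity parameters $p_1, p_2$ and the relation $\rho = \log(1/p_1)/\log(1/p_2)$.

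First, I would instantiate $L$ compound hash functions $h_1,\dots,h_L$, each a concatenation of $k$ atoms from the optimal family, treated as a black-box $(r,cr,p_1,p_2)$--LSH. Because the union-bound calculation in the proof of \autoref{thm:hamming-for-all-slow-query} uses nothing beyond $(r,cr,p_1,p_2)$-sensitivity, the same derivation applies verbatim: requiring $(1-p_1^k)^L \le 1/(n^2 2^d)$ together with $n L p_2^k = O(L)$ (i.e.\ $k \ge \log_{1/p_2} n$) yields $L = \widetilde{O}(d \cdot n^\rho)$, and then a union bound over all $q \in \{0,1\}^d$ gives a for-all failure probability of at most $1/n$, well within the claimed $0.99$ success bound. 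Substituting $\rho = 1/(2c-1)$ produces the stated space bound $O(d \cdot n^{1+\rho+o(1)})$, where the $o(1)$ absorbs the subpolynomial overhead inherent to the optimal construction.

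Next, to achieve query time $\widetilde{O}(d n^\rho)$ rather than $\widetilde{O}(d^2 n^\rho)$, I would replay the sampling argument of \autoref{thm:hamming-for-all-fast-query}. For any fixed query $q$ with $B_S(q,r) \ne \emptyset$, the expected number of tables in which some $r$-near neighbor collides with $q$ is at least $L p_1^k \ge L / n^\rho = \widetilde{\Omega}(d)$, and a Chernoff bound concentrates this count within a constant factor with failure probability far below $1/n$. Uniformly sampling $m = \Theta(n^\rho \log n)$ tables and exhaustively examining only those buckets therefore hits a good bucket with probability $\ge 1 - 1/n$, while the choice of $k$ ensures the expected (and, by Chernoff, actual) number of far points inspected across the sampled tables is $O(m)$. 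Combined with $O(d)$-time distance verification, this yields the claimed $\widetilde{O}(d \cdot n^\rho)$ query runtime w.h.p.

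The main obstacle I anticipate is not the algorithmic framework itself, which is entirely modular in the LSH, but the bookkeeping for the $n^{o(1)}$ overhead intrinsic to the optimal construction: unlike the coordinate-projection family used in \autoref{thm:hamming-for-all-slow-query}, the optimal Hamming LSH has superconstant description size and hash-evaluation cost, so one must verify that these factors are absorbed into the $o(1)$ in the space exponent and into the polylogarithmic slack hidden by the $\widetilde{O}$ in the query time. Once this is checked, the union bound over $2^d$ queries and the sampling-based speedup carry over unchanged.
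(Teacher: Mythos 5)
Your proposal takes essentially the same route as the paper's own proof: both treat \autoref{thm:hamming-for-all-fast-query-optimal} as a direct substitution of the optimal Hamming LSH into the union-bound-plus-sampling framework of \autoref{thm:hamming-for-all-slow-query} and \autoref{thm:hamming-for-all-fast-query}, with only the exponent $\rho$ changing. The paper states this even more tersely (boost the per-query success probability to $1 - 1/(100 \cdot 2^d)$, union-bound over the cube, and note that the sampling argument is unchanged), whereas you re-derive the $(1-p_1^k)^L$ condition and the table-sampling Chernoff bound explicitly; the conclusion is identical. Your closing caveat about the $n^{o(1)}$ overhead from the optimal construction's hash-evaluation cost is a reasonable and correct thing to flag, and matches the $o(1)$ already present in the theorem statement's space bound; neither you nor the paper digs into the fact that the $\rho = \tfrac{1}{2c-1}$ construction of \citet{andoni2015optimal} is data-dependent rather than a simple $(r,cr,p_1,p_2)$ family, but since the paper treats it as a black box in exactly the same way, your proposal is aligned with the paper's level of detail.
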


The analysis is identical, so we will not repeat it again: Since the algorithm succeeds with constant probability, and we want it to succeed on all $2^d$ possible queries, we boost its success probability to $1-\frac{1}{100\cdot 2^d}$. This way, after the union bound, any query succeeds with probability at least $0.99$. Furthermore, the analysis of the sampling algorithm for improving the query runtime in \autoref{thm:hamming-for-all-fast-query} also remains the same. All that changes between using the standard Hamming norm LSH as opposed to the optimal one is the ratio $\rho := \frac{\log (1/p_1)}{\log (1/p_2)}$.

\subsection{Discretization of continuous spaces through metric coverings}
\label{sec:for-all-appendix}
The \textit{for-all} algorithm we presented as \autoref{thm:hamming-for-all-fast-query} cannot be applied outside of discrete spaces, however, because the key to our analysis was the union bound over all the possible queries. 

To simulate a similar argument for solving ANN in continuous, $\ell_p$ spaces, we can consider a strategy of discretizing the space. We place special ``marker'' points and guarantee that some version of the ANN problem is solvable around them. Then, when a query comes in, we find its corresponding marker point, and solve the ANN problem for it. We show that the answer we get is valid for the original query as well, so long as the ``neighborhood'' around the marker points is small enough. A similar strategy and covering construction appeared in \citep{cherapanamjeri2024terminal}, although they did not make algorithmic use of the ability to project any query point to the covering set. Instead, their algorithm deems it sufficient to be successful on every point on just the covering set.

\subsubsection{Metric coverings in continuous spaces}
\label{sec:lp_covering}
To initiate our investigation, we need the definition of a \textit{metric covering}:
\vspace{1mm}
\begin{definition}
Consider a metric space $\mathcal{M} = (\mathbb{R}^d, ||\cdot||_p)$ with metric $\mu$. Let $U \subset \mathbb{R}^d$ be a bounded subset. A set $\widehat{S} \subseteq \mathbb{R}^d$ is called an \textbf{$\Delta$-covering} of $U$ if for all $q \in U$ there exists some $\widehat{s} \in \widehat{S}$ such that
$$
||q-\widehat{s}||_p \leq \Delta
$$
\end{definition}

\noindent 
Suppose that $U$ is a bounded subset of $\mathbb{R}^d$. We can construct the following the following $\Delta$-covering of $U$: Let $C := \sup\limits_{x \in U}||x||_\infty$ and suppose $\{u_i\}_{i=1}^d$ is an orthonormal basis spanning $U$. We know that $||x||_\infty \leq C$ for all $x \in U$, so let us define:
\begin{align*}
    \widehat{S} &= \sum\limits_{i=1}^d \widehat{\alpha}_i u_i,\quad\text{where}\\
    &
    \widehat{\alpha}_i \in \{-C,-C+\varepsilon,...,C-\varepsilon,C\}
\end{align*}
for some choice of $\varepsilon$ that we will decide later. This is a standard construction for $\ell_2$ that we now extend to $\ell_p$ \citep{shalev2014understanding}. As defined, we have:
\begin{align*}
    \left|\widehat{S}\right| = \left(\frac{2C}{\varepsilon}\right)^d
\end{align*}

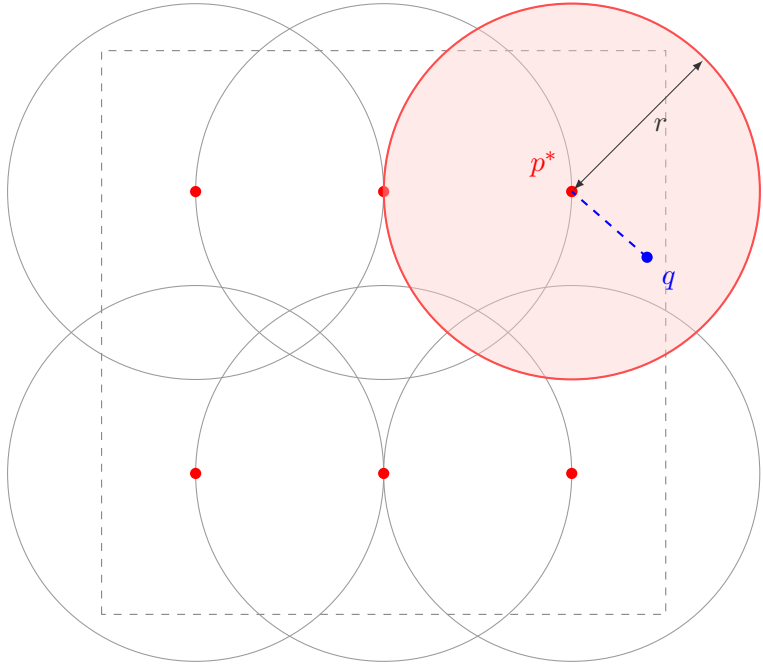
\begin{figure}[ht]
\centering
\begin{tikzpicture}[
    scale=1.25, 
    dot/.style={circle, fill, inner sep=1.5pt}
]
    \colorlet{centercolor}{red}
    \colorlet{querycolor}{blue}
    \colorlet{covercolor}{black!40}      
    \colorlet{activecovercolor}{red!70}  
    \colorlet{activefillcolor}{red!20}   

    \def\bound{6} 
    \def\nx{3}    
    \def\ny{2}    
    
    \pgfmathsetmacro{\dx}{\bound/\nx}
    \pgfmathsetmacro{\dy}{\bound/\ny}
    \pgfmathsetmacro{\radius}{\dx} 

    \draw[gray, dashed] (0,0) rectangle (\bound, \bound);

    \foreach \i in {1,...,\nx} {
        \foreach \j in {1,...,\ny} {
            \pgfmathsetmacro{\x}{\i * \dx - \dx/2}
            \pgfmathsetmacro{\y}{\j * \dy - \dy/2}
            \draw[covercolor, thin] (\x, \y) circle (\radius);
        }
    }
    
    \foreach \i in {1,...,\nx} {
        \foreach \j in {1,...,\ny} {
            \pgfmathsetmacro{\x}{\i * \dx - \dx/2}
            \pgfmathsetmacro{\y}{\j * \dy - \dy/2}
            \node[dot, color=centercolor] at (\x, \y) {};
        }
    }

    \coordinate (q) at (5.8, 3.8);     
    \coordinate (pstar) at (5, 4.5);   

    \fill[activefillcolor, fill opacity=0.4] (pstar) circle (\radius);
    \draw[activecovercolor, thick] (pstar) circle (\radius);
    
    \node[dot, color=querycolor, label={[querycolor]below right:$q$}] at (q) {};
    \node[dot, color=centercolor, label={[centercolor]above left:$p^*$}] at (pstar) {};
    
    \draw[querycolor, dashed, thick] (q) -- (pstar);
    
    \draw[<->, >=latex, black!80, shorten <=1pt, shorten >=1pt] (pstar) -- ++(45:\radius) node[midway, right=2pt] {$r$};

\end{tikzpicture}
\caption{An illustration of an $r$-covering.}
\label{fig:lp_covering_construction_final_scaled}
\end{figure}

Now, fix some $q \in U$. We can write:
\begin{align*}
q = \sum\limits_{i=1}^d \alpha_i u_i
\end{align*}
For all $i \in [d]$, let $\widehat{\alpha}_i$ be such that $\alpha_i \in \widehat{\alpha}_i \pm \varepsilon$. Let $\widehat{s} := \sum\limits_{i=1}^d \widehat{\alpha}_i u_i$. Now we have that:
\begin{align*}
||q-\widehat{s}||_p^p = \left|\left|\sum\limits_{i=1}^d (\alpha_i-\widehat{\alpha}_i)u_i\right|\right|_p^p = \sum\limits_{i=1}^d |\alpha_i - \widehat{\alpha}_i|^p \leq d\varepsilon^p
\end{align*}
Now, let us set:
\begin{align*}
    \varepsilon =\frac{\Delta}{d^{1/p}} \implies ||q-\widehat{s}||_p \leq \Delta
\end{align*}
Our construction thus has size:
\begin{align*}
    |\widehat{S}| = \left(\frac{2Cd^{1/p}}{\Delta}\right)^d
\end{align*}

\subsubsection{The robust ANN algorithm}
With this construction in mind, our algorithm for robust $(c,r)$--ANN in $\ell_p$ space follows as Algorithm \ref{alg:lp_discretization}. The algorithm remains agnostic to the specific LSH data structure that could be used to solve ANN in $\ell_p$ metric spaces obliviously \citep{charikar2002similarity, datar2004locality}, but assumes that the success probability over a set of queries in that data structure can be boosted by increasing the number of hash functions taken. This was the case for the Hamming norm as well.

\begin{algorithm}[ht]
\caption{Robust $\ell_p$ ANN through discretization}
\label{alg:lp_discretization}
\begin{algorithmic}[1]
\State \textit{Parameters: } Max-norm $C$, runtime/accuracy tradeoff $\Delta > 0$, LSH parameters $c,r > 0$.
\vspace{1mm}
\State Receive point dataset $S \subset U$ with $|S| = n$ from the adversary.
\State Let $\widehat{S}$ be a $\Delta$-covering of $U$ as constructed in Section \ref{sec:lp_covering}, and let $c'\gets \frac{cr-\Delta}{r+\Delta}$.
\State Initialize an LSH data structure $\mathcal{D}$ for solving $(c',r+\Delta)$--ANN that answers all queries in $\widehat{S}$ correctly with high probability.
\vspace{1mm}
\While{Adversary provides queries}
    \State Receive query $q \in U$ from the adversary.\
    \State Find $\widehat{s} \in \widehat{S}$ such that $||q-\widehat{s}||_p \leq \Delta$.
    \State Query $\mathcal{D}$ on $\widehat{s}$ and output whatever it outputs. 
\EndWhile

\end{algorithmic}
\end{algorithm}

\begin{theorem}
\label{thm:discretization-lp-C}
There exists an adversarially robust algorithm solving the $(c,r)$--ANN problem in the $(\mathbb{R}^d,\ell_p)$ metric space that can answer an unbounded number of adversarial queries. Assumming that the input dataset and the queries are all elements of $U = \{x \in \mathbb{R}^d \mid ||x||_p \leq C\}$ for some $C > 0$, the pre-processing  space is $\widetilde{O}(nT)$ and the time per query is $\widetilde{O}(T)$, where:
\begin{align}
T = O\left[d\cdot n^{\rho'}\log\left(\frac{Cd^{1/p}}{cr}\right)\right]
\end{align}
where:
\begin{align*}
\rho' = \frac{(10 + c)^2}{161c^2 - 20c - 100}
\end{align*}
\end{theorem}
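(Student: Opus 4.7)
The plan is to reduce the continuous problem over the infinite set $U$ to a finite ``for-all'' requirement on the covering $\widehat{S}$, and then instantiate $\mathcal{D}$ as a boosted $\ell_p$-LSH whose per-query failure probability is driven below $1/|\widehat{S}|$. The key correctness claim is that the triangle inequality, together with the particular choice $c' = (cr-\Delta)/(r+\Delta)$, exactly absorbs the discretization error $\Delta$. Concretely, for any query $q \in U$ with projected covering point $\widehat{s}$ satisfying $\|q - \widehat{s}\|_p \leq \Delta$, and any $p^* \in B_S(q, r)$, the inequality $\|\widehat{s}-p^*\|_p \leq r+\Delta$ implies $p^* \in B_S(\widehat{s}, r+\Delta)$. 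Thus $\mathcal{D}$, called on $\widehat{s}$, returns some $p'$ with $\|\widehat{s}-p'\|_p \leq c'(r+\Delta) = cr - \Delta$, giving
\begin{align*}
\|q - p'\|_p \leq \|q-\widehat{s}\|_p + \|\widehat{s}-p'\|_p \leq \Delta + (cr - \Delta) = cr,
\end{align*}
so $p' \in B_S(q, cr)$, as required; the empty case follows by a symmetric argument.

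Next I would instantiate $\mathcal{D}$ using the optimal $\ell_p$-LSH scheme with exponent $\rho_{\mathrm{LSH}}(c') = 1/(2(c')^2 - 1)$ and boost the number of hash tables by a factor of $\log |\widehat{S}| = \Theta(d \log(Cd^{1/p}/\Delta))$. A union bound over the $|\widehat{S}|$ covering points then ensures that, with probability at least $0.99$, $\mathcal{D}$ answers correctly on every $\widehat{s} \in \widehat{S}$, and by the reduction above on every $q \in U$. Since correctness is independent of the adversary's query sequence, adversarial robustness is immediate. The space and time accounting follows from multiplying the standard LSH bounds by the boost factor.

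To recover the specific exponent in the statement, I would set $\Delta = cr/10$. This yields $c' = 9c/(10+c)$, and a direct algebraic substitution gives
\begin{align*}
2(c')^2 - 1 = \frac{162c^2 - (10+c)^2}{(10+c)^2} = \frac{161c^2 - 20c - 100}{(10+c)^2},
\end{align*}
so $\rho_{\mathrm{LSH}}(c') = (10+c)^2/(161c^2 - 20c - 100) = \rho'$. The logarithm in $|\widehat{S}|$ becomes $\Theta(d \log(Cd^{1/p}/(cr)))$, matching the statement's $T$.

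The main obstacle is the parameter balance between the granularity $\Delta$ and the boost cost: smaller $\Delta$ makes $c'$ closer to $c$ and reduces $\rho_{\mathrm{LSH}}$, but inflates $|\widehat{S}|$ and the boost factor. The choice $\Delta = cr/10$ is tuned to yield the clean closed-form $\rho'$ rather than to optimize constants. A secondary subtlety is the accounting of all factors of $d$, since a naive implementation may pay an extra $d$ for per-point distance computations; this must either be absorbed into $\widetilde{O}$ or removed by a sampling argument analogous to the one used in \autoref{thm:hamming-for-all-fast-query}.
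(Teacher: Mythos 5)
Your proposal is correct and follows essentially the same route as the paper's proof: discretize the ambient domain $U$ by a $\Delta$-covering, query the LSH on the nearest net point $\widehat{s}$, transfer correctness back to $q$ via two applications of the triangle inequality using $c' = (cr-\Delta)/(r+\Delta)$, boost the per-query failure probability below $1/|\widehat{S}|$ and union bound over the covering, then set $\Delta = cr/10$ and simplify $1/(2(c')^2-1)$ to the stated $\rho'$. The only additions in your write-up are the explicit observation that for-all correctness over a fixed finite set is automatically adaptively robust, and the flag about the $d$-vs.-$d^2$ accounting, both of which are consistent with (and if anything slightly more careful than) the paper's treatment.
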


\begin{proof}
First, to argue for correctness, let $q$ be any query. Suppose there exists some point $x \in S$ with $||x-q||_p \leq r$. Then, by triangle inequality it holds that:
\begin{align*}
    ||x-\widehat{s}||_p \leq ||x-q||_p + ||\widehat{s}-q||_p \leq \Delta + r
\end{align*}
Thus, with high probability, $\mathcal{D}$ will find some point $x' \in S$ with $||x'-\widehat{s}||_p \leq cr-\Delta$. For that point, we have that:
\begin{align*}
||x'-q||_p \leq ||x'-\widehat{s}||_p + ||\widehat{s}-q||_p \leq cr -\Delta + \Delta = cr
\end{align*}
Therefore, Algorithm \ref{alg:lp_discretization} will output a correct answer. If there doesn't exist such a point $x$, it is valid for our algorithm to output $\perp$, so are done. 

For the runtime, recall that $|\widehat{S}| \leq O(2Cd^{1/p}/\Delta)^d$. Hence, in order to guarantee success for all queries in $\widehat{S}$, a similar analysis as to the one for the Hamming Hypercube shows that $\mathcal{D}$ takes up:
$$
O\left[d\cdot n^{1+\frac{1}{2c'^2 - 1}}\log\left(\frac{2Cd^{1/p}}{\Delta}\right)\right]
$$
space for pre-processing and 
$$
O\left[n^{\frac{1}{2c'^2 - 1}}\log\left(\frac{2Cd^{1/p}}{\Delta}\right)\right]
$$
time per query processed, where
$$
c' := \frac{cr-\Delta}{r+\Delta}
$$
Note that we use the optimal LSH algorithm for $\ell_p$ spaces, which guarantees $\rho = \frac{1}{2c^2 - 1}$. Our only constraint is that we must have $\Delta < cr$. If we set $\Delta = \frac{c}{10}r$, we get a per-query runtime of:
\begin{align*}
O\left[n^{1+\frac{1}{2c'^2 - 1}}\log\left(\frac{20Cd^{1/p}}{cr}\right)\right],\quad\text{where}\, c'=\frac{9c}{10+c}
\end{align*}
\end{proof}

\subsubsection{Removing the dependency on the scale}
Our algorithm from Theorem \ref{thm:discretization-lp-C} crucially depends on $\log C$, where $C$ is a bounding box for the query and input point space in the $\ell_p$ norm. We can remove the dependency on $C$ by designing our covering to be data dependent, instead paying an additional logarithmic factor.

\begin{figure}[h]
    \centering
    \begin{tikzpicture}
        
        \coordinate (A) at (0, 0);
        \coordinate (B) at (6, 1);
        \coordinate (C) at (3, 5);
        
        \newcommand{\drawCenteredGrid}[1]{
            \foreach \x in {-1.5, -0.5, 0.5} {
                \foreach \y in {-1.5, -0.5, 0.5} {
                    \draw[red] ([shift={(\x,\y)}]#1) rectangle ++(1,1);
                }
            }
            \fill[black] (#1) circle (3pt);
        }
        
        \drawCenteredGrid{A};
        \drawCenteredGrid{B};
        \drawCenteredGrid{C};
        
        \node[below left] at (A) {A};
        \node[below left] at (B) {B};
        \node[above right] at (C) {C};
        
    \end{tikzpicture}
    \caption{Data-Dependent Discretization of the input query space.}
    \label{fig:enter-label}
\end{figure}

Our new covering $\widehat{S'}$ will be a collection of $n$ $\Delta$-coverings, as constructed in Algorithm \ref{alg:lp_discretization}, each one discretizing the $r$-ball around a point $p \in S$. The number of points in this new covering is:
\begin{align}
    |\widehat{S'}| \leq O\left[n\cdot\left(\frac{r\cdot d^{1/p}}{cr}\right)^d\right] = O\left[n\cdot\left(\frac{d^{1/p}}{c}\right)^d\right]
\end{align}

Note that the size of this covering improves upon the $(nd)^d$ size of the covering given in \citep{cherapanamjeri2024terminal}, which results in a slightly better runtime. This new covering notably does not cover every possible query. However, it covers exactly the queries we care about. This improved covering leads to the following \textit{for-all} guarantee for robust ANN:

\begin{theorem}
\label{thm:discretization-lp-no-C}
There exists an adversarially robust algorithm solving the $(c,r)$--ANN problem in the $(\mathbb{R}^d,\ell_p)$ metric space that can answer an unbounded number of adversarial queries. The pre-processing time / space is $\widetilde{O}(nT)$ and the time per query is $\widetilde{O}(T/d)$, where:
\begin{align}
T = O\left[d\cdot n^{\rho'}\left(d\log d + \log n\right)\right]
\end{align}
where:
\begin{align*}
\rho' = \frac{1}{2c'^2 - 1} = \frac{(10 + c)^2}{161c^2 - 20c - 100}
\end{align*}
\end{theorem}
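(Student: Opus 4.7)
The proof follows the same template as \autoref{thm:discretization-lp-C}: build a covering of the relevant query region, instantiate an LSH that simultaneously answers correctly at every covering point, and transfer correctness to an arbitrary query by the triangle inequality. The only change is swapping out the global grid for a data-dependent one, which removes the $\log C$ factor at the price of a $d\log d$ factor and an extra $\log n$ summand.

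First, I would construct the covering $\widehat{S'}$ as the union of $n$ local coverings, one per data point. For each $p\in S$ I apply the orthonormal-basis construction of \autoref{sec:lp_covering} to the ball $B(p,r)$, using bounding parameter $C=r$ and resolution $\varepsilon=\Delta/d^{1/p}$ with $\Delta=cr/10$; this produces $O((d^{1/p}/c)^d)$ grid points per ball, so $|\widehat{S'}|=O(n(d^{1/p}/c)^d)$ and $\log|\widehat{S'}|=O(\log n+d\log d)$. I then instantiate an LSH structure $\mathcal{D}$ for the $(c',r+\Delta)$-ANN problem on $S$ with $c'=(cr-\Delta)/(r+\Delta)=9c/(10+c)$ (so $\rho'=1/(2c'^2-1)$), boosted by setting the number of hash tables to $L=\Theta(n^{\rho'}\log|\widehat{S'}|)$ so that the per-query failure probability is at most $1/(100|\widehat{S'}|)$. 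A union bound over $\widehat{S'}$ guarantees that, with probability $\geq 0.99$, $\mathcal{D}$ correctly answers $(c',r+\Delta)$-ANN on every $\widehat{s}\in\widehat{S'}$ simultaneously.

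For correctness on an arbitrary query $q$: if some $p^\ast\in S$ satisfies $\|p^\ast-q\|_p\le r$, then $q\in B(p^\ast,r)$ is covered by the local covering around $p^\ast$, so there exists $\widehat{s}\in\widehat{S'}$ with $\|q-\widehat{s}\|_p\le\Delta$. The triangle inequality gives $\|p^\ast-\widehat{s}\|_p\le r+\Delta$, so the for-all guarantee returns $x'\in S$ with $\|x'-\widehat{s}\|_p\le c'(r+\Delta)=cr-\Delta$, hence $\|x'-q\|_p\le cr$. If no $p^\ast$ is $r$-close, returning $\perp$ is always valid. For the resource bounds, space is dominated by storing the $L$ hash tables, giving $\widetilde{O}(nLd)=\widetilde{O}(nT)$. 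The naive query cost is $\widetilde{O}(dL)=\widetilde{O}(T)$; invoking the sampling trick of \autoref{thm:hamming-for-all-fast-query} (sample $\widetilde{O}(n^{\rho'})$ bucket indices uniformly rather than scanning all $L$ of them) saves the factor of $d$ coming from the boosting and yields $\widetilde{O}(T/d)$.

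\textbf{Main obstacle.} The delicate step is the query-time reduction to a covering point: unlike the global grid of \autoref{thm:discretization-lp-C}, the set $\widehat{S'}$ is a data-dependent union of local grids, so $\widehat{s}$ cannot be obtained by simple coordinate rounding. I would resolve this by running $\mathcal{D}$ directly on $q$ and arguing that this is equivalent (up to absorbed constants) to running it on $\widehat{s}$: since the base LSH family is $(r,cr)$-sensitive and $\Delta\ll r$, the base hash $h(q)$ agrees with $h(\widehat{s})$ with probability very close to $1$, and the amplification to $L=\widetilde{\Theta}(n^{\rho'}\log|\widehat{S'}|)$ tables absorbs the resulting constant-factor degradation in $p_1$ into the union bound. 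Justifying this ``query displacement by $\Delta$'' step carefully (and verifying that the sampling trick still works once the adversary controls $q$ but not $\widehat{s}$) is the main technical hurdle separating this theorem from a direct instantiation of \autoref{thm:discretization-lp-C}.
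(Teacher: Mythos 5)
Your covering construction, triangle-inequality correctness argument, parameter choices ($\Delta = cr/10$, $c' = 9c/(10+c)$, $\rho' = (10+c)^2/(161c^2-20c-100)$), and resource accounting all match the paper's proof. You have also correctly isolated a subtlety the paper's own proof passes over: the paper reuses the query procedure of \autoref{alg:lp_discretization}, which queries $\mathcal{D}$ on a covering point $\widehat{s'}$ with $\|q - \widehat{s'}\|_p \leq \Delta$, but for the data-dependent covering $\widehat{S'}$ (a union of $n$ local grids, each centered at a different $p \in S$) there is no simple coordinate-rounding rule to locate $\widehat{s'}$ from $q$; locating the right local grid requires knowing some $p$ with $q\in B(p,r)$, which is essentially the problem being solved. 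The paper's case analysis argues about the \emph{existence} of $\widehat{s'}$ in the analysis but never specifies how the algorithm obtains it.

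Your proposed resolution, however, does not close this gap. You suggest running $\mathcal{D}$ directly on $q$ and arguing that $h(q)$ agrees with $h(\widehat{s})$ with probability close to $1$, absorbing the degradation into the union bound. The trouble is that the for-all guarantee is established by a union bound \emph{at setup time} over the (finitely many) covering points $\widehat{s}\in\widehat{S'}$ and nearby data points; it says nothing about arbitrary continuous $q$. Once the hash tables are fixed, the event $\{h_i(q)=h_i(\widehat{s})\}$ is deterministic, not a fresh per-query coin flip whose small failure probability can be absorbed into $L$. An adaptive adversary who observes the output stream can locate bucket boundaries and choose $q$ to lie on one, so that $h_i(q)\neq h_i(\widehat{s})$ exactly for the tables that would have retrieved the near point $p^\ast$. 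Strengthening the setup-time union bound to cover all $q$ in a $\Delta$-neighborhood of every $\widehat{s}$ is precisely the uncountable union bound the covering was introduced to avoid, so this route circles back to the original difficulty. A complete proof must either give an efficient procedure to find a valid $\widehat{s'}$ for a given $q$, or reformulate the for-all event so it is stated uniformly over continuous query neighborhoods rather than only over $\widehat{S'}$.
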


\begin{proof}
We distinguish between two cases:

\begin{enumerate}
    \item If a query $q$ is not included in any $B_S(p,r)$ for any $p \in S$, then the answer can safely be $\perp$ because $B_S(q,r) = \emptyset$ necessarily. Thus, we can just run the default LSH algorithm and simply output whatever it outputs.
    \item Otherwise, a query $q$ can be included in some $B_S(p,r)$ for some $p \in S$. Then, suppose $\widehat{s'} \in \widehat{S'}$ is a point in our covering such that $||q-\widehat{s'}||_p \leq \Delta$. Then:
    \begin{align}
        ||p-\widehat{s'}||_p \leq ||p-q||_p + ||\widehat{s'}-q||_p \leq r + \Delta
    \end{align}
    Thus, as we argued before, with high probability $\cD$ finds some point $x \in S$ with $||x - \widehat{s'}||_p \leq cr - \Delta$, and for that point we have:
    \begin{align}
    ||x-q||_p \leq ||x-\widehat{s'}||_p + ||\widehat{s'} - q||_p \leq cr-\Delta + \Delta = cr
    \end{align}
    which means our algorithm will output a correct answer. 
\end{enumerate}
As before, our algorithm's space and runtime guarantees scale with $\log |\widehat{S'}|$.
\end{proof}

\end{document}